\documentclass[regno]{amsart}
\usepackage[british]{babel}
\usepackage[OT2,T2A,T1]{fontenc}
\usepackage[utf8]{inputenc}		
\linespread{1.1}				
\usepackage{eulervm}	
\usepackage[scr=dutchcal, bb=boondox]{mathalfa}

\usepackage{titlesec}
\titleformat{\section}
	{\centering\Large\scshape\bfseries}{\thesection.}{1em}{} 
\titleformat{\subsection}
 	{\normalfont\bfseries\scshape}{\thesubsection.}{.5em}{}
\titleformat*{\paragraph}{\bfseries}

\usepackage{titletoc}

    \titlecontents{section}
    [0em] %
    {\medskip}
    {\thecontentslabel\quad}
    {}
    {\hfill\contentspage}

\setcounter{tocdepth}{1}


    \newenvironment{acknowledgements}{%
  
  \begin{abstract}
}{%
  \end{abstract}
}

\usepackage{amsmath,amsthm,amssymb,dsfont,extarrows,amscd,amsfonts,mathtools,bm,physics}
\usepackage[v1]{subfiles}
\usepackage{hyperref}
\usepackage{mathrsfs}
\usepackage[mathscr]{euscript}
\usepackage{tensor}
\usepackage[most]{tcolorbox}
\usepackage{stmaryrd}
\usepackage[capitalize]{cleveref}
\usepackage{MnSymbol}
\usepackage{thmtools}
\usepackage{tikz-cd}
\usepackage{todonotes}
\usepackage{geometry}
 \geometry{
 a4paper,
 left=30mm,
 right=30mm,
 top=30mm,
 }
\setlength{\parskip}{0.2em}


\usepackage[alphabetic, initials]{amsrefs}

\DeclareFontFamily{U}{MnSymbolC}{}
\DeclareSymbolFont{MnSyC}{U}{MnSymbolC}{m}{n}
\DeclareFontShape{U}{MnSymbolC}{m}{n}{
	<-6>  MnSymbolC5
	<6-7>  MnSymbolC6
	<7-8>  MnSymbolC7
	<8-9>  MnSymbolC8
	<9-10> MnSymbolC9
	<10-12> MnSymbolC10
	<12->   MnSymbolC12}{}
\DeclareMathSymbol{\intprod}{\mathbin}{MnSyC}{'270}

\theoremstyle{plain} 
\newtheorem{thm}{Theorem} [section]
\newtheorem{cor}{Corollary}[section]
\newtheorem{lem}{Lemma} [section]
\newtheorem{prop}{Proposition}[section]

\theoremstyle{definition}
\newtheorem{defn}{Definition}[section]
\newtheorem{ex}{Example}[section]
\newtheorem{conj}{Conjecture}[section]

\theoremstyle{remark} 
\newtheorem{oss}{Remark} [section]

\numberwithin{equation}{section}

\newcommand{\CC}{\mathbb{C}}
\newcommand{\ZZ}{\mathbb{Z}}

\DeclareMathOperator{\X}{\mathfrak{X}}
\DeclareMathOperator{\Lie}{\mathscr{L}}
\DeclareMathOperator{\id}{id}
\DeclareMathOperator{\pr}{pr}

\newcommand{\Hol}{\mathscr{O}}
\newcommand{\RS}{\mathbb{P}^{1}}
\newcommand{\Hw}{\mathscr{H}}
\newcommand{\UC}{\mathscr{C}}
\newcommand{\Fr}{\mathscr{F}}

\newcommand{\Frst}{\overset{\star}{\Fr}}
\newcommand{\Li}{\mathrm{Li}}
\newcommand{\ELi}{\Lambda \iota}
\newcommand{\EE}{\mathscr{E}}
\newcommand{\End}{\mathrm{End}}

\newcommand{\inp}[2]{\langle\, \bm{#1}\,,\, \bm{#2} \,\rangle}

\newcommand{\extprod}[2]{#1\,\,\widetilde{\bullet}\,\,#2}
\newcommand{\extdprod}[2]{#1\,\,\widetilde{\ast}\,\,#2}

\title{Dubrovin duality for open Hurwitz flat F-manifolds}
\author[A.~Proserpio]{Alessandro Proserpio}
\address[A.~Proserpio]{
School of Mathematics and Statistics,
University of Glasgow, Glasgow, G12 8QQ, United Kingdom.}
\email{a.proserpio.1 [at] research.gla.ac.uk}

\author[I. A. B.~Strachan]{Ian A. B. Strachan}
\address[I. A. B.~Strachan]{
School of Mathematics and Statistics,
University of Glasgow, Glasgow, G12 8QQ, United Kingdom.}
\email{ian.strachan [at] glasgow.ac.uk}

\begin{document}
\renewcommand{\hbar}{\hslash}
\begin{abstract}
 We prove that the Dubrovin dual of a Hurwitz Frobenius manifold extends naturally to an F-manifold with compatible flat connection on the universal curve, in the sense of the open WDVV equations. A similar result is proven for the Frobenius manifold itself in \cite{dealmeida2025openhurwitzflatf}. This equips the universal curve with two F-manifolds with compatible flat structure, and we study their duality. We show that they combine into a bi-flat F-manifold. Conditions on open WDVV solutions imposed in previous work are retrieved in this setting, thus providing them with a geometrical meaning. Finally, explicit examples are computed. For Saito Frobenius manifolds of types $A$ and $D\,$, the extended prepotentials coincide with open WDVV solutions computed independently, whereas even the existence of the solution in type $E$ had not been previously discussed. On the other hand, new non-homogeneous solutions are constructed by duality. 
\end{abstract}

\maketitle
\tableofcontents
\section{Introduction}
(Dubrovin-)Frobenius manifolds were introduced by B. Dubrovin in the nineties as a geometrical way to encode solutions $\Fr$ to a system of third-order PDEs arising from topological field theories, known as the (closed) \emph{Witten-Dijkgraaf-Verlinde-Verlinde (WDVV) equations} \cite{Dubrovin1996}:
\begin{align}\label{eq:WDVV}
  \sum_{\mu,\nu=1}^n(\eta^\sharp)_{\mu\nu}\,\Fr_{\alpha \beta\mu}\,\Fr_{\nu \gamma \delta}= \sum_{\mu,\nu=1}^n(\eta^\sharp)_{\mu\nu}\,\Fr_{\alpha\gamma \mu }\,\Fr_{\nu \beta \delta}\,.
\end{align}
Here, $\Fr_\alpha\equiv\partial_\alpha\Fr$ and $\eta^\sharp$ is the inverse of the matrix $\eta$ -- assumed to be constant and non-singular -- whose entries, in the case of a Frobenius manifold, are defined by $\eta_{\alpha\beta}:=\Fr_{1\alpha\beta}\,$. The function $\Fr$ is called \emph{prepotential} (or free energy), and it is used to define a multiplication on the tangent spaces of a complex manifold. The WDVV equations coincide with the requirement that this operation be associative. In the following decades, the WDVV equations -- and hence Frobenius manifolds -- have appeared in various areas of Mathematics, from enumerative geometry and Gromow-Witten invariants to singularity theory and integrable systems.

Various geometric structures generalising Frobenius manifolds were introduced over the years, originally in the works of Y. Manin and C. Hertling \cites{weakfrobmflds,maninflatfmanifolds,hertling_2002}. This has led to considering a different notion of potentiality, shifting from a single function to a vector-valued potential. Again, this is used to define a multiplication on the tangent spaces of a complex manifold. The requirement that this product be associative then forces its components, in a distinguished coordinate system, to solve a system of second-order PDEs, known as \emph{oriented WDVV equations}.

More recently, another set of associativity equations was introduced in the context of open Gromow-Witten theory \cite{horev2012opengromovwittenwelschingertheoryblowups}. For this reason, they are called \emph{open WDVV equations}. Various solutions to these equations have been constructed over the years, e.g. in \cites{alcoladophd,buryakopen,IntSystemsAD,BuryakOpenAD,BCT1,BuryakArsingularity,PSTInttheory,dealmeida2025openhurwitzflatf}. The key remark, made explicit, for example, in \cite{alcoladophd}, is that open WDVV equations can be seen as oriented WDVV equations with a ‘source’. More precisely, open WDVV equations coincide with the associativity equations for a vector potential defined on a rank-one fibre bundle over a Frobenius manifold in such a way that the projection map induces homomorphisms between the tangent algebras. In \cite{alcoladophd}, a characterisation of solutions to the open WDVV equations for any given source Frobenius manifold is given in terms of a function $h$ satisfying a simpler system of PDEs. In \cite{dealmeida2025openhurwitzflatf}, it was shown that the Landau-Ginzburg superpotential $\lambda$ of any Hurwitz Frobenius manifold satisfies such conditions, thus giving rise to an open WDVV solution naturally associated to any such Frobenius manifold, and this solution is very easily constructed from the superpotential (up to a constant of integration).

\subsection{Summary of results}
In this work, we start by considering the Dubrovin dual to a Hurwitz Frobenius manifold \cite{Dub04}. We show that this also extends to an F-manifold with compatible flat connection on the universal curve over the Hurwitz space, as defined in \cites{maninflatfmanifolds,RomanoDoubleHurwitz} respectively. In particular, this corresponds to the standard replacement of $\lambda$ by $\log\lambda\,$. 

Furthermore, we show that the existence of the extension is very closely related to the twisted period representation of the flat coordinates for the deformed connection associated to any F-manifold with compatible connection. As a consequence, our result holds true whenever one has a Landau-Ginzburg model for an F-manifold with flat connection, generalising the definition of \cite{Dubrovin1999} in the Frobenius manifold setting. In particular, the construction of a flat extended Dubrovin connection is not strictly necessary for the extistence of the extension. Landau-Ginzburg models for F-manifolds with compatible connections are, however, notoriously difficult to construct for structures other than the ones coming from a Frobenius manifold via Dubrovin duality.\\

Secondly, we obtain two F-manifolds with flat connection on the universal curve over a Hurwitz Frobenius manifold -- i.e. the extension of the Frobenius manifold structure and of its Dubrovin dual. In \cref{sec:duality}, we study their duality in the sense of \cite{maninflatfmanifolds}. By doing so, we are able to retrieve the conditions imposed on the solutions to the open WDVV equations, e.g. in the works \cite{horev2012opengromovwittenwelschingertheoryblowups,buryakopen,BuryakOpenAD,BuryakArsingularity,PSTInttheory} and more, dealing with the problem starting from the A-model side -- more precisely from open Gromow-Witten theory. 

In particular, the condition that the open WDVV solution $\Omega$ satisfies $\partial_1\partial_\alpha\Omega=0$ and $\partial_1\partial_x\Omega=1\,$, where $x$ is the additional coordinate, is equivalent, from the rank-one extension point of view, to requiring that the identity of the extended multiplication coincides with the flat identity $e=\partial_1$ on the underlying Hurwitz Frobenius manifold. This needs not be the case for the identity of the dual product. In fact, we find that the eventual identity on the universal curve realising the duality between the two extended multiplications does not necessarily coincide with the Euler vector field. Nevertheless, it is still an affine vector field with respect to the extension of the Levi-Civita connection of the Frobenius metric $\eta\,$. This gives rise to a bi-flat F-manifold on the universal curve, thus fitting such a more general geometric structure in the B-model side. We argue that, given the Landau-Ginzburg superpotential as a function on the universal curve over the Hurwitz space \cite{RomanoDoubleHurwitz}, it is more natural to consider this eventual identity rather than the Euler vector field as the generator of rescalings in the stabiliser of $\infty$ for the action of the group of M\"{o}bius transformations on the target $\RS$ \cite{Dubrovin1996}. This gives rise to weighted-homogeneous solutions to the open WDVV equations, which happen to reproduce the condition for the open WDVV solutions studied in open Gromow-Witten theory. In particular, the degree of the additional coordinate along the fibre will be determined by the conformal dimension (or charge) of the underlying Frobenius manifold.

Next, we study how to define the extended Dubrovin connection for the bi-flat F-manifold structure on the universal curve. We show that, upon requiring compatibility with the extended Dubrovin connection on the Hurwitz space, the obvious generalisation of the definition in the Frobenius manifold setting is the unique flat connection in this case. More generally, we notice that, for rank-one extensions in which the eventual identity should fail to be affine, some further conditions need to be met in order for the extended Dubrovin connection on the extension to be flat.

Finally, we determine the relation between the extensions of the Levi-Civita connections of the Frobenius metric $\eta$ and the intersection form $g\,$. In particular, we show that this is naturally given by the same equation that takes place on the Hurwitz Frobenius manifold, with the obvious replacements of the corresponding objects with the ones on the universal curve. \\

Thirdly, in \cref{ss:typeDE}, the case of Saito Frobenius manifolds on the parameter spaces of miniversal deformations of surface singularities of type-$ADE$ is discussed in detail. In types $A$ and $D\,$, the open WDVV solutions that gives the rank-one extension of the Hurwitz Frobenius manifold can be computed explicitly. They are, respectively, polynomial and Laurent-polynomial solutions, and coincide with the ones from \cites{BuryakArsingularity,BuryakOpenAD}. Notice that these can also be found as reductions of the open WDVV solutions naturally associated to suitable infinite-dimensional Frobenius manifolds \cite{Ma25}. In type-$E$, even the existence of the solution was not known. However, this construction will work even in the exceptional case, and we can therefore write an implicit expression for the solution in terms of the miniversal deformation. In particular, this will not be a (Laurent) polynomial, since the curve has positive genus. Nevertheless, the seemingly-accidental property, noticed in \cite{BuryakOpenAD}, that the dependence of the open WDVV solution on the flat coordinates of the underlying Saito Frobenius manifold is almost entirely determined by the transition functions between the parameters of the miniversal unfolding and the Saito flat coordinates is now a mere consequence of the fact that the open WDVV solution is constructed directly from the Landau-Ginzburg superpotential. Therefore, the same property will hold true in type-$E$.

Furthermore, as described in \cite{topstringsd<1}, the Landau-Ginzburg superpotentials for these Frobenius manifolds are given by restrictions of the miniversal deformations to a curve in $\CC^2$. The Landau-Ginzburg superpotentials are, therefore, non-polynomial except for type-$A$. As a consequence, the open WDVV solutions are not polynomial (save for type-$A$), as already noted in \cite{BuryakOpenAD}. We conjecture that the operation of restricting to a curve in $\CC^2$ commutes with the operation of extending. More precisely, we conjecture that there exist a rank-two weighted-homogeneous \emph{polynomial} extension of each Saito Frobenius manifold of type-$ADE$ which restrict to the known (possibly non-polynomial) rank-one solution. We provide a rank-two extension satisfying these properties in type-$A\,$. 

Finally, in \cref{sec:dualsolutions}, we provide explicit classes of dual solutions to the open WDVV equations. These are completely new to our knowledge. In particular, they are not weighted-homogeneous. Some of these will be generalised to a root-theoretic construction of open WDVV prepotentials in the direction of \cite{veselovcech} in the subsequent work \cite{opencech}.

\begin{acknowledgements}
  A. P. would like to thank Andrea Brini, Jingxiang Ma, and Karoline van Gemst for helpful discussions and suggestions. A. P. also wishes to thank Youjin Zhang, Di Yang, and Dafeng Zuo, as well as Tsinghua University and the University of Science and Technology of China, for their hospitality during the final stages of this work.
   A. P. was supported by a Ph.D. studentship of the EPSRC Doctoral Training Partnership (EP/W524359/1). 
\end{acknowledgements}

\section{Preliminaries}
\documentclass[main.tex]{subfile}

\subsection{Almost-flat F-manifolds}

\begin{defn}[\cites{Arsie_2022}] An \emph{almost-flat F-manifold} is a complex manifold $M$ equipped with:
\begin{itemize}
\item An $\Hol_M$-bilinear multiplication $\bullet:\X_M\otimes_{\Hol_M}\X_M\to\X_M$ on the tangent sheaf,
\item An affine connection $\nabla:\X_M\to\Omega_M^1\otimes_{\Hol_M}\X_M\,$,
        \item A distinguished holomorphic vector field $e\in\X_M(M)\,$,
        \end{itemize}
    satisfying:
    \begin{enumerate}
           \item [(AFM1)] For any $z\in\CC\,$, the \emph{deformed connection} ${}^z\nabla:=\nabla+z\,\bullet$ is flat and torsion-free.
        \item [(AFM2)] At any $p\in M\,$, $e_p$ is the identity of $\bullet_p\,$.
    \end{enumerate}
    If, further, $e$ is a flat section of $\nabla\,$, $M$ is called \emph{flat F-manifold}.
\end{defn}
\begin{oss}
The condition (AFM1) is a shorthand for a series of requirements on $\bullet$ and $\nabla\,$. In particular, a simple computation shows that:
\begin{align}\label{eq:torsiondeformed}
      {}^z T(X,Y)& =  {}^\nabla T(X,Y)+z (X\bullet Y-Y\bullet X)\,.
\end{align}
Hence, ${}^z\nabla$ is torsion-free identically in $z$ if and only if $\nabla$ is torsion-free and $\bullet$ is commutative.

On the other hand, one similarly obtains the following expression for the Riemann tensor of ${}^z\nabla\,$:
\begin{equation}
    \begin{split}
        {}^z R(X,Y)Z&={}^\nabla R(X,Y)Z+z\bigl[(\nabla_X\bullet)(Y,Z)-(\nabla_Y\bullet)(X,Z)\bigr]+\\
        &\quad+z^2\bigl[X\bullet(Y\bullet Z)-Y\bullet(X\bullet Z)\bigr]\,.
    \end{split}
\end{equation}
Therefore, (AFM1) is equivalent to:
\begin{enumerate}
    \item [(AFM1a)] $\bullet$ is associative\footnote{Strictly speaking, the conditions are commutativity and right-associativity. However, it can be easily shown that they are equivalent to commutativity and fully-fledged associativity.} and commutative.
    \item [(AFM1b)] $\nabla$ is a flat, torsion-free connection.
    \item [(AFM1c)] For any one-form $\alpha\,$, the rank-two tensor field $(X,Y)\mapsto (X\bullet Y)\intprod \alpha$ is a Codazzi tensor \cite{codazzi}, i.e. a totally symmetric tensor field whose covariant derivative is also totally symmetric.
\end{enumerate}
\end{oss}
\begin{oss}
    Condition (AFM1) could be loosened by requiring that, while still being torsion-free, the deformed connection ${}^z\nabla$ have a curvature that is independent of $z\,$. In such a way, one allows for non-flat connections, while still retaining compatibility with the product. These structures are usually called \emph{F-manifolds with (compatible) connection} \cite{maninflatfmanifolds}. Clearly, in our definition, almost-flat F-manifolds are F-manifolds with compatible flat connection.
\end{oss}
\begin{oss}
    Perhaps counter-intuitively, the adjective ‘flat' here refers to flatness of the identity rather than of the connection, as the latter will, in this work, always be assumed to be flat. In other words, we use the definition in \cite{Arsie_2022}, rather than e.g. the one in \cite{ArsieBuryakLorRossi}, where non-flat connections are also considered. In the language of the latter paper, what we call (almost)flat F-manifold would be an F-manifold with compatible flat connection and (non)flat unity. While it is true that, from the F-manifold point of view, flatness of the unity is an accessory condition that simply picks out one of the flat connections compatible with a given multiplication -- and therefore it can always be achieved by a different choice of $\nabla\,$, see e.g. \cite{LP11} -- from the point of view of this paper it is more natural to consider a different (flat) connection in the family of compatible ones, for which the unity might not happen to be flat, hence the specification. This will hopefully not cause any confusion to the reader.
\end{oss}

Returning to the properties of almost-flat F-manifolds in our case, we have said that the multiplication is a Codazzi tensor. From \cite{codazzi}, since $^\nabla R=0$ (in particular, its sectional curvature is constant), it follows that, for any $p\in M\,$, there exist a neighbourhood $U$ and a local holomorphic vector field $\Fr\in\X_M(U)$ such that $\bullet\lvert_U=\nabla^2\Fr\,$, i.e. $X\bullet Y=\nabla_X\nabla_Y\Fr-\nabla_{\nabla_XY}\Fr$ for any two vector fields $X$ and $Y$. The vector field $\Fr$ is called \emph{(local) vector (pre)potential} of the almost-flat F-manifold. Associativity of $\bullet\,$, then, translates into $\Fr$ satisfying the following identity identically in $X,Y,Z\in\X_M(U)\,$:
\begin{equation}\label{eq:orientedWDVVcoordinatefree}    \nabla^2_{X,\nabla^2_{Y,Z}\Fr}\,\Fr=\nabla^2_{Z,\nabla^2_{X,Y}\Fr}\,\Fr\,.
\end{equation}
In a local coordinate system, this is a second-order system of PDEs for the components of $\Fr\,$. It is a well-known fact that finding solutions to this system is highly non-trivial. Notice that $\Fr$ is only defined up to an affine vector field with respect to $\nabla\,$, i.e. an element of $\ker\nabla^2\,$.

Now, since $\nabla$ is flat, we have distinguished coordinate systems in a neighbourhood of each point. Namely, the tangent sheaf is generated, as a sheaf of $\Hol_M$-modules, by the flat sections of $\nabla\,$. Since flat sections are a subsheaf of commutative Lie subalgebras of $\X_M$, it follows that there exists a set of coordinates $v_1,\dots, v_n$ in a neighbourhood of each point of $M$ such that $\bigl\{\partial_{v_1},\dots,\partial_{v_n}\bigr\}$ is a local frame of flat sections of $\nabla\,$. Coordinate systems with this property are usually referred to as being themselves \emph{flat}. 

In a system of flat coordinates, \cref{eq:orientedWDVVcoordinatefree} reads:
\begin{equation}
    \sum_{\mu=1}^n\pdv[2]{\Fr_\alpha}{v_\beta}{v_\mu}\pdv[2]{\Fr_\mu}{v_\gamma}{v_\delta}=\sum_{\mu=1}^n\pdv[2]{\Fr_\alpha}{v_\gamma}{v_\mu}\pdv[2]{\Fr_s}{v_\beta}{v_\delta}\,.
\end{equation}
These is the aforementioned system of \emph{oriented Witten-Dijkgraaf-Verlinde-Verlinde equations} \cite{maninflatfmanifolds}.\\

In the case of a flat F-manifold, the identity is a flat section of $\nabla\,$, and its restriction to a suitable open subset can therefore be chosen to be a vector field in any local flat frame. Unless otherwise stated, in such a case we are always going to make the choice that $e=\partial_{v_1}\,$.


\subsection{Frobenius manifolds and WDVV equations}
\begin{defn}
    A (Dubrovin-)Frobenius manifold with charge (or conformal dimension) $d\in\CC$ is a flat F-manifold equipped with a non-degenerate, bilinear form $\eta:\X_M\otimes_{\Hol_M}\X_M\to\Hol_M$ -- called \emph{Frobenius metric} -- and a distinguished global holomorphic vector field $E$ -- called \emph{Euler vector field} -- such that:
    \begin{enumerate}
        \item [(FM1)] The flat connection $\nabla$ is the Levi-Civita connection of $\eta\,$, i.e. $\nabla\eta=0\,$.
        \item [(FM2)] $\eta$ is compatible with $\bullet$ in the sense that, for any holomorphic vector field $X$, the endomorphism on $TM$ given by multiplication by $X$ is self-adjoint with respect to $\eta\,$.
        \item [(FM3)] The Euler vector field satisfies:
        \[
        \begin{aligned}
            \Lie_Ee&=-e\,,&&& \Lie_E\bullet&=\bullet\,, &&& \Lie_E\eta&=(2-d)\,\eta\,.
        \end{aligned}
        \]
        Here, $\Lie$ denotes the Lie derivative.
    \end{enumerate}
\end{defn}
\begin{oss}
    If the multiplication $\bullet$ is generically semi-simple -- i.e. it is a semi-simple multiplication on the tangent space of a generic point -- we are going to say that the Frobenius manifold is semi-simple.
\end{oss}
\begin{oss}
    To make it clear that $\nabla$ is the Levi-Civita connection of $\eta\,$, we shall, from now on, use the notation ${}^\eta\nabla\,$.
\end{oss}
Since Frobenius manifolds are flat F-manifolds, then locally their product admits a vector potential. However, the existence of a compatible non-degenerate, symmetric bilinear form gives a canonical identification of the tangent and cotangent bundles. As a consequence, one can consider the tensor field $c\,$, $\eta$-dual to $\bullet\,$, defined as follows on any triple of holomorphic vector fields $X\,,Y$ and $Z$:
\begin{equation}
    c(X,Y,Z):=\eta(X\bullet Y\,,\,Z)\,.
\end{equation}
Owing to (FM2) and the commutativity of $\bullet\,$, this is a section of the third symmetric power of $T^*M\,$. (AFM1c), then, implies that ${}^\eta\nabla  c$ is also totally symmetric. Hence, $c$ is still a Codazzi tensor, now of rank three, on a space of constant curvature. Therefore, there exists, in a neighbourhood $U$ of each point, a local \emph{function} $\Fr\in\Hol_M(U)$ such that:
\begin{equation}
    c\,\lvert_U={}^\eta\nabla ^3\Fr\,.
    \end{equation}
$\Fr$ is the \emph{prepotential} (or free energy) of the Frobenius manifold \cite{Dubrovin1996}, and it determined up to a local holomorphic function in the kernel of ${}^\eta\nabla ^3\,$. A vector potential can be constructed from the free energy by taking the $\eta$-dual of $\dd\Fr\,$ -- i.e. the gradient of $\Fr\,$.

As for the metric, since it clearly coincides with $c(e,-,-)$ as a section of $T^*M^{\otimes 2}\,$, it can also be locally expressed in terms of the prepotential via:
\begin{equation}
    \eta\,\lvert_U={}^\eta\nabla ^2(\Lie_e\Fr)\,,
\end{equation}
since $e$ is flat.

Finally, for the Euler vector field, we firstly notice that:
\begin{lem}\label{lem:EVFaffine}
    Let $M$ be a Frobenius manifold. The Euler vector field $E$ is affine with respect to ${}^\eta\nabla $, i.e. ${}^\eta\nabla ^2 E=0$.
\end{lem}
\begin{proof}
If $\xi $ is a conformal Killing vector field for the metric $\eta$, i.e. if $\Lie_\xi\eta=f\,\eta$ for some $f\in\Hol_M$, then a simple generalisation of the second Killing identity shows that, for any two $X,Y\in\X_M$:
\[
 {}^\eta\nabla ^2_{X,Y}\xi+ {}^\eta R(X,\xi)Y=\tfrac12\bigl(Y(f)\,X+X(f)\,Y-\eta(X,Y)\, {}^\eta\nabla  f\bigr)\,,
\]
 A quick proof of this relies on the observation that the left-hand side coincides with $(\Lie_\xi {}^\eta\nabla )_XY\,$, hence the right-hand follows from how the Levi-Civita connections changes under conformal transformations. This is a standard result, see e.g. \cite{leeRiemannian}.

Now, $E$ is a conformal Killing vector field for the Frobenius metric -- whose Riemann tensor vanishes -- with constant conformal factor $f=2-d$. This proves the statement.
\end{proof}
Notice, then, that the properties $\Lie_E\bullet=\bullet$ and $\Lie_E\eta=(2-d)\,\eta$ imply that $\Lie_Ec=(3-d)\,c\,$. As a consequence, the prepotential satisfies:
\begin{equation}\label{eq:quasihomogeneityF}
    \Lie_E\Fr=(3-d)\,\Fr\,.
\end{equation}

In a system of flat coordinates for the connection ${}^\eta\nabla \,$, now, the metric $\eta$ will be represented by a constant, non-degenerate matrix according to (FM1). In a local frame with $e=\partial_{v_1}\,$, then this is the requirement that the matrix $(\Fr_{1\alpha\beta})_{\alpha,\beta=1,\dots, n}$ be constant and invertible. The associativity of the product is, as already mentioned, equivalent to the prepotential being a solution to the WDVV equations \ref{eq:WDVV}. Finally, the components of the Euler vector field are at most linear functions of the flat coordinates, due to \cref{lem:EVFaffine}. In particular, we can always choose the flat coordinates $\bm{v}$ so that the coordinate vector fields are eigenvectors of $X\mapsto{}^\eta\nabla _XE$ \cite{Dubrovin1999}. In other words, if we denote $k:=\rank{}^\eta\nabla  E\,$, $E$ is decomposed in the associated local frame as follows:
\begin{equation}\label{eq:EulerVFflatcoords}
\begin{aligned}
        E&=\sum_{\mu=1}^kd_\mu \,v_\mu\partial_{v_\mu}+\sum_{\mu=k+1}^nr_{\mu-k} \,\partial_{v_\mu}\\
        &\equiv \sum_{\mu=1}^k(1-q_\mu)\,v_\mu\,\partial_{v_\mu}+\sum_{\mu=k+1}^nr_{\mu-k} \,\partial_{v_\mu}\,,
\end{aligned}
\end{equation}
for some complex numbers $d_1,\dots, d_k\in\CC^\ast\,$, $r_1,\dots, r_{n-k}\in\CC\,$. $\Lie_Ee=-e$ finally imposes $d_1=1\,$.

As a consequence, \cref{eq:quasihomogeneityF} says that $\Fr$ is a weighted-homogeneous function with weights $(d_1,\dots, d_k\,;\,r_1,\dots,r_{n-k}\,;\,3-d)$ of the flat coordinates $\bm{v}\,$, by which we mean:
\begin{equation}
    \Fr(c^{d_1}\,v_1,\dots, c^{d_k}\,v_k,v_{k+1}+c\,r_1,\dots,v_n+c\,r_{n-k})=c^{3-d}\,\Fr(\bm{v})\,,\qquad \forall c\in\CC^\ast\,.
\end{equation}
$3-d$ will sometimes be called the \emph{overall weight} (or degree). Terminology is borrowed from \cite{ET11}.

As anticipated, this gives a one-to-one correspondence between Frobenius manifold structures and weighted-homogeneous solutions to the WDVV equations. 

\subsection{Extended Dubrovin connection}
Using all the elements in the structure of a Frobenius manifold, one can define an extension of the Levi-Civita connection ${}^\eta\nabla$ to the trivial bundle $\RS\times M\,$. We remark that a key point is that, since the bundle is trivial, there is a canonical way to identify the pull-back of the tangent bundle of $M$ with a sub-bundle to the tangent bundle to $\RS\times M\,$. It is well-known that this is not necessarily true for a non-trivial fibre bundle without additional structure.
    \begin{defn}
Let $M$ be a Frobenius manifold, $\pr:\RS\times M\twoheadrightarrow M$ be the trivial bundle. The \emph{extended Dubrovin connection} of $M$ is the connection $\widehat{\nabla}$ on $\pr^*TM$ defined by, for any $Y\in\pr^*\X_M$:
\[
\begin{aligned}
    \forall X&\in \pr^*\X_M: & && \widehat{\nabla}_XY&:={}^\eta\nabla_XY+ z \, X\bullet Y\,,\\
    &&&&\widehat{\nabla}_{\partial_ z }Y&:={}^\eta\nabla_{\partial_ z }Y+E\bullet Y-\tfrac{1}{ z }\mu(Y)\,,
\end{aligned}
\]
where $\mu$ is the following section of $\End(TM)\,$, called \emph{grading operator}:
\[
\begin{aligned}
  \mu&:=\bigl(1-\tfrac{d}{2} \bigr)\id-{}^\eta\nabla E\,.
\end{aligned}
\]
\end{defn}
This is important in the relation between Frobenius manifolds and the theory of isomonodromic deformations of meromorphic connections on the Riemann sphere. We shall not delve into this topic here. For our purpose, we are interested in the following:
\begin{thm}[\cite{Dubrovin1999}]\label{thm:flatnessExtDubConnection}
    Flatness of the extended Dubrovin connection is equivalent to flatness of ${}^\eta\nabla$, compatibility of ${}^\eta\nabla$ with the product $\bullet$, associativity and commutativity of $\bullet$ and ${}^\eta\nabla^2 E=0\,$.
\end{thm}

We shall denote flat coordinates of the extended Dubrovin connection by $\widetilde{\bm{v}}\,$. If $\bm{v}$ is a system of flat coordinates for ${}^\eta\nabla$ defined on the open set $U\,$, then $\widetilde{\bm{v}}(\bm{v}, z )$, defined on $\RS\times U\,$, can be thought of as a deformation of $\bm{v}\,$, $ z $ being the deformation parameter, i.e. $\widetilde{v}_\alpha\in v_\alpha\oplus z\bigl(\CC\llbracket z \rrbracket\otimes \Hol_M(U)\bigr)\,$. The coefficients of the positive powers of $ z $ are computed recursively \cites{Dubrovin1999}, and are the Hamiltonian densities for the flows of the principal hierarchy \cite{DZ01}.

\subsection{Dubrovin (almost) duality}
We introduce the following
\begin{defn}
    An \emph{almost-flat Riemannian F-manifold} is an almost-flat F-manifold $M$ equipped with a non-degenerate, bilinear form $\eta$ satisfying (FM1) and (FM2).
   
       If, further, the identity $e$ is an affine (resp. flat) vector field, we are going to say that $M$ is an \emph{affine (resp. flat) Riemannian F-manifold}.
\end{defn}
\begin{oss}
    Flatness of the deformed connection again ensures that the almost-flat Riemannian F-manifold structure is specified by a prepotential $\Fr$ solving the WDVV equations, as a function of the flat coordinates of $\eta\,$. Such a solution needs not be weighted-homogeneous, however.
\end{oss}
\begin{oss}
   Affine Riemannian F-manifolds are also flat F-manifolds, as noted in \cites{maninflatfmanifolds,Arsie_2022} -- even though the opposite is not necessarily true. However, we want here to emphasise the fact that the connection comes from a metric -- whereas the connection that makes an affine Riemannian F-manifold into a flat F-manifold is not necessarily a metric connection.
\end{oss}
\begin{oss}
    Almost-flat Riemannian F-manifolds are, in other words, almost-flat F-manifolds whose connection is the Levi-Civita connection of an invariant metric. In particular, they are clearly Riemannian F-manifolds in the sense of \cite{ArsieBuryakLorRossi}, with the additional requirement that the connection be flat (whereas the unity might not).
\end{oss}

This subsection contains a brief description of the following result:
\begin{thm}[\cite{Dub04}]
    Let $M$ be a Frobenius manifold. There are a closed subset $\Delta_M$ of $M$ -- called \emph{discriminant} -- a symmetric, non-degenerate bilinear form $g$ on the complement of $\Delta_M$ -- called \emph{intersection form} -- and multiplication of tangent vectors $\ast$ on $M\smallsetminus\Delta_M$ such that $M\smallsetminus\Delta_M\,$, equipped with $\ast$ and $g\,$, is an affine Riemannian F-manifold with identity $E\,$. Such a structure is said to be Dubrovin dual (or almost-dual) to $M\,$.
\end{thm}
We start by noticing that, using the identification between $TM$ and $T^*M$ induced by the metric $\eta\,$, one can define a multiplication on $\Omega_M\,$, which we are going to denote by the same symbol $\bullet\,$.
The intersection form is, then, defined first on the cotangent bundle.
\begin{defn}
    Let $M$ be a Frobenius manifold. The \emph{intersection form} is the symmetric bilinear form $g^\sharp:\Omega_M\otimes_{\Hol_M}\Omega_M\to\Hol_M$ defined by:
    \[
    \begin{aligned}
        \forall \alpha,\,\beta&\in \Omega_M:&&& g^\sharp(\alpha,\beta)&:=E\intprod(\alpha\bullet \beta)\,.
    \end{aligned}
    \]
\end{defn}
This also defines a metric on $TM$\,, whenever it is non-degenerate. It is not difficult to see that this happens on the complement of:
\begin{defn}
    Let $M$ be a Frobenius manifold, the \emph{discriminant} of $M$ is the closed subset:
    \[
    \Delta_M:=\bigl\{\,p\in M: \quad E_p\text{ does not admit an inverse with respect to }\bullet\,\bigr\}\,.
    \]
\end{defn}
On the complement of $\Delta_M\,$, $g^\sharp$ defines a symmetric, non-degenerate metric $g$, which we are still going to call intersection form. The product that gives flat deformations of the Levi-Civita connection of $g$ is the following:
\begin{defn}
    Let $M$ be a Frobenius manifold. The \emph{dual product} is the multiplication $\ast$ defined on the complement of $\Delta_M$ by:
    \[
    X\ast Y:=E^{-1}\bullet X\bullet Y\,,
    \]
    where $E^{-1}$ is the holomorphic vector field on $M\smallsetminus\Delta_M$ such that $E\bullet E^{-1}=e\,$.
    \end{defn}
    \begin{oss}
        Clearly, $\ast$ is unital with identity $E\,$.
    \end{oss}
\begin{oss}
    The prepotential $\Frst$ of the Dubrovin dual structure is said to be (almost) dual to the one of the Frobenius manifold it comes from.
\end{oss}


\subsection{Extensions of almost-flat F-manifolds and open WDVV equations}

\begin{defn}
    Let $M$ be an almost-flat F-manifold. An \emph{extension of $M$} is an almost-flat F-manifold $\widetilde{M}$ with a fibration $\pi:\widetilde{M}\to M$ and an Ehresmann connection $\vartheta:\pi^*(TM)\to T\widetilde{M}$ such that:
    \begin{enumerate}
        \item [(EAF1)] for any $p\in \widetilde{M}$, $(\pi_*)_p:T_p\widetilde{M}\to T_{\pi(p)}M$ is a homomorphism of complex algebras;
        \item [(EAF2)] $\widetilde{\nabla}\circ\vartheta=\pi^*\nabla\,$.
    \end{enumerate}

    The codimension of $M$ in $\widetilde{M}$ is called \emph{rank} of the extension, and we shall denote it by $r\in\ZZ_{>0}\,$. 
\end{defn}
\begin{oss}
    Notice that (EAF1) implies that $\ker(\pi_*)_p$ is an ideal in $T_p\widetilde{M}$ for any $p\in \widetilde{M}\,$.
\end{oss}
\begin{oss}
    For (EAF2), we recall that, to any fibre bundle $\pi:\widetilde{M}\to M$, we associate the following short exact sequence of vector bundles on the total space $\widetilde{M}$:
    \begin{equation}\label{diag:sesfibrebundles}
        \begin{tikzcd}
                        0\arrow[r ]&\ker \pi_*\arrow[r] & T\widetilde{M}\arrow[r] & \pi^* TM \arrow[r]& 0\,.
        \end{tikzcd}
    \end{equation}
$\ker\pi_*$ is called \emph{vertical bundle}, and its sections are \emph{vertical} vector fields on $\widetilde{M}\,$. An Ehresmann connection $\vartheta$ is a splitting of the short exact sequence, which determines an identification of $\pi^* TM$ with a sub-bundle of $T\widetilde{M}$ in such a way that $T\widetilde{M}\cong\vartheta(\pi^*TM)\oplus\ker\pi_\ast\cong TM\oplus\ker\pi_\ast\,$. In other words, it is a way to lift vector fields on the base to vector fields on the total space. $\vartheta(\pi^*TM)$ is called the \emph{horizontal bundle}. Condition (EAF2) is the same as requiring that flat sections of $\nabla$ lift to flat sections of $\widetilde{\nabla}\,$, which is a natural requirement in the given setting. From now on, we shall identify $\pi^*TM$ with its image through the Ehresmann connection, and shall call its sections \emph{horizontal} vector fields. Finally we shall say that a vector field $X$ on $\widetilde{M}$ is \emph{projectable} if it is a section of $\pi^{-1}TM$ -- i.e. if it is horizontal and constant along the fibres. It is a well-known fact that projectable vector fields generate the space of horizontal vector fields as a $\Hol_{\widetilde{M}}$-module.
\end{oss}

The following result is, then, an obvious consequence of (EAF2):
\begin{prop}
    Let $\pi:\widetilde{M}\to M$ be an extension of the almost-flat F-manifold $M$. If $v_1,\dots, v_n$ is a system of flat coordinates for $M$ on the open set $U$, then there exists an \emph{adapted system of flat coordinates} $v_1,\dots, v_n,x_1,\dots, x_r$ for $\widetilde{M}$ on (an open subset of) $\pi^{-1}(U)$.
\end{prop}
\begin{oss}
    In the definition of \cite{alcoladophd}, the Ehresmann connection is replaced by the requirement that $\pi$ be a so-called map of affine manifolds, i.e. it is an affine transformation when represented in flat coordinate systems on $\widetilde{M}$ and $M\,$. This gives a way to lift local flat sections on $M$ to local flat sections on $\widetilde{M}$, which glue together to a holomorphic map between the two tangent bundles, i.e. an Ehresmann connection. Conversely, given an Ehresmann connection, one can construct adapted flat coordinate systems in $\widetilde{M}$ as in the previous Proposition. In any such coordinate system, $\pi$ is, then, just the projection on the first $n$ factors. Therefore it is, in particular, an affine map.
\end{oss}

Next, we want to relate a vector potential $\Fr$ of $M$ on some open subset $U$ and the vector potential $\widetilde{\Fr}$ of $\widetilde{M}$ on (possibly) an open subset of $\pi^{-1}(U)$. 
\begin{prop}[\cite{alcoladophd}]
Let $\pi:\widetilde{M}\to M$ be an extension of the almost-flat F-manifold $M$, let $\Fr$, $\widetilde{\Fr}$ denote the vector prepotentials of $M$ and $\widetilde{M}$ respectively on the open subsets $U$ and $\pi^{-1}(U)$. There exists a vector field $\Omega$ on (a subset of) $\pi^{-1}(U)$ -- called \emph{extended (pre)potential} -- such that:
     \begin{itemize}
         \item $\Omega\in\ker\pi_*$.
         \item $\widetilde{\Fr}=\Fr\circ\pi+\Omega$, up to affine vector fields.
     \end{itemize}
\end{prop}
\begin{proof}
    According to the splitting in \cref{diag:sesfibrebundles}, we can uniquely decompose $\widetilde{\Fr}$ as a sum of a section of $\pi^*TM$ and of $\ker\pi_*\,$. The fact that the former is the pull-back of $\Fr$ is a straightforward consequence of (EAF1).
\end{proof}

Let us, now, fix a system of adapted flat coordinates $v_1,\dots, v_n,x_1,\dots, x_r$ on $\widetilde{M}\,$. For the sake of clarity, we are going to employ Greek indices ranging from $1$ to $n\,$, and Latin indices ranging from $1$ to $r\,$. The multiplication table for the coordinate vector fields is then given by:
\begin{equation}\label{eq:multtableextension}
    \begin{aligned}
    \pdv{v_\mu}\,\widetilde{\bullet}\,\pdv{v_\nu} &=\sum_{\alpha=1}^n\pdv[2]{\Fr_\alpha}{v_\mu}{v_\nu}\,\pdv{v_\alpha}+\sum_{a=1}^r\pdv[2]{\Omega_a}{v_\mu}{v_\nu}\,\pdv{x_a}\,,\\
    \pdv{v_\mu}\,\widetilde{\bullet}\,\pdv{x_a}&=\sum_{b=1}^r\pdv[2]{\Omega_b}{v_\mu}{x_a}\,\pdv{x_b}\,,\\
    \pdv{x_a }\,\widetilde{\bullet}\,\pdv{x_b}&=\sum_{c=1}^r\pdv[2]{\Omega_c}{x_a}{x_b}\,\pdv{x_c}\,.
\end{aligned}
\end{equation}
The requirement of associativity will then constrain $\Omega$ to be a solution to a system of second-order PDEs having the second derivative of the components of the (known) vector potential $\Fr$ of the underlying almost-flat F-manifold as coefficients. For the sake of simplicity, we denote $c\indices{^\mu_{\alpha\beta}}:=\partial_\alpha\partial_\beta \Fr_\mu$ the structure constants on the base space. The associativity equations are, then \cite[equations 3.6-9]{alcoladophd}:
\begin{equation}\label{eq:openWDVVrankr}
    \begin{split}
\sum_{\mu=1}^nc\indices{^\mu_{\alpha\beta}}\pdv[2]{\Omega_a}{v_\mu}{ v_\gamma}+\sum_{b=1}^r\pdv[2]{\Omega_b}{v_\alpha }{v_\beta}\pdv[2]{\Omega_a}{x_b}{v_\gamma}&=\sum_{\mu=1}^nc\indices{^\mu_{\beta\gamma}}\pdv[2]{\Omega_a}{v_\alpha}{v_\mu}+\sum_{b=1}^r\pdv[2]{\Omega_b}{v_\beta}{v_\gamma}\pdv[2]{\Omega_a}{x_b}{v_\alpha}\,,\\
\sum_{\mu=1}^nc\indices{^\mu_{\alpha\beta}}\pdv[2]{\Omega_a}{v_\mu}{x_b}+\sum_{c=1}^r\pdv[2]{\Omega_c}{v_\alpha}{v_\beta}\pdv[2]{\Omega_a}{x_c}{x_b}&=\sum_{c=1}^r\pdv[2]{\Omega_c}{v_\beta}{x_b}\pdv[2]{\Omega_a}{x_c}{v_\alpha}\,,\\
\sum_{d=1}^r\pdv[2]{\Omega_d}{x_a}{x_b}\pdv[2]{\Omega_c}{v_\alpha}{x_d}&=\sum_{d=1}^r\pdv[2]{\Omega_d}{v_\alpha}{x_a}\pdv[2]{\Omega_c}{x_d}{x_b}\,,\\
\sum_{k=1}^r\pdv[2]{\Omega_k}{x_b}{x_c}\pdv{\Omega_d}{x_a}{x_k}&=\sum_{k=1}^r\pdv[2]{\Omega_k}{x_a}{x_b}\pdv[2]{\Omega_d}{x_k}{x_c}\,.
\end{split}
\end{equation}
These are the \emph{rank-$r$ open WDVV equations}.

In the case of a rank-one extension, the last two sets of equations are vacuous. Therefore, the rank-one open WDVV equations reduce to:
\begin{equation}\label{eq:openWDVV}
    \begin{aligned}
        \sum_{\alpha=1}^nc\indices{^\alpha_{\mu\nu}}\pdv[2]{\Omega}{v_\alpha}{v_\rho}+\pdv{\Omega'}{v_\rho}\pdv[2]{\Omega}{v_\mu}{v_\nu}&=\sum_{\alpha=1}^nc\indices{^\alpha_{\nu\rho}}\pdv[2]{\Omega}{v_\alpha}{v_\mu}+\pdv{\Omega'}{v_\mu}\pdv[2]{\Omega}{v_\nu}{v_\rho}\,,\\
        \sum_{\alpha=1}^nc\indices{^\alpha_{\mu\nu}}\pdv{\Omega'}{v_\alpha}+\Omega''\pdv[2]{\Omega}{v_\mu}{v_\nu}&=\pdv{\Omega'}{v_\mu}\pdv{\Omega'}{v_\nu}\,,
    \end{aligned}
\end{equation}
where $'\equiv\partial_x$ denotes the derivative in the fibre direction. We shall simply refer to the system \cref{eq:openWDVV} as \emph{open WDVV equations} \cite{alcoladophd}.
\begin{oss}
    In the rank-one case, fixed an adapted flat coordinate system, the extended prepotential is simply one local function on $\widetilde{M}\,$.
\end{oss}
\begin{oss}
    Some authors \cites{  horev2012opengromovwittenwelschingertheoryblowups,BCT1,BBvirasoro,BuryakOpenAD,PSTInttheory,dealmeida2025openhurwitzflatf} call open WDVV equations \cref{eq:openWDVV} only when $\Fr$ solves the ‘ordinary' -- or ‘closed' -- WDVV equations \cref{eq:WDVV}. This is also the case we will only be considering in this work. However, we wanted to highlight that, when discussing the geometrical setting where these kind of associativity equations arise, there is no need for such restriction.
    \end{oss}
\begin{lem}[\cite{alcoladophd}]\label{lem:auxiliaryextension}
    If $\Omega''\neq 0$ and $\Omega$ is a solution to the second set of equations in \cref{eq:openWDVV}, then it also solves the former. In such a case, the extension is called \emph{auxiliary}.
\end{lem}
\begin{proof}
    Since $\Omega''\neq 0$, we can multiply the first equation in \cref{eq:openWDVV} by $\Omega''$ and use the second one. The terms that do not cancel out give the associativity equations for $\Fr$.
\end{proof}
Auxiliary rank-one extensions of almost-flat F-manifolds were fully classified in \cite{alcoladophd}.
\begin{thm}[Theorem 3 in \cite{alcoladophd}]\label{thm:alcolado}
    Let $M$ be an almost-flat F-manifold, $v_1,\dots, v_n$ be a system of flat coordinates defined on an open set $U$, and $\Fr$ be a vector potential on $U$. If $h:U\times \CC\to \CC$ is a function satisfying $h'\neq 0$ and:
    \begin{equation}
        h_{\alpha\beta}=\pdv{x}\bigl[\tfrac{1}{h'}\bigl(h_\alpha h_\beta-\sum_{\mu=1}^n\Fr_{\mu,\alpha\beta}h_\mu\bigr)\bigr]\,,\qquad \forall \alpha,\beta=1,\dots, n\,,
    \end{equation}
then there exists a unique rank-one extension of $M$ with flat coordinates $v_1,\dots, v_n,x$ and extended prepotential $\Omega$ satisfying $\Omega'=h\,$.
\end{thm}

\subsection{Hurwitz Frobenius manifolds}
By Hurwitz Frobenius manifolds, we mean Frobenius manifold structures defined on Hurwitz spaces. These are moduli spaces of meromorphic functions on a smooth algebraic curve of fixed genus with given ramification over $\infty\,$. More precisely:
\begin{defn}[\cite{Dubrovin1996}]
    Let $g\in\ZZ_{\geq 0}$ and $\bm{n}\in\ZZ_{\geq 0}^k$. The \emph{Hurwitz space} $\Hw_{g\,;\,\bm{n}}$ is the space of equivalence classes of pairs $(C_g,\lambda)\,$, where $C_g$ is a genus-$g$ closed Riemann surface and $\lambda:C_g\to\RS$ is a meromorphic function of $C_g$ with $k$ poles of order $n_1+1,\dots, n_k+1$ respectively. The equivalence relation is defined by the action of $\mathrm{Aut}(C_g)$ on the source curve. In other words, two pairs $(C,\lambda)\,$, $(D,\mu)$ are identified if $C\cong D$ and there exists an automorphism $f$ of $C$ such that $\mu=\lambda\circ f\,$.
\end{defn}
\begin{ex}
    Let $g=0\,$, $k=1$ and $\bm{n}=\ell\,$. $\Hw_{0\,;\,\ell}$ is the space of equivalence classes of meromorphic functions on $\RS$ with one pole of order $\ell+1\,$. A point in such a space can be uniquely written as follows:
    \[
    \lambda(x)=x^{\ell+1}+a_1x^{\ell-1}+\dots+a_\ell\,.
    \]
    This defines a system of global coordinates on it.
\end{ex}
\begin{defn}[\cite{RomanoDoubleHurwitz}]
    The \emph{universal curve} $\UC_{g\,;\,\bm{n}}$ over the Hurwitz space $\Hw_{g\,;\,\bm{n}}$ is the fibre bundle over the latter whose fibre over a point $[(C,\lambda)]$ is the curve $C$ with marked points given by the poles of $\lambda\,$.
\end{defn}
For any given point in the Hurwitz space, we have a function $\lambda:C_g\to \RS$ defined on the fibre of the universal curve over that point. These maps glue to a meromorphic map on the universal curve $\lambda:\UC_{g\,;\,\bm{n}}\to\RS\,$, called \emph{Landau-Ginzburg superpotential} (or simply superpotential for short) \cite{RomanoDoubleHurwitz}. This is a key ingredient in the construction of a Frobenius manifold structure over (a covering of) a Hurwitz space.

Another crucial tool in the definition of a Hurwitz Frobenius manifold is a \emph{primary differential} (or Saito form \cite{Saito81}) $\omega$. These are \emph{some} holomorphic differentials on the fibre bundle $\widetilde{\UC}_{g\,;\,\bm{n}}\,$ over $\Hw_{g\,;\,\bm{n}}$ whose fibre over a point is the universal cover of the fibre of the universal curve over the same point. In other words, in each fibre of the universal curve we give a (possibly multi-valued) meromorphic differential on the fibre with poles at most at the marked points. $\omega$ needs to satisfy some admissibility conditions with respect to $\lambda\,$. These are described in \cite[Lecture 5]{Dubrovin1996}, where a complete classification of primary differentials into five families is given. As noted e.g. in \cite{BvG22}, the $\lambda$-admissibility endows the universal curve with additional structure. In particular, it defines an Ehresmann connection on $\UC_{g\,;\,\bm{n}}\to\Hw_{g\,;\,\bm{n}}$ by taking the horizontal distribution to the be the one spanned by $\omega$ itself. This gives a canonical way to lift vector field on the Hurwitz space to the universal curve. Hence, for any vector field $X$, we have a differential operator $\delta_X^\omega:H^0(\UC_{g\,;\,\bm{n}}\,,\,\Hol_{\UC_{g\,;\,\bm{n}}})\to H^0(\UC_{g\,;\,\bm{n}}\,,\,\mathcal{K}_{\UC_{g\,;\,\bm{n}}})\,$. Since we will always be working with a fixed primary differential, from now on, we shall denote $\delta^\omega_X$ simply by $X\,$.

The Landau-Ginzburg superpotential and the primary differential define the Frobenius manifold structure as follows. The Frobenius metric is given by the following residue formula:
\begin{equation}\label{eq:etaLGmodel}
    \eta(X,Y):=\sum_{x\in\mathrm{Cr}(\lambda)}\Res_x\bigl\{X(\lambda)\,Y(\lambda)\,\tfrac{\omega^2}{\dd\lambda}\bigr\}\,,
\end{equation}
where $\mathrm{Cr}(\lambda)$ is the set of critical point of $\lambda\,$, and it is understood that, when evaluating the expression at a point on the Hurwitz space, $\lambda$ and $\omega$ ought to be restricted to the corresponding fibres. Accordingly, $\dd\lambda$ is the differential on the fibre of $\UC_{g\,;\,\bm{n}}\,$. The fact that such a metric is non-degenerate and flat again follows from $\lambda$-admissibility, \cite{Dubrovin1996}.


The multiplication is, then, defined similarly e.g. by giving the tensor field $c$ instead, by an analogous residue formula:
\begin{equation}
     c(X,Y,Z):=\sum_{x\in\mathrm{Cr}(\lambda)}\Res_x\bigl\{X(\lambda)\,Y(\lambda)\,Z(\lambda)\,\tfrac{\omega^2}{\dd\lambda}\bigr\}\,.
     \end{equation}
     
As for the identity and Euler vector field, they are respectively defined to be the generators of the translations and rescalings in the stabiliser of $\infty$ for the action of the group of M\"{o}bius transformations on the target $\RS\,$. In other words, they are the generators of the actions on the superpotential $\lambda\mapsto\lambda+a$ and $\lambda\mapsto b\,\lambda$ for $a\in\CC$ and $b\in\CC^\ast\,$.

It can be shown that this gives rise to a semi-simple Frobenius manifold structure on the Hurwitz space.
By Riemann-Hurwitz formula, its dimension is:
\begin{equation}
    d_{g,\bm{n}}:=\dim\Hw_{g\,;\,\bm{n}}=2(g-1)+2k+n_1+\dots+n_k\,.
\end{equation}
The idempotents are the coordinate vector fields in the system of coordinates given by the critical values of the superpotential \cites{Dubrovin1996,Dubrovin1999}. This is a complete frame of coordinate vector fields on the tangent bundle, whose coordinates are called \emph{canonical}, and are usually denoted by the letter $\bm{u}\,$.

The coordinate vector fields $\partial_{u_1},\dots, \partial_{u_{d_{g,\bm{n}}}}$ are eigenvectors of the endomorphism of $TM$ defined by $X\mapsto X\bullet E\,$, with eigenvalues $u_1,\dots, u_{d_{g,\bm{n}}}$ respectively. Therefore, the discriminant coincides with the subset where at least one of the critical values of $\lambda$ vanish.

The intersection form is, then, shown to be given by a similar residue expression:
\begin{equation}
    g(X,Y)=\sum_{x\in\mathrm{Cr}(\lambda)}\Res_x\bigl\{X(\log\lambda)\,Y(\log\lambda)\,\tfrac{\omega^2}{\dd\log\lambda}\bigr\}\equiv\sum_{x\in\mathrm{Cr}(\lambda)}\Res_x\bigl\{\tfrac{1}{\lambda}X(\lambda)Y(\lambda)\tfrac{\omega^2}{\dd\lambda}\bigr\}\,.
\end{equation}

Notice, finally, that giving a Landau-Ginzburg superpotential and a primary differential uniquely defines a Hurwitz Frobenius manifold, possibly on a subset of a Hurwitz space. For this reason, a pair $(\lambda,\omega)$ satisfying a suitable set of axioms is said to be an Landau-Ginzburg  model (or B-model, in the mirror symmetry language) for a Frobenius manifold. More precisely:
\begin{defn}[\cite{Dubrovin1999}]
    A \emph{genus-$g$ Landau-Ginzburg model} (or B-model) for a semi-simple Frobenius manifold $M$ is a pair $(\lambda,\omega)$ such that:
    \begin{itemize}
        \item $\lambda:M_g\to \RS$ is a meromorphic function on a line bundle over $M$ whose fibres are isomorphic to an open domain $D_g$ of a closed Riemann surface of genus $g\,$.
        \item $\omega$ is a (possibly multi-valued) meromorphic differential on $D_g\,$.
    \end{itemize}
The pair needs to satisfy:
\begin{enumerate}
        \item [(LG1)] The critical values of $\lambda$ (restricted to each fibre) in $D_g$ are canonical coordinates for $M\,$.
        \item [(LG2)] The following expressions for the Frobenius metric $\eta$ on $M$ and the tensor field $c\,$, which is $\eta$-dual to the multiplication, hold:
        \[
        \begin{aligned}
                \eta(X,Y)&=\sum_{x\in\mathrm{Cr}(\lambda)}\Res_x\bigl\{X(\lambda)\,Y(\lambda)\,\tfrac{\omega^2}{\dd\lambda}\bigr\}\,,\\
     c(X,Y,Z)&=\sum_{x\in\mathrm{Cr}(\lambda)}\Res_x\bigl\{X(\lambda)\,Y(\lambda)\,Z(\lambda)\,\tfrac{\omega^2}{\dd\lambda}\bigr\}\,.
        \end{aligned}
        \]
        Here, $\mathrm{Cr}(\lambda)$ is the set of critical point of $\lambda\,$.
        \item [(LG3)] If $n=\dim M\,$, there exist cycles $\gamma_1,\dots, \gamma_n$ in $D_g$ such that a system of flat coordinates for the extended Dubrovin connection is given by the twisted periods:
            \[
\widetilde{v}_\mu:=\int_{\gamma_\mu}e^{z\,\lambda}\,\omega\,,\qquad \mu=1,\dots, n\,.
        \]
    \end{enumerate}
\end{defn}

\begin{thm}[\cite{dealmeida2025openhurwitzflatf}]
    Let $\lambda:\mathscr{D}\subseteq\UC_{g\,;\,\bm{n}}\to\RS$ be a Landau-Ginzburg superpotential (possibly defined on a subset of the universal curve) and $\omega$ be a compatible primary differential. There exists a unique rank-one extension of the Hurwitz Frobenius manifold structure on (a subset of) the Hurwitz space $\Hw_{g\,;\,\bm{n}}$ to a flat F-manifold on (a subset of) the universal curve such that:
    \begin{enumerate}
        \item A system of flat coordinates $\bm{v}$ on the subset $U$ in the Hurwitz space can be completed to an adapted system of flat coordinates $(\bm{v},x)$ on an open subset of $\pi^{-1}(U)$ on the universal curve, where $x$ is a coordinate on each fibre such that $\omega=\dd{x}\,$.
        \item The extended prepotential $\Omega$ in the system of adapted flat coordinates satisfies $\Omega_x=\lambda\,$.
    \end{enumerate}
\end{thm}
\begin{oss}
    The properties required for a Landau-Ginzburg model play an important role in the proof of this result. In some sense, they are equivalent to the existence of the extension. We shall comment on this more explicitly later on.
\end{oss}
\begin{oss}
    We stress that the Landau-Ginzburg superpotential might be defined on a subset of the universal curve because there are known examples in mirror symmetry of primary differentials that give a degenerate metric on the whole Hurwitz space, but produce a legitimate, meaningful Hurwitz Frobenius manifold when restricted to a suitable subset. For instance, this is what happens for the Landau-Ginzburg models for Saito and Dubrovin-Zhang Frobenius manifolds of types $BCD\,$, see \cites{topstringsd<1,DSZZ} respectively.
\end{oss}

\section{Dubrovin duality and rank-one extensions}\label{sec:dualityextension}
\documentclass[main.tex]{subfile}
  
Throughout this section, we fix a genus-$g$ Hurwitz Frobenius manifold $\Hw_{g;\bm{n}}$ of dimension $n$ with primary differential $\omega$ and Landau-Ginzburg superpotential $\lambda:\UC_{g;\bm{n}}\to\RS$. We denote by $\bm{w}:=(w_1,\dots, w_n)$ a system of flat coordinates for its intersection form, so that $\Frst(\bm{w})$ is the corresponding solution to the WDVV equations. Furthermore, $\Hw_{g,\bm{n}}^\star$ is going to be its Dubrovin dual.

Unless otherwise stated, Latin indices will be used for flat coordinates $\bm{w}$ of the intersection form, and we are going to employ Einstein's summation convention over them in the proofs to make it more readable. On the other hand, we are going to use Greek indices for canonical coordinates, and no implicit summation over them will be assumed.
\begin{lem}\label{lem:lemma1}
   For $a\leq b\in\{1,\dots, n\}$, the functions:
    \begin{equation}\label{eq:psiab}
        \begin{aligned}
              K_{ab}&:=(\log\lambda)_{ab}-\dv{x}\biggl[\frac{1}{(\log\lambda)'}\biggl((\log\lambda)_a(\log\lambda)_b-\sum_{c,d=1}^n(g^\sharp)_{cd}\,\Frst_{abc}(\log\lambda)_d\biggr)\biggr]\\
    &\equiv\tfrac{1}{\lambda^2} \bigl(\lambda\lambda_{ab}-\lambda_a\lambda_b\bigr)-\dv{x}\biggl[\tfrac{1}{\lambda'}\biggl(\tfrac{1}{\lambda}\lambda_a\lambda_b-\sum_{c,d=1}^n(g^\sharp)_{cd}\,\Frst_{abc}\lambda_d\biggr)\biggr]\,,
        \end{aligned}
    \end{equation}
   where $'\equiv\dv{x}$ and $f_a\equiv\pdv{f}{w_a}$, are holomorphic at the critical points of $\lambda$.
\end{lem}
\begin{proof}
Let $q_1,\dots, q_n$ denote the critical points of $\lambda$, and let $u_1,\dots, u_n$ be the corresponding critical values.
    Firstly, we note that they are also the critical points of $\log\lambda$. On the other hand, if we denote by $y_\mu:=(\log\lambda)(q_\mu)$ the corresponding critical values, the following obvious relation holds for $\mu=1,\dots, n$: $u_\mu=e^{y_\mu}$. As noted e.g. in \cite[Lemma 1.55]{RileyPhD}, the critical values $y_1,\dots, y_n$ of $\log\lambda$ are canonical coordinates for the dual product. For convenience, we introduce the coordinates $\xi_1,\dots, \xi_n$ on $\Sigma_g$ centered at the critical point $q_1,\dots, q_n$, respectively. As stated in \cite{Dubrovin1996}, the superpotential has the following asymptotic expansion around its critical points:
    \begin{equation}\label{eq:lambdacritical}
            \lambda(\xi_\mu)=u_\mu-\frac{1}{2\eta_\mu}\,\xi_\mu^2+\order{\xi_\mu^3}\,,\qquad \mu=1,\dots, n\,,
    \end{equation}
    where $\eta_\mu$ is the $\mu^{\mathrm{th}}$ diagonal entry of the (diagonal) matrix representing $\eta$ in canonical coordinates.
    It then easily follows that a completely analogous expansion applies to its logarithm:
    \[
   \Lambda(\xi_\mu):= (\log\lambda)(\xi_\mu)=y_\mu-\frac{e^{-y_\mu}}{2\eta_\mu}\,\xi_\mu^2++\order{\xi_\mu^3}\,.
    \]
In particular, $(\log\lambda)_{ab}$ is holomorphic at $0$.
    
    On the other hand:
 \[
 \begin{aligned}
     \frac{\Lambda_a\Lambda_b-g^{cd}\,\Frst_{abc}\Lambda_d}{\Lambda'}&=\frac{\eta_\mu e^{y_\mu}}{\xi_\mu}\biggl[\pdv{y_\mu}{w^a}\pdv{y_{\mu}}{w^b}-(c^\star)\indices{^d_{ab}}\pdv{y_\mu}{w^d}\biggr]+\order{1}\\
     &=\frac{\eta_\mu e^{y_\mu}}{\xi_\mu}\biggl[\pdv{y_\mu}{w^a}\pdv{y_\mu}{w^b}-\sum (c^\star)\indices{^q_{\nu\rho}}\pdv{w^d}{y_\alpha}\pdv{y_\nu}{w^a}\pdv{y_\rho}{w^b}\pdv{y_\mu}{w^d} \biggr]+\order{1}\\
     &=\frac{\eta_\mu e^{y_\mu}}{\xi_\mu}\biggl[\pdv{y_\mu}{w^a}\pdv{y_\mu}{w^b}-\sum\delta\indices{^\mu_\nu}\delta\indices{^\mu_\rho}\pdv{y_\nu}{w^a}\pdv{y_\rho}{w^b}\biggr]+\order{1}\\
     &=\frac{\eta_\mu e^{y_\mu}}{\xi_\mu}\biggl[\pdv{y_\mu}{w^a}\pdv{y_\mu}{w^b}-\pdv{y_\mu}{w^a}\pdv{y_\mu}{w^b}\biggr]+\order{1}=\order{1}\,.
 \end{aligned}
 \]
 This proves the statement.
\end{proof}
\begin{lem}
    The functions $\{K_{ab}\}_{1\leq a\leq b\leq n}$ in \cref{eq:psiab} are holomorphic at the zeros of $\lambda$.
\end{lem}
\begin{proof}
    Let $x_0$ be a zero of $\lambda$ with multiplicity $\ell$, and $\xi$ be a coordinate on $\Sigma_g$ centered at $x_0$. Then, $\lambda$ admits a Laurent series expansion around $x_0$ of the form:
    \[
    \lambda(\xi)=\kappa_1\xi^\ell-\kappa_2\xi^{\ell+1}+\order{\xi^{\ell+2}}\,,
    \]
    for suitable functions $\kappa_1\,$ $\kappa_2$ of the flat coordinates of $g\,$, with $\kappa_1\neq 0\,$.\\
    Considering the expression for $K_{ab}$ given in the second line of \cref{eq:psiab}, one has the following expansions for the two summands in a neighbourhood of $x_0$:
    \[
    \begin{aligned}
        \frac{\lambda\lambda_{ab}-\lambda_a\lambda_b}{\lambda^2}&=\tfrac{\kappa_1^{-2}}{\bigl[1-\tfrac{\kappa_2}{\kappa_1}\xi+\order{\xi^2}\bigr]^2}\frac{1}{\xi^{2\ell}}\bigl[\bigl(\kappa_1\kappa_{1,ab}-\kappa_{1,a}\kappa_{1,b}\bigr)\xi^{2\ell}+\order{\xi^{2\ell+1}}\bigr]\\&=\tfrac{\kappa_1^{-2}}{\bigl[1-\tfrac{\kappa_2}{\kappa_1}\xi+\order{\xi^2}\bigr]^2}\bigl[\kappa_1\kappa_{1,ab}-\kappa_{1,a}\kappa_{1,b}+\order{\xi}\bigr]\,,
        \end{aligned}
        \]\[
    \begin{aligned}
       \tfrac{1}{\lambda'}\bigl(\tfrac{1}{\lambda}\lambda_a\lambda_b-g^{cd}\,\Frst_{abc} \lambda_d\bigr)&=\tfrac{(\ell\kappa_1)^{-1}}{1-\tfrac{\ell+1}{\ell}\tfrac{\kappa_2}{\kappa_1}\xi+\order{\xi^2}}\frac{1}{\xi^{\ell-1}}\biggl[\tfrac{\kappa_1^{-1}}{1-\tfrac{\kappa_2}{\kappa_1}\xi+\order{\xi^2}}\frac{\kappa_{1,a}\kappa_{1,b}\xi^{2\ell}+\order{\xi^{2\ell+1}}}{\xi^\ell}+\order{\xi^\ell}\biggr]\\
       &=\tfrac{(\ell\kappa_1)^{-1}}{1-\tfrac{\ell+1}{\ell}\tfrac{\kappa_2}{\kappa_1}\xi+\order{\xi^2}}\bigl[\bigl(\tfrac{1}{\kappa_1}\kappa_{1,a}\kappa_{1,b}-g^{cd}\Frst_{abc}\kappa_{1,d}\bigr)\xi+\order{\xi^2}\bigr]  \,.  \end{aligned}
    \]
\end{proof}
\begin{lem}\label{lem:lemma3}
Each of the functions $K_{ab}$ defined in \cref{lem:lemma1} is holomorphic at the poles of $\lambda$.
\end{lem}
\begin{proof}
Let $x_0\in\Sigma_g$ be a pole of $\lambda$ of order $\ell$ and $\xi$ denote a local coordinate centered at $x_0$.\\ Then, $\lambda$ admits a Laurent series expansion in an annulus centered at $0$ as follows:
\[
\lambda(\xi)=\tfrac{1}{\xi^\ell}c_\ell+\dots+\tfrac{1}{\xi}c_1+c_0+\order{\xi}\,,\qquad c_\ell\neq0\,.
\]
As a consequence:
\[
\begin{aligned}
    \tfrac{1}{\lambda^2}\bigl(\lambda\lambda_{ab}-\lambda_a\lambda_b\bigr)&=\frac{1}{\bigl[1+\tfrac{c_{\ell-1}}{c_\ell}\xi+\order{\xi^2}\bigr]^2}\frac{\xi^{2\ell}}{c_\ell^2}\bigl[\bigl(\tfrac{1}{\xi^\ell}c_\ell+\order{\xi^{1-\ell}}\bigr)\bigl(\tfrac{1}{\xi^\ell}c_{\ell,ab}+\order{\xi^{1-\ell}}\bigr)+\\&\qquad\qquad\qquad\qquad\qquad\qquad-\bigl(\tfrac{1}{\xi^\ell}c_{\ell,a}+\order{\xi^{1-\ell}}\bigr)\bigl(\tfrac{1}{\xi^\ell}c_{\ell,b}+\order{\xi^{1-\ell}}\bigr)\bigr]\\
    &=\frac{1}{c_\ell^2}\frac{c_\ell c_{\ell,ab}-c_{\ell,a}c_{\ell,b}+\order{\xi}}{\bigl[1+\tfrac{c_{\ell-1}}{c_\ell}\xi+\order{\xi^2}\bigr]^2}=\frac{c_\ell c_{\ell,ab}-c_{\ell,a}c_{\ell,b}}{c_\ell^2}+\order{\xi}\,,
    \end{aligned}\]
    \[\begin{aligned}
    \tfrac{1}{\lambda'}\bigl[\tfrac{1}{\lambda}\lambda_a\lambda_b-g^{cd}\Frst_{abc}\lambda_d\bigr]&=\frac{-\xi^{\ell+1}}{\ell c_\ell\bigl[1+\tfrac{\ell-1}{\ell}\tfrac{c_{\ell-1}}{c_\ell}\xi+\order{\xi^2}\bigr]}\biggl[\frac{\xi^\ell\bigl(\tfrac{1}{\xi^\ell}c_{\ell,a}+\order{\xi^{1-\ell}}\bigr)\bigl(\tfrac{1}{\xi^\ell}c_{\ell,b}+\order{\xi^{1-\ell}}\bigr)}{c_\ell\bigl[1+\tfrac{c_{\ell-1}}{c_\ell}\xi+\order{\xi^2}\bigr]}+\\
    &\qquad\qquad\qquad\qquad\qquad\qquad\quad-g^{cd}\Frst_{abc}\bigl(\tfrac{1}{\xi^\ell}c_\ell+\order{\xi^{1-\ell}}\bigr)\biggr]\\
    &=\frac{-1}{\ell c_\ell\bigl[1+\tfrac{\ell-1}{\ell}\tfrac{c_{\ell-1}}{c_\ell}\xi+\order{\xi^2}\bigr]}\biggl[\bigl(\tfrac{1}{c_\ell}c_{\ell,a}c_{\ell,b}-g^{cd}\Frst_{abc}c_{\ell,d}\bigr)\xi+\order{\xi^2}\biggr]\,.
\end{aligned}
\]
This proves the statement.
\end{proof}
\begin{lem}\label{lem:lemma2}
Let $\{\gamma_1,\dots,\gamma_n\}$ be a basis of $\Lambda^\ast(z)$, as defined in \cite[Section 4.1 and Proposition 3]{MR2429320}, and assume that the twisted periods:
\begin{equation}\label{eq:twistedperiods}
    \widetilde{w}_k:=\int_{\gamma_k}\lambda^z\omega\,,\qquad k=1,\dots, n
\end{equation}
give a system of flat coordinates for the deformed dual connection ${}^z\nabla:={}^g\nabla+z\,\star$. Then, for any $\gamma\in \Lambda^\ast(z)$ and for any $a\leq b\in\{1,\dots,n\}$:
      \[
      \int_{\gamma}K_{ab}\,\omega=0\,.
      \]
\end{lem}
\begin{proof}
By assumption, the twisted periods satisfy:
    \[
    \begin{aligned}
        0&=\widetilde{w}_{k,ab}-z\,g^{cd}\,\Frst_{abc}\,\widetilde{w}_{k,d}\\
&=\int_{\gamma_k}\bigl[z(\log\lambda)_{ab}+z^2(\log\lambda)_a(\log\lambda)_b-z^2g^{cd}\Frst_{abc}(\log\lambda)_d\bigr]\,\lambda^z\omega\\
        &=z\int_{\gamma_k}K_{ab}\,\lambda^z\omega\,.
    \end{aligned}
    \]
Since the equality is true for any $z$, it follows that the integral must vanish. Setting, then, $z=0$ proves the statement.
\end{proof}
\begin{oss}
   In particular, we recall that $\Lambda^\ast(z):=H_1(C_g^z\,,\,\partial C_g^z)\,$, where $C_g^z$ is the manifold with boundary obtained as follows. Firstly, we remove from $C_g$ the interior of $k+1$ disks $D_0,\dots, D_k$ respectively centered at the pole $\infty_0,\dots, \infty_k$ of the superpotential. Then, one replaces each boundary component of such manifold -- i.e. each circle $\partial D_0,\dots, \partial D_k$ -- with the $n_k+1$ arcs of circle determined by the condition $\arg\{z\lambda\}\in \bigl(\tfrac{\pi}{2}\,,\,\tfrac{3}{2}\pi\bigl)\,$.
\end{oss}
\begin{oss}
    If, rather, one has the following twisted periods as flat coordinates of the deformed dual connection ${}^z\nabla$: $\widetilde{w}_k=\int_{\gamma_k}\lambda^{z+\zeta}\omega$ for some $\zeta\in\CC\,$, the result will still hold. As a matter of fact, a similar calculation gives:
    \[
    \begin{aligned}
        0&=\widetilde{w}_{k,ab}-z\,g^{cd}\,\Frst_{abc}\,\widetilde{w}_{k,d}\\
        &=(z+\zeta)\int_{\gamma_k}K_{ab}\,\lambda^{z+\zeta}\omega-\zeta\, g^{cd}\Frst_{abc}\int_{\gamma_k}\tfrac{\dd}{\dd{x}}\bigl(\tfrac{\lambda_d}{\lambda'}\bigr)\,\,\lambda^{z+\zeta}\omega\,.
    \end{aligned}
    \]
    Since this must hold true identically in $z$, it follows that the two integrals vanish independently. This is what happens in the examples in \cite[Section 5.1]{Dub04} when $N\neq 1$, i.e. for the spaces of miniversal unfoldings of surface singularities of type $D$ and $E$. We shall discuss these two cases in more details in \cref{ss:typeDE}.
\end{oss}

\begin{thm}\label{thm::main}
Let $\Hw_{g;\bm{n}}^\omega$ be a Hurwitz Frobenius manifold of dimension $n$ with Landau-Ginzburg superpotential $\lambda$ and primary form $\omega$, $\Hw_{g;\bm{n}}^\star$ denote its Dubrovin dual.

If the twisted periods \cref{eq:twistedperiods} give a set of flat coordinates for the deformed dual connection ${}^z\nabla:={}^g\nabla+z\,\ast\,$, then there exists a unique almost-flat F-manifold structure $\UC_{g;\bm{n}}^\star$, on an open subset of the universal curve $\UC_{g;\bm{n}}$, that extends $\Hw_{g;\bm{n}}^\star$ in such a way that:
\begin{itemize}
    \item if $w_1,\dots, w_n$ is a system of flat coordinates for $\Hw_{g,\bm{n}}^\star$ on $U$, then let $x$ be a coordinate on the fibres over the points of $U$ such that $\omega=\dd{x}$. Then $w_1,\dots, w_n,x$ is an adapted system of flat coordinates on an open subset of $\pi^{-1}(U)$ in $\UC^\star_{g;\bm{n}}$;
    \item in the given adapted system, the extended prepotential $\Omega$ satisfies $\Omega'=\log\lambda$, where $'\equiv\partial_x$.
\end{itemize}
\end{thm}
\begin{proof}
From \cref{lem:lemma1} and \cref{lem:lemma3}, we find that each of the functions $K_{ab}$ are holomorphic and bounded on a closed Riemann surface, and therefore they are constant. In particular, since $\omega\neq 0$, there will be some $\gamma\in\Lambda^\ast(z)$ such that $\int_{\gamma}\omega\neq 0$. As a consequence, such a constant needs to be zero due to \cref{lem:lemma2}. In other words, we have that:
\begin{equation}\label{eq:conditionAlcextension}
    \pdv[2]{\log\lambda}{w^a}{w^b}=\dv{x}\biggl[\frac{1}{(\log\lambda)'}\biggl(\pdv{\log\lambda}{w^a}\pdv{\log\lambda}{w^b}-g^{cd}\,\Frst_{abc}\pdv{\log\lambda}{w^d}\biggr)\biggr]\,.
\end{equation}
The statement follows from \cite[Theorem 3]{alcoladophd}.
\end{proof}
\begin{oss}
    It may seem striking at first that the flat coordinate along the fibres is the same for the extension of the Frobenius manifold structure and for its Dubrovin's dual. However, it is quite natural if one recalls that the intersection form $g$ is given, in terms of the superpotential, by the same expression that gives $\eta$, after the replacement of $\lambda$ with $\log\lambda$. In particular, the primary form remains unchanged.
\end{oss}
\begin{oss}
The condition that $\omega=\dd{x}$ may also seem too restrictive, as one is used to flat coordinates being defined up to affine transformations. Indeed, one can slightly loosen it up to make up for this fact. Here, we should be looking at affine transformations that preserve the splitting into coordinates along the base space and the fibre, hence we want to allow affine transformations of $x$ of the form $x\mapsto a\,x+b$. Clearly, the differential is blind to shifts, i.e. $\dd{x}=\dd{(x+b)}$, hence we only need to discuss rescaling. Let, then, $y=ax$ for some $a\in \CC$. As a function of $x$, we know that $\Omega$ satisfies the open WDVV equations. Using a straightforward change of variables, one has:
\[
\begin{aligned}
    \Omega_{xx}\Omega_{\alpha\beta}&=\Omega_{\alpha x}\Omega_{\beta x}-c\indices{^\mu_{\alpha\beta}}\Omega_{\alpha \mu}\\
    (a\Omega)_{yy}(a\Omega)_{\alpha\beta}&=(a\Omega)_{\alpha y}(a\Omega)_{\beta y}-c\indices{^\mu_{\alpha\beta}}(a\Omega)_{\mu y}\,.
\end{aligned}
\]
As a consequence, $a\Omega$ satisfies the open WDVV equations in the new set of variables. Equivalently, in terms of the Landau-Ginzburg model, if $\omega=a\dd{x}$, the result of \cref{thm::main} still holds true -- so, in particular, $x$ is still a flat coordinate -- but the extended prepotential will satisfy $\Omega_x=a\,\log\lambda$ instead. This will be useful in some of the examples below.
\end{oss}
\subsection{Twisted periods and homogeneity of the superpotential}
In \cref{thm::main}, it is assumed that the twisted periods \cref{eq:twistedperiods} are flat coordinate for the deformed dual connection, in order to have the extension of the dual structure to the universal curve. We shall, now, discuss a sufficient condition for that to hold.

As a starting point, in \cite[Proposition 5.1]{Dub04}, the case of the Saito Frobenius manifolds on the space of parameters of miniversal deformations of isolated surface singularities of type $ADE$ is discussed. In particular, the twisted periods are proven to be flat coordinates for the deformed dual connection as a consequence of the fact that the miniversal deformation $f:\CC^2\times \CC^\ell\to \CC$ itself satisfies two identities. The first one is the following: 
    \begin{equation}\label{eq:openwdvvf}
    f_\alpha f_\beta=\eta^{\mu\nu}\Fr_{\mu\alpha\beta}f_\nu+f_z K\indices{^z_{\alpha\beta}}+f_w K\indices{^w_{\alpha\beta}}\,,
    \end{equation}
    where $\partial_zK\indices{^z_{\alpha\beta}}+\partial_w K\indices{^w_{\alpha\beta}}=f_{\alpha\beta}$, in flat coordinates $\bm{v}\in\CC^\ell$ of $\eta$ \cite{tftsperiodintegrals}. This identity is a consequence of the fact that the algebra structure on the tangent spaces is constructed from deformations of the Jacobian ring of the corresponding surface singularity. Eq. \ref{eq:openwdvvf} will, therefore, hold whenever $f$ is an admissible polynomial in the sense of Fan-Jarvis-Ruan-Witten theory.
    
    As described in \cites{topstringsd<1,LW96,DubLiuZhangDEspotentials}, Saito Frobenius manifold of type $ADE$ all admit a Landau-Ginzburg description with superpotential $\lambda$ given by restricting the miniversal deformation to the curve $f_w=0\,$. See \cref{ss:typeDE} for further details. Therefore, in all these cases, the superpotential satisfies:
    \begin{equation}\label{eq:openwdvvlambda}
        \lambda_\alpha\lambda_\beta=\eta^{\mu\nu}\Fr_{\mu\alpha\beta}\lambda_\nu+K_{\alpha\beta}\,\lambda'\,,\qquad K_{\alpha\beta}'=\lambda_{\alpha\beta}\,.
    \end{equation}
    
    Now, if we define $\Omega$ by $\Omega'=\lambda$, then it is clear that the previous equation is equivalent to $\Omega$ being a solution to the second set of open WDVV equations in \cref{eq:openWDVV}. Equivalently, solving for $K_{\alpha\beta}$ and then taking the $x$ derivative of both sides gives that $\lambda$ satisfies the condition in \cref{thm:alcolado}. As a consequence, \cref{eq:openwdvvlambda} is equivalent to the existence of a flat F-manifold on the universal curve $\UC_{g;\bm{n}}$ extending the given Hurwitz Frobenius manifold.
    
    On the other hand, in \cite{dealmeida2025openhurwitzflatf}, the existence of the rank-one extension is proven starting from the twisted period representation for the flat coordinates of the deformed connection ${}^\eta\nabla+z\,\bullet\,$. In particular, it is shown that this implies \cref{eq:openwdvvlambda}. A key point in the proof is the identity:
 \begin{equation}\label{eq:twistedperiodsareflat}
         \pdv[2]{}{v_\alpha}{v_\beta}\int_\gamma e^{z\lambda}\omega-\eta^{\mu\nu}c_{\mu\alpha\beta}\pdv{v_\nu}\int_\gamma e^{z\lambda}\omega=z\int_\gamma\biggl[\lambda_{\alpha\beta}-\dv{x}\biggl(\frac{\lambda_\alpha\lambda_\beta-\eta^{\mu\nu}c_{\mu\alpha\beta}\lambda_\nu}{\lambda'}\biggr)\biggr]\,e^{z\lambda}\omega\,.
 \end{equation}
The point is, then, proving that the integrand is constant by showing that it is holomorphic. 

Vice versa, if \cref{eq:openwdvvlambda}, then clearly the right-hand side of \cref{eq:twistedperiodsareflat} vanishes, and, therefore, the twisted periods are a solution to the Gauss-Manin system for the deformed connection.

 The two statements are, therefore, equivalent. The superpotential $\lambda$ of a Landau-Ginzburg model $(\lambda,\omega)$ satisfies \cref{eq:openwdvvlambda} -- in a system of flat coordinates $\bm{v}$ for the flat metric it defines by Grothendieck residues, \cref{eq:etaLGmodel} -- if and only if the corresponding twisted periods are a system of flat coordinates for the deformed connection spanned by the very same metric.

 As for the dual structure, on the other hand, the same result will hold: the twisted periods \cref{eq:twistedperiods} are flat coordinates for the deformed dual connection ${}^g\nabla+z\,\ast$ if and only if $\log\lambda$ satisfies \cref{eq:openwdvvlambda} in a system of flat coordinates of $g$ -- where, crucially, the structure constants are the ones of the dual multiplication. It is natural to, then, ask what is the relation between the twisted period representations of the flat coordinates of the two pencils, which is equivalent to determining when both structures have an extension of rank one to the universal curve. Already in the case of $ADE$ singularities, however, as discussed in \cite[Section 5.1]{Dub04}, \cref{eq:openwdvvlambda} must be supplemented by a second identity -- which is the explicit expression of $\Lie_E\lambda\,$, \cite[Eq. 5.8]{Dub04} -- to prove the corresponding twisted periods give flat coordinates for the deformed dual connection. 
 
We start our discussion by noticing that the computations in \cite[Proposition 5.1]{Dub04} do not depend on the explicit example considered:
 \begin{prop}\label{prop:twistedperiodssufficient}
     Let $\bm{v}$ be a system of flat coordinates on an open subset $U$ the Hurwitz Frobenius manifold $\Hw_{g\,;\,\bm{n}}$ and $x$ a coordinate along the fibres over the points in $U$ of the universal curve such that $\omega=\dd{x}\,$. If the Landau-Ginzburg superpotential $\lambda$ satisfies:
     \begin{align}
     \lambda_\alpha\,\lambda_\beta&=\sum_{\mu=1}^n c\indices{^\mu_{\alpha\beta}}\lambda_\mu+K_{\alpha\beta}\,\lambda_x\,,\\
      \label{eq:LieElambda}   \Lie_E\lambda&=\lambda-\bigl[\tfrac12(1-d)x+d_0\bigr]\,\lambda_x\,,
     \end{align}
     for some $K_{\alpha\beta}$ such that $\partial_xK_{\alpha\beta}=\lambda_{\alpha\beta}\,$ and $d_0\in\CC$, then the twisted periods \cref{eq:twistedperiods} give a system of flat coordinates for the deformed dual connection ${}^g\nabla+z\,\ast\,$.
 \end{prop}
 \begin{proof}
     The proof is formally the same as \cite[Proposition 5.1]{Dub04}. In particular, we want to show that it satisfies \cite[eq. 3.10]{Dub04}:
   \[
    \begin{aligned}
       \mathcal{U}\indices{^\nu_\beta}\,\widetilde{w}_{\alpha\nu}=\bigl(z+\tfrac{1-d}{2}\bigr)\,c\indices{^\nu_{\alpha\beta}}\,\widetilde{w}_\nu-\partial_\nu E^\mu \,\widetilde{w}_\mu\,c\indices{^\nu_{\alpha\beta}}\,,
    \end{aligned}
    \]
   where $\mathcal{U}$ is the matrix representing the endomorphism of the tangent bundle $X\mapsto E\bullet X$ in the system of flat coordinates $\bm{v}\,$.
   
Using \cref{eq:openwdvvlambda}, we obtain:
\[
\widetilde{w}_{\alpha\beta}=z(z-1)\,c\indices{^\mu_{\alpha\beta}}\,\int_\gamma\lambda_\mu\,\lambda^{z-2}\dd{x}\,.
\]
As a consequence, using associativity:
\[
\begin{aligned}
   \mathcal{U}\indices{^\nu_\beta}\,\widetilde{w}_{\alpha\nu}&=z(z-1) c\indices{^\nu_{\alpha\beta}}\int_\gamma E^\mu\,c\indices{^\rho_{\mu\nu}} \lambda_\rho \,\lambda^{z-2}\,\dd{x}\,.
\end{aligned}
\]
Now, multiplying \cref{eq:LieElambda} by $\lambda_\nu$ and using \cref{eq:openwdvvlambda}, we get the identity:
\[
\lambda\,\lambda_\nu=E^\mu c\indices{^\rho_{\mu\nu}}\lambda_\rho+\bigl[E^\mu K_{\mu\nu}+\bigl(\tfrac{1-d}{2}x+d_0\bigr)\lambda_\nu\bigr]\lambda_x\,.
\] Plugging this into the previous equation gives:
\[
\begin{aligned}
     \mathcal{U}\indices{^\nu_\beta}\,\widetilde{w}_{\alpha\nu}&=(z-1)\,c\indices{^\nu_{\alpha\beta}}\,\widetilde{w}_{\nu}-z\,c\indices{^\nu_{\alpha\beta}}\int_\gamma\bigl[E^\mu K_{\mu\nu}+\bigl(\tfrac{1-d}{2}x+d_0\bigr)\lambda_\nu\bigr]\,(\lambda^{z-1})_x\,\dd{x}\\
     &=(z-1)\,c\indices{^\nu_{\alpha\beta}}\,\widetilde{w}_{\nu}+z\,c\indices{^\nu_{\alpha\beta}}\int_\gamma\bigl[E^\mu \lambda_{\mu\nu}+\tfrac{1-d}{2}\lambda_\nu+\bigl(\tfrac{1-d}{2}x+d_0\bigr)\lambda_{\nu x}\bigr]\,\lambda^{z-1}\,\dd{x}\,,
\end{aligned}
\]
where we used $\partial_xK_{\alpha\beta}=\lambda_{\alpha\beta}\,$.
On the other hand, differentiating \cref{eq:LieElambda} along $v^\nu$ gives:
\[
\lambda_\nu-\partial_\nu E^\mu\,\lambda_\mu=E^\mu \lambda_{\mu\nu}+\bigl(\tfrac{1-d}{2}x+d_0\bigr)\lambda_{\nu x}\,.
\]
As a consequence, we finally get:
\[
\begin{aligned}
    \mathcal{U}\indices{^\nu_\beta}\,\widetilde{w}_{\alpha\nu}&=(z-1)\,c\indices{^\nu_{\alpha\beta}}\,\widetilde{w}_{\nu}+\bigl(1+\tfrac{1-d}{2}\bigr)\,c\indices{^\nu_{\alpha\beta}}\widetilde{w}_\nu-\partial_\nu E^\mu\,c\indices{^\nu_{\alpha\beta}}\widetilde{w}_\mu\\
   &=\bigl(z+\tfrac{1-d}{2}\bigr)\,c\indices{^\nu_{\alpha\beta}}\,\widetilde{w}_\nu-\partial_\nu E^\mu \,\widetilde{w}_\mu\,c\indices{^\nu_{\alpha\beta}}\,.
\end{aligned}
\]
 \end{proof}
 \begin{oss}
     $d_0$ can be set to zero by a change of coordinate on the curve whenever $d\neq 1\,$.
 \end{oss}
In particular, the first identity in the previous Proposition is equivalent to the analogous twisted period representation for the flat coordinates of ${}^\eta\nabla+z\,\bullet\,$, as we have just discussed. We, therefore, focus on the second one, starting from the following:
\begin{lem}
    Let $\bm{v}$ be flat coordinates for the Frobenius metric and $\widetilde{v}(\bm{v},z)$ be a solution to the Gauss-Manin system for the deformed connection ${}^\eta\nabla+z\,\bullet$ normalised so that:
\begin{equation}\label{eq:normalisationtildet}
    z\partial_z\widetilde{v}-\Lie_E\widetilde{v}=\bigl(\tfrac{d}{2}-1\bigr)\,\widetilde{v}\,.
\end{equation} 
    Then, the Laplace transform:
    \begin{equation}\label{eq:xtildeLaplace}
        \widetilde{x}(\bm{v},\lambda):=\int_\mathfrak{C}\tfrac{1}{\sqrt{z}}\widetilde{v}(\bm{v},z)\,e^{-z\lambda}\dd{z}\,,
    \end{equation}
    $\mathfrak{C}$ being a Hankel contour in $\RS\,$, is a solution to the Gauss-Manin system for the pencil of cometrics $g^\sharp-\lambda\,\eta^\sharp\,$.
\end{lem}
\begin{proof}
    This is a rephrasing of \cite[Proposition H.1]{Dubrovin1996} with a different normalisation for $\widetilde{v}$. In particular, the main point is that a similar identity as the one in \cite[Lemma H.3]{Dubrovin1996} still holds true, but with a different constant. In particular, it is:
    \[
  z\,  g^\sharp\bigl(\dd{v}_\alpha\,,\,\sum_\beta \dd{v}_\beta \,{}^g\nabla_\beta \dd\widetilde{v}\bigr)=\dd{\bigl(\eta^\sharp\bigl(\dd{v}_\alpha\,,\,z\partial_z\dd{\widetilde{v}}-\tfrac12\dd{\widetilde{v}}\bigr)\bigr)}\,,
    \]
    This is the reason for the different power of $z$ in the Laplace transform compared to \cite[Proposition H.1]{Dubrovin1996}.
\end{proof}
The reason for the different normalisation is the following:
\begin{lem}[\cite{Dubrovin1999}]
     Let $\bm{v}$ be flat coordinates for the Frobenius metric and $\widetilde{v}(\bm{v},z)$ be a solution to the Gauss-Manin system for the deformed connection ${}^\eta\nabla+z\,\bullet\,$. If $\widetilde{v}$ satisfies \cref{eq:normalisationtildet}, then it is a solution to the Gauss-Manin system for the extended Dubrovin connection.
\end{lem}
\begin{proof}
   This is a remark in the proof of \cite[Theorem 2.3]{Dubrovin1999}. In particular, differentiating \cref{eq:normalisationtildet} with respect to $t_\alpha$ and using the Gauss-Manin equation for the deformed connection, we have:
    \[
    \begin{aligned}
        z\,\partial_z\partial_\alpha\widetilde{v}&=\partial_\alpha E^\mu\,\partial_\mu\widetilde{v}+  E^\mu \partial_\mu\partial_\alpha \widetilde{v}+\bigl(\tfrac{d}{2}-1\bigr)\partial_\alpha\widetilde{v}\\
        &=zE^\mu c\indices{^\nu_{\mu\alpha}}\partial_\nu\widetilde{v}+\partial_\alpha E^\mu\,\partial_\mu\widetilde{v}-\bigl(1-\tfrac{d}{2}\bigr)\,\partial_\alpha \widetilde{v}\,.
        \end{aligned}
    \]
    The latter corresponds to $\widehat{\nabla}_{\partial_z}\dd\widetilde{v}=0\,$.

\end{proof}
\begin{cor}
    The solution $\widetilde{x}(\bm{v},\lambda)$ to the Gauss-Manin system for the flat pencil of cometrics $g^\sharp-\lambda\,\eta^\sharp$ in \cref{eq:xtildeLaplace} satisfies:
    \begin{equation}\label{eq:homogeneityxtilde}
        \lambda\partial_\lambda\widetilde{x}-\Lie_E\widetilde{x}=\tfrac12(1-d)\,\widetilde{x}\,.
    \end{equation}
\end{cor}
\begin{proof}
    This is an easy exercise with Laplace transforms:
    \[
    \begin{aligned}
        \lambda\partial_\lambda\widetilde{x}-\Lie_E\widetilde{x}&=\int_\mathfrak{C}\sqrt{z}\,\widetilde{v}\,(-\lambda)e^{-z\lambda}\dd{z}-\int_\mathfrak{C}\tfrac{1}{\sqrt{z}}(\Lie_E\widetilde{v})\,e^{-z\lambda}\dd{z}\\
        &=-\tfrac12\int_\mathfrak{C}\tfrac{1}{\sqrt{z}}\,\widetilde{v}\,e^{-z\lambda}\dd{z}-\int_\mathfrak{C}\tfrac{1}{\sqrt{z}}\bigl(z\partial_z\widetilde{v}-\Lie_E\widetilde{v}\bigr)\,e^{-z\lambda}\dd{z}\\
        &=\tfrac12(1-d)\,\widetilde{x}\,.
    \end{aligned}
    \]
\end{proof}
The relation between these results and the superpotential comes from Dubrovin's construction of a superpotential for any semi-simple Frobenius manifold in \cite[Appendix I]{Dubrovin1996}. In particular, the previous three results all lead to the following:
\begin{prop}\label{prop:LieElambda}
    If there is a complete set of solutions $\widetilde{v}_1,\dots, \widetilde{v}_n$ to the Gauss-Manin system for the deformed connection satisfying the weighted-homogeneity condition in \cref{eq:normalisationtildet}, then Dubrovin's Landau-Ginzburg superpotential $\lambda$ satisfies \cref{eq:LieElambda}.
\end{prop}
\begin{proof}
The construction in \cite[Appendix I]{Dubrovin1996} this relies on a system of flat coordinates $\widetilde{x}_1(\bm{u},\lambda)\,$,$\dots\,$, $\widetilde{x}_n(\bm{u},\lambda)$ for the pencil of flat metrics $g^\sharp-\lambda \eta^\sharp\,$, previously computed in \cite[Appendix H]{Dubrovin1996} -- $\bm{u}$ being canonical coordinates -- satisfying the same weighted-homogeneity condition as in \cref{eq:homogeneityxtilde}; see \cite[Eq. H.19]{Dubrovin1996}.

In particular, $\lambda(x,\bm{u})$ is the inverse function defined by \cite[Eq. I.10]{Dubrovin1996}:
\[
x-\widetilde{x}_1(\bm{u},\lambda)-\dots-\widetilde{x}_n(\bm{u},\lambda)=0\,.
\]
Notice that $x$ is precisely a coordinate on the curve such that $\omega=\dd{x}\,$, \cite[Theorem I.1]{Dubrovin1996}. 

Since the Euler vector field in canonical coordinates is simply $E=\sum_\mu u_\mu\partial_{u_\mu}\,$. It is, then, an easy exercise with the implicit function Theorem to show that:
\[
\tfrac12(1-d)x\,\lambda_x+\sum_\mu u_\mu\,\lambda_{\mu}=\frac{\tfrac12(1-d)x+\sum_\mu u_\mu x_\mu}{\pdv{x}{\lambda}}=\lambda\,.
\]
This proves the statement.
\end{proof}

Therefore, we finally have the following sufficient condition for the existence of both extensions:
\begin{thm}
    If there exist cycles $\{\gamma_1,\dots, \gamma_n\}$ in $\Lambda^\ast(z)$ such that twisted periods:
    \[
    \widetilde{v}_\alpha:=\int_{\gamma_\alpha}e^{z\lambda}\omega\,,\qquad \alpha=1,\dots, n\,,
    \]
    are flat coordinates for the deformed connection ${}^\eta\nabla+z\,\bullet$ satisfying \cref{eq:normalisationtildet}, then they are also flat coordinates for the extended Dubrovin connection and both the Frobenius manifold and the Dubrovin dual structure admit a rank-one extension to the universal curve, respectively given by the one in \cite{dealmeida2025openhurwitzflatf} and in \cref{thm::main}.
\end{thm}

\subsection{Generalisation to almost-flat Riemannian F-manifolds}
The remark regarding the equivalence of the twisted periods being flat coordinates for the pencil of the deformed connection and the existence of the rank-one extension is actually obviously independent of the fact that we are dealing with Hurwitz Frobenius manifolds. All the concept are well-defined in principle for any F-manifold with compatible flat connection. In particuar, we need to make the following:
\begin{defn} Let $D_g$ be an open domain of a closed Riemann surface $C_g$ of genus-$g$ and $M$ be a semi-simple almost-flat Riemannian F-manifold. A genus-$g$ \emph{Landau-Ginzburg model} for $M$ consists of a holomorphic function $\Lambda:\widetilde{M}_g\to\RS$ -- where $\widetilde{M}_g$ is a suitable line bundle over $M$ whose typical fibre is the universal cover $\widetilde{D}_g$ of $D_g$ -- and an Abelian differential $\omega$ on $\widetilde{M}_g$ satisfying:
    \begin{itemize}
        \item [(LG1)] The critical values of $\Lambda$ in $D_g$ are canonical coordinates for $M$.
         \item [(LG2)] The following expressions for the flat metric $\eta$ and the $\eta$-dual tensor field $c$ to the multiplication of $M$ hold:
        \[
        \begin{aligned}
         \eta(X,Y)&=\sum_{x\in \mathrm{Cr}(\Lambda)}\Res_x\{X(\Lambda)Y(\Lambda)\tfrac{\omega^2}{\dd\Lambda}\}\,,\\
         c(X,Y,Z)&=\sum_{x\in \mathrm{Cr}(\Lambda)}\Res_x\{X(\Lambda)Y(\Lambda)Z(\Lambda)\tfrac{\omega^2}{\dd\Lambda}\}\,,
        \end{aligned}
        \]
        where $ \mathrm{Cr}(\Lambda)$ is the set of critical points of $\Lambda$.
        \item [(LG3)] There exist cycles $\gamma_1,\dots, \gamma_n$ in $D_g$ such that a system of flat coordinates $\widetilde{v}_1,\dots,\widetilde{v}_n$ for the deformed connection ${}^z\nabla:={}^\eta\nabla+z\,\bullet$ of $M$ are given by the twisted periods:
        \[
\widetilde{v}_\mu=\int_{\gamma_\mu}e^{z\Lambda}\,\omega\,,\qquad \mu=1,\dots, n\,.
        \]
       \end{itemize}
    \end{defn}
\begin{oss}
    A general procedure endowing almost-flat Riemannian F-manifolds -- other than the ones coming from almost-duality with Frobenius manifolds -- with a Landau-Ginzburg model as just defined is, however, still unknown. The recent construction and study of the Gauss-Manin connections for bi-flat F-manifolds in \cite{arsieGMbiflat} suggests that such procedure might exist, since the Gauss-Manin connection is a crucial tool in the Landau-Ginzburg model of Frobenius manifolds, as described in \cite{Dubrovin1996}.
\end{oss}
 As a consequence of the discussion in the previous subsection, therefore, \cref{thm::main} is an instance of the following more general result:
    \begin{thm}\label{thm:extensiongeneral}
        Let $M$ be an almost-flat Riemannian F-manifold with genus-$g$ Landau-Ginzburg model $(\Lambda, \omega)\,$. $M$ admits a unique rank-one extension to an almost-flat F-manifold structure on $\widetilde{M}_g$ such that:
        \begin{itemize}
            \item If $v_1,\dots, v_n$ is a set of flat coordinates in an open set $U\subseteq M$, then let $x$ be a coordinate on the fibres of $\widetilde{M}_g$ over the points of $U$ such that $\omega=\dd{x}$. Then $v_1,\dots, v_n,x$ is an adapted set of flat coordinates on an open subset of $\pi^{-1}(U)\subseteq \widetilde{M}_g$.
            \item In such system of flat coordinates, the extended prepotential $\Omega$ satisfies $\Omega_x=\Lambda$.
        \end{itemize}
    \end{thm}

\section{Dubrovin duality on the universal curve}\label{sec:duality}
\documentclass[main.tex]{subfile}

\subsection{Eventual identity and bi-flat F-manifold structure}
Our result about the existence and uniqueness of the rank-one extension for Dubrovin duality of the Hurwitz Frobenius manifold fits, together with the one in \cite{ dealmeida2025openhurwitzflatf}, in the following diagram:
\begin{equation}\label{eq:almostudalityextension}
    \begin{tikzcd}
        \UC_{g;\bm{n}}\arrow[rr, twoheadrightarrow]\arrow[d, dashrightarrow, red, "?"]&& \Hw_{g;\bm{n}}\arrow[d, "(E\bullet)^{-1}"]\\
        \UC_{g;\bm{n}}^\star \arrow[rr, twoheadrightarrow]&&\Hw^\star_{g;\bm{n}}
    \end{tikzcd}\,,
\end{equation}
A natural question is, therefore, whether one can close up such a diagram by providing a map like the one represented by the red arrow.

Now, generalisations of almost-duality to F-manifolds were studied in \cite{maninflatfmanifolds, davidstrDubduality}. In particular, since F-manifolds lack the choice of an Euler vector field, Dubrovin duality in this case will rely on a so-called eventual identity.
\begin{defn}[\cite{maninflatfmanifolds}]
    Let $M$ be an F-manifold. An \emph{eventual identity} is an invertible vector field $\EE$ on $M$ such that the multiplication $\ast$, defined by:
    \begin{equation}\label{eq:eveidentity}
        X\ast Y:=\EE^{-1}\bullet X\bullet Y\,,\qquad\qquad \forall X,Y\in\X_M\,,
    \end{equation}
    makes $M$ into an F-manifold with identity $\EE$.
\end{defn}
\begin{oss}
    Eventual identities are usually built using the constructive equivalent characterisation in \cite{davidstrDubduality}:
        \begin{equation}
                    \Lie_\EE\widetilde{\bullet}=\comm{e}{\EE}\,\widetilde{\bullet}\,.
        \end{equation}
\end{oss}

We start by stating the following expression for the extended multiplication $\widetilde{\bullet}$ on horizontal vector fields, which is a straightforward consequence of the open WDVV equations:
\begin{lem}\label{lem:multiplicationhorizontal}
    Let $\UC_{g\,;\,\bm{n}}$ be the universal curve over the Hurwitz space $\Hw_{g\,;\,\bm{n}}$ equipped with a primary differential $\omega$. Let $\widetilde{\bullet}$ and $\widetilde{\ast}$ denote respectively the multiplications on $T\UC_{g\,;\,\bm{n}}$ coming from extending the Frobenius product $\bullet$ and its Dubrovin dual $\ast$ on $\Hw_{g\,;\,\bm{n}}$. If $X,Y$ are projectable vector fields on $\UC_{g\,;\,\bm{n}}$, then:
    \[
    \begin{aligned}
            \extprod{X}{Y}&=X\bullet Y+\tfrac{1}{\lambda'}\bigl(\Lie_X\lambda\,\Lie_Y\lambda-\Lie_{X\bullet Y}\lambda\bigr)\,\partial_x\,,\\
    \extdprod{X}{Y}&=X\ast Y+\tfrac{1}{\lambda'}\bigl(\tfrac{1}{\lambda}\Lie_X\lambda\,\Lie_Y\lambda-\Lie_{X\ast Y}\lambda\bigr)\,\partial_x\,.
    \end{aligned}
    \]
\end{lem}
\begin{proof}
    According to \cref{eq:multtableextension} and \cref{eq:openWDVV}:
    \[
    \begin{aligned}
            \extprod{X}{Y}&=X\bullet Y+X^\alpha Y^\beta \Omega_{\alpha\beta}\,\partial_x\\
            &=X\bullet Y+\tfrac{1}{\Omega''}X^\alpha Y^\beta\bigl(\Omega'_\alpha\Omega'_\beta-c\indices{^\mu_{\alpha\beta}}\Omega'_\mu\bigr)\,\partial_x\\
            &=X\bullet Y+\tfrac{1}{\lambda'}\bigl(X^\alpha\lambda_\alpha\,Y^\beta\lambda_\beta-(X\bullet Y)^\mu\lambda_\mu\bigr)\,\partial_x\,.
    \end{aligned}
    \]
The relation for the extended dual product follows similarly by replacing $\Omega$ with $\Omega^\star$ while keeping in mind that $\Omega^\star_x=\log\lambda\,$.
\end{proof}

As a first consequence, it is, then, easy to see that:
\begin{lem}
    If $e$ is the identity for the Frobenius manifold structure on $\Hw_{g\,;\,\bm{n}}$, then its pull-back $e\circ\pi$ is the identity for its unique rank-one extension to $\UC_{g\,;\,\bm{n}}$.
\end{lem}
\begin{proof}
    This follows from $\Lie_e\lambda=1$. Using the previous Lemma and the multiplication table in \cref{eq:multtableextension}:
    \[
    \begin{aligned}
        \extprod{e}{X}&=e\,\bullet \, X+\tfrac{1}{\lambda'}\bigl(\Lie_e\lambda\,\Lie_X\lambda-\Lie_{X}\lambda\bigr)\,\partial_x=X\,,\\
       \extprod{e}{\partial_x}&=(\Lie_e\lambda)\,\partial_x=\partial_x\,,
    \end{aligned}
    \] for any projectable vector field $X$, where we omitted the composition with $\pi$ on the right for clarity.
\end{proof}
Using this fact, we can start looking at the eventual identity $\EE$ on the universal curve using the following obvious result, which is implicitly already encoded in \cite{alcoladophd}.
\begin{lem}\label{lem:projectioneventualidentity}
Let $\EE$ be the identity of $\widetilde{\ast}$ on $\UC_{g\,;\,\bm{n}}^\star$. If $\EE$ is an eventual identity, then the horizontal components of $\EE$ and its $\widetilde{\bullet}$-inverse $\EE^{-1}$ are projectable and their projections coincide with $E$ and $E^{-1}$ respectively.
\end{lem}
\begin{proof}
    We specify \cref{eq:eveidentity} on $\UC_{g\,;\,\bm{n}}$ to the case of $X$ and $Y$ projectable. Then, we project the relation to the Hurwitz space point-wise:
    \[
    \pi_{*p}(\EE^{-1})_p\bullet X_{\pi(p)} \bullet  Y_{\pi(p)} =X_{\pi(p)} \ast  Y_{\pi(p)} =(E^{-1})_{\pi(p)} \bullet X_{\pi(p)} \bullet Y_{\pi(p)} \,.
    \]
    Taking, then $X=Y=e\circ\pi$, gives $\pi_{*p}(\EE^{-1})_p=E^{-1}_{\pi(p)}$. The other relation follows from $\EE\,\,\widetilde{\bullet }\,\,\EE^{-1}=e$, $e$ being projectable.
\end{proof}

As for the component along the fibres, we have:
\begin{prop}\label{prop:eventualidentityhere}
        Assume that the vertical dashed red arrow in \cref{eq:almostudalityextension} exists, and it is given by an eventual identity $\EE$ on $\UC_{g;\bm{n}}$ as in \cref{eq:eveidentity}. Then:
        \[
        \EE=E\circ\pi+\tfrac{1}{\lambda'}\bigl(\lambda-\Lie_E\lambda\bigr)\,\partial_x\,.
        \]
\end{prop}
\begin{proof}
According to the previous Lemma and the splitting of $T\UC_{g\,;\,\bm{n}}$ induced by $\pi$, $\EE=E\circ\pi+f\,\partial_x$ for some $f\in\Hol_{\UC_{g;\bm{n}}}$. We shall from now on omit composition by $\pi$ to the right of $E$ for the sake of clarity.
    We fix $f$ so that the multiplication $\widetilde{\ast}$ satisfies the multiplication table given in \cref{eq:multtableextension} with extended prepotential $\Omega^\star$ as in \cref{thm::main}. We start with the following:
    \[
    \begin{aligned}
\partial_x\,\,\widetilde{\bullet}\,\,\partial_x&=\EE\,\,\widetilde{\bullet}\,\,\partial_x\,\,\widetilde{\ast}\,\,\partial_x\\
\Omega_{xx}\,\partial_x&=\Omega^\star_{xx}\,\bigl(E\,\,\widetilde{\bullet}\,\,\partial_x+f\,\partial_x\,\,\widetilde{\bullet}\,\,\partial_x\bigr)\\
&=\Omega^\star_{xx}\bigl[(\Lie_E\Omega_x)+f\,\Omega_{xx}\bigr]\partial_x\,.
    \end{aligned}
    \]
    This gives:
    \[
    \begin{aligned}
        f&=\frac{1}{\Omega^*_{xx}}-\frac{1}{\Omega_{xx}}\,\Lie_E\Omega_x=\tfrac{1}{\lambda'}\bigl[\lambda-\Lie_E\lambda\bigr]\,.
    \end{aligned}
    \]
We, then, need to show that this choice for $f$ is consistent with the remaining relations in the multiplication table in \cref{eq:multtableextension}. Hence, for any projectable vector field $X$, we want to check:
\[
    \begin{aligned}
X\,\,\widetilde{\bullet}\,\,\partial_x&=\EE\,\,\widetilde{\bullet}\,\,X\,\,\widetilde{\ast}\,\,\partial_x\\
(\Lie_X\lambda)\,\partial_x&=(\Lie_X\log\lambda)\,\EE\,\,\widetilde{\bullet}\,\,\partial_x\\
&=\tfrac{1}{\lambda}(\Lie_X\lambda)\bigl[\Lie_E\lambda+\lambda' \,f\bigr]\partial_x=(\Lie_X\lambda)\,\partial_x\,.\\
\extdprod{\EE}{X}&=\extdprod{E}{X}+f\,\extdprod{\partial_x}{X}\\
&=E\,\ast\,X+\tfrac{\lambda}{\lambda'}\bigl(\tfrac{1}{\lambda^2}\,\Lie_E\lambda\,\Lie_X\lambda-\tfrac{1}{\lambda}\,\Lie_{E\ast X}\lambda\bigr)\partial_x+\tfrac{1}{\lambda}f\,(\Lie_X\lambda)\,\partial_x\\
&=X+\tfrac{1}{\lambda\lambda'}\bigl(\Lie_E\lambda-\lambda+\lambda' f\bigr)\,(\Lie_X\lambda)\,\partial_x=X\,,
    \end{aligned}
    \]
where we used the dual version of the formula in \cref{lem:multiplicationhorizontal}.
\end{proof}
\begin{oss}
    We recall that, in canonical coordinates, the Euler vector field is $E=\sum_\alpha u_\alpha\partial_\alpha$. As a consequence, one can compute $\Lie_E\lambda=\sum_\alpha u_\alpha\partial_\alpha\lambda$. Now, if we denote by $p_\alpha$ the critical points of $\lambda$, then it is $(\partial_\alpha\lambda)(p_\beta)=\delta_{\alpha\beta}$ \cite{Dubrovin1996}. It, then, follows that $\Lie_E\lambda$ is a function on the universal curve whose values at the critical points of $\lambda$ coincide with its critical values. Hence, $\Lie_E\lambda=\lambda-P \lambda'$ for some polynomial function $P$ on the universal curve. This $P$ is precisely the coefficient of $\partial_x$ in $\EE$. As a consequence, it is $\Lie_{\EE}\lambda=\lambda\,$.
\end{oss}
\begin{cor}
    The eventual identity $\EE$ on the universal curve $\UC_{g\,;\,\bm{n}}$ defined in \cref{prop:eventualidentityhere} is an affine vector field with respect to $\widetilde{\nabla}\,$. Equivalently, $\EE$ preserves $\widetilde{\nabla}\,$, i.e. $\Lie_\EE\widetilde{\nabla}=0\,$.
\end{cor}
\begin{proof}
    The Euler vector field $E$ on $\Hw_{g\,;\,\bm{n}}$ is the generator of the rescalings in the action induced by restriction of the $\mathrm{PGL}_2(\CC)$-action on the target $\RS\,$, \cite[eq. (5.6)]{Dubrovin1996}. Using Euler's Theorem for homogeneous functions, since $\lambda$ is a meromorphic function on $\UC_{g\,;\,\bm{n}}$, this means that there exists a vector field $X$ on the universal curve that is $\pi$-related to $E$, which is affine with respect to some flat affine connection $\nabla$ on $\UC_{g\;;\;\bm{n}}\,$, such that $\Lie_X\lambda=\lambda\,$. Now, the latter condition is equivalent to $X$ being the identity of the extended dual product, therefore $X=\EE\,$. Furthermore, since $\EE$ and $E$ are $\pi$-related, then $\nabla$ must coincide with the pull-back $\pi^*{}^\eta\nabla$ of the Levi-Civita connection of $\eta$ on the Hurwitz space, when restricted to $\pi^*T\Hw_{g\,;\,\bm{n}}\,$. From the uniqueness of the rank-one extension, it then follows that $\nabla=\widetilde{\nabla}\,$. This proves the first statement. Equivalence between the two formulations comes from the identity:
    \[
   ( \Lie_X\nabla)_YZ={}^\nabla R(X,Y)Z-\nabla^2_{Y,Z}X\,,
    \]
    for any three vector fields $X\,$, $Y$ and $Z$ and for any affine connection $\nabla\,$.
\end{proof}
\begin{oss}\label{rem:quasihomogOmega}
    In particular, according to \cref{prop:LieElambda}, it is:
    \begin{equation}
        \EE=E+\bigl[\tfrac12(1-d)x+d_0\bigr]\,\partial_x\,.
    \end{equation}
    As a consequence, when $d\neq 1$, the extended prepotential $\Omega$ for the extension of the Frobenius structure is a weighted-homogeneous solution with the same weights as the underlying WDVV solution, while the additional ones are given by:
    \begin{equation}
            \Lie_E\Omega+\tfrac12(1-d)x\,\Omega_x=\tfrac12(3-d)\,\Omega\,,
    \end{equation}
    modulo linear polynomials in the flat coordinates. This coincides with the weighted-homogeneity condition considered in \cites{horev2012opengromovwittenwelschingertheoryblowups,BCT1,BBvirasoro,BuryakOpenAD,PSTInttheory}, thus giving it a geometric meaning. In particular, the weighted-homogeneity of $\Omega$ is a direct consequence of the one of $\lambda\,$.
    \end{oss}

As for the inverse of the eventual identity, it is then an easy computation to show that:
\begin{cor}\label{cor:Einverse}
    Assume that the vertical dashed red arrow in \cref{eq:almostudalityextension} exists, and it is given by an eventual identity $\EE$ on $\UC_{g;\bm{n}}$ as in \cref{eq:eveidentity}. Then:
    \[
    \EE^{-1}=E^{-1}\circ\pi+\tfrac{1}{\lambda'}\bigl(\tfrac{1}{\lambda}-\Lie_{E^{-1}}\lambda\bigr)\,\partial_x\,.
    \]
\end{cor}
\begin{proof}
   According to \cref{lem:projectioneventualidentity}, it is $\EE^{-1}=E^{-1}+g\,\partial_x$ for some $g\in\Hol_{\UC_{g;\bm{n}}}$. We fix $g$ by requiring that $\EE^{-1}$ is the inverse of $\EE$ with respect to $\widetilde{\bullet}$:
    \[
    \begin{aligned}
        e&=\EE\,\,\widetilde{\bullet}\,\,\EE^{-1}\\
        &=e+\tfrac{1}{\lambda'}\bigl(\Lie_E\lambda\,\Lie_{E^{-1}}\lambda-\Lie_e\lambda\bigr)\,\partial_x+f\,\partial_x\,\,\widetilde{\bullet}\,\,E^{-1}+g\,E\,\,\widetilde{\bullet}\,\,\partial_x+fg\,\partial_x\,\,\widetilde{\bullet}\,\,\partial_x\\
        &=e+\tfrac{1}{\lambda'}\bigl[\Lie_{E^{-1}}\lambda\,(\Lie_E\lambda+\lambda'f)-1+\lambda'g(\Lie_E\lambda+\lambda'f)\bigr]\,\partial_x\\
        &=e+\tfrac{1}{\lambda'}\bigl[\lambda\,\Lie_{E^{-1}}\lambda-1+\lambda\lambda' g\bigr]\,\partial_x\,.
    \end{aligned}
    \]
Setting the coefficient of $\partial_x$ to zero proves the statement. 
\end{proof}
\begin{oss}
    Again, we notice that, since $E^{-1}$ is the $\bullet$-inverse of $E$, it must be $\Lie_{E^{-1}}\lambda=\lambda^{-1}+<\,\lambda'\,>$. Therefore, the vertical component of $\EE^{-1}$ measures how far $\Lie_{E^{-1}}\lambda$ is from $\lambda^{-1}\,$. In particular, it, then, is $\Lie_{\EE^{-1}}\lambda=\lambda^{-1}\,$.
\end{oss}

For consistency, we also check that:
    \begin{prop}\label{prop:homogeneityE}
$\EE$ is an Euler vector field of weight one with respect to $\widetilde{\bullet}\,$, i.e.: \[\Lie_{\EE}\widetilde{\bullet}=\widetilde{\bullet}\,.\]
    \end{prop}
    \begin{proof}
    It is sufficient to check the equality $(\Lie_\EE\widetilde{\bullet})(X,Y)=\extprod{X}{Y}$ when $X$ and $Y$ are both projectable, when either of them is $\partial_x$ and finally when both of them are $\partial_x$. We start from the last case.
    It is easy to see that $\Lie_X\partial_x=0$ whenever $X$ is a projectable vector field. Then, if we write $\EE=E+f\,\partial_x$:
        \[
        \begin{aligned}
        \bigl(\Lie_{\EE}\widetilde{\bullet}\bigr)(\partial_x,\partial_x)&=\comm{\EE}{\extprod{\partial_x}{\partial_x}}-2\,\extprod{\comm{\EE}{\partial_x}}{\partial_x}\\
        &=\comm{\EE}{\lambda'\partial_x}+2f'\,\extprod{\partial_x}{\partial_x}\\
        &=(\Lie_\EE\lambda')\,\partial_x+\lambda'\,\comm{\EE}{\partial_x}+2f'\lambda'\,\partial_x\\
        &=\lambda'\,\partial_x+(\Lie_{\comm{\EE}{\partial_x}}\lambda)\,\partial_x+f'\lambda'\,\partial_x\\
        &=\lambda'\,\partial_x=\extprod{\partial_x}{\partial_x}\,,
        \end{aligned}
        \]
  where we used the fact that the map $X\mapsto \Lie_X$ is a Lie algebra homomorphism. As for the second case:
        \[
        \begin{aligned}
             \bigl(\Lie_{\EE}\widetilde{\bullet}\bigr)(X,\partial_x)&=\Lie_\EE(\extprod{X}{\partial_x})-\extprod{(\Lie_\EE X)}{\partial_x}-\extprod{X}{(\Lie_\EE\partial_x)}\\
             &=\comm{E+f\partial_x}{(\Lie_X\lambda)\partial_x}-\extprod{\bigl(\comm{E}{X}-(\Lie_Xf)\partial_x\bigr)}{\partial_x}+\extprod{X}{f'\partial_x}\\
             &=\bigl[(\Lie_E\Lie_X\lambda)+f(\Lie_X\lambda')-f'(\Lie_X\lambda)-\Lie_{\comm{E}{X}}\lambda+\lambda'(\Lie_Xf)+f'(\Lie_X\lambda)\bigr]\,\partial_x\\
             &=\Lie_X\bigl(\Lie_E\lambda+f\lambda'\bigr)\,\partial_x=(\Lie_X\lambda)\,\partial_x=\extprod{X}{\partial_x}\,,
        \end{aligned}
        \]
       When $X$ and $Y$ are projectable, we denote $\Lambda(X,Y):=\Lie_X\lambda\,\Lie_Y\lambda-\Lie_{X\bullet Y}\lambda\,$, so that one has:
       \[
       \begin{aligned}
           \Lie_\EE(\extprod{X}{Y})&=\Lie_\EE( X\bullet Y)+\comm{\EE}{\tfrac{1}{\lambda'}\Lambda(X,Y)\,\partial_x}\\
           &=\Lie_E( X\bullet Y)-(\Lie_{X\bullet Y}f)\,\partial_x+\Lie_\EE\bigl(\tfrac{1}{\lambda'}\Lambda(X,Y)\bigr)\,\partial_x-\tfrac{1}{\lambda'}f'\Lambda(X,Y)\,\partial_x\\
           &=\Lie_E( X\bullet Y)-(\Lie_{X\bullet Y}f)\,\partial_x+\bigl(-\tfrac{1}{\lambda'^2}(\Lie_\EE\lambda')\Lambda(X,Y)+\tfrac{1}{\lambda'}\Lie_\EE\Lambda(X,Y)\bigr)\,\partial_x-\tfrac{1}{\lambda'}f'\Lambda(X,Y)\,\partial_x\\
             &=\Lie_E( X\bullet Y)-(\Lie_{X\bullet Y}f)\,\partial_x+\tfrac{1}{\lambda'}\bigl(\Lie_\EE\Lambda(X,Y)-\Lambda(X,Y)\bigr)\,\partial_x\,,\\
             \extprod{(\Lie_\EE X)}{Y}&=\extprod{(\Lie_EX)}{Y}-(\Lie_X f)(\Lie_Y\lambda)\,\partial_x\\
             &=(\Lie_EX)\bullet Y+\tfrac{1}{\lambda'}\Lambda(\comm{E}{X},Y)\,\partial_x-(\Lie_X f)(\Lie_Y\lambda)\,\partial_x\,.
       \end{aligned}
       \]
       The Lie derivative $\Lie_\EE\Lambda(X,Y)$ will contain second Lie derivatives of the form $\Lie_\EE\Lie_Z\lambda$ for some projectable vector field $Z\,$. Using again the fact that the Lie derivative is a Lie algebra homomorphism, these terms can be computed as follows:
       \[
       \Lie_\EE\Lie_Z\lambda=\Lie_Z\lambda+\Lie_{\comm{\EE}{Z}}\lambda=\Lie_Z\lambda+\Lie_{\comm{E}{Z}}\lambda-\lambda'\Lie_Zf\,\,.
       \]
       As a consequence:
       \[
       \begin{aligned}
           \Lie_\EE\Lambda(X,Y)&=(\Lie_\EE\Lie_X\lambda)\Lie_Y\lambda+\Lie_X\lambda(\Lie_\EE\Lie_Y\lambda)-\Lie_\EE\Lie_{X\bullet Y}\lambda\\
           &=\Lambda(X,Y)+(\Lie_X\lambda)(\Lie_Y\lambda)-\Lie_{\comm{E}{X\bullet Y}}\lambda+(\Lie_{\comm{E}{X}}\lambda)(\Lie_Y\lambda)+(\Lie_X\lambda)(\Lie_{\comm{E}{Y}}\lambda)+\\
           &\quad-\lambda'\bigl[(\Lie_X f)(\Lie_Y\lambda)+(\Lie_X\lambda)(\Lie_Y f)-\Lie_{X\bullet Y}f\bigr]\\
           &=2\Lambda(X,Y)+\Lambda(\comm{E}{X},Y)+\Lambda(X,\comm{E}{Y})-\lambda'\bigl[(\Lie_X f)(\Lie_Y\lambda)+(\Lie_X\lambda)(\Lie_Y f)-\Lie_{X\bullet Y}f\bigr]\,.\\
       \end{aligned}
       \]
     All the terms cancel out except for the first one, and we are left with:
       \[
       \begin{aligned}
           (\Lie_\EE\widetilde{\bullet})(X,Y)&=(\Lie_E\bullet)(X,Y)+\tfrac{1}{\lambda'}\Lambda(X,Y)\,\partial_x=\extprod{X}{Y}\,.
       \end{aligned}
       \]
       This concludes the proof.
    \end{proof}
    \begin{oss}
        The result of the previous Proposition was expected in view of the equivalent characterisation of eventual identities given in \cite{davidstrDubduality}. In fact, in this case it is easy to see that $\comm{e}{\EE}=\comm{e}{E}=e\,$.
    \end{oss}
\begin{oss}
    As a consequence, according to \cite[Lemma 4]{davidstrDubduality}, it is also:
    \begin{equation}\label{eq:LieEinvbullet}
        \Lie_{\EE^{-1}}\widetilde{\bullet}=-\EE^{-2}\,\widetilde{\bullet}\,.
    \end{equation}
\end{oss}

The results of this subsection can be summed up in the following:
\begin{thm}
    The flat F-manifold structure on the universal curve $\UC_{g\,;\,\bm{n}}$ given by the extended product $\widetilde{\bullet}$ and identity $e$ is homogeneous with respect to $\EE\,$, in the sense of \cite{ArsieBuryakLorRossi}. Equivalently, the universal curve equipped with $(\widetilde{\bullet},\widetilde{\nabla},e)$ and $(\widetilde{\ast},\widetilde{\nabla^\star},\EE)$ is a bi-flat F-manifold.
\end{thm}

    \subsection{Twist of the extended Dubrovin connection}
According to \cref{thm:flatnessExtDubConnection}, the whole data encoded into a Frobenius manifold is equivalent to the flatness of the extended Dubrovin connection. Now, in general, for a flat F-manifold, one will not be able to define such a connection, for instance due to the absence of an Euler vector field. On the other hand, we are here dealing with a case that has remarkably more structure, namely a bi-flat F-manifold that admits a foliation whose leaves are Frobenius manifolds. For this reason, one may hope to still be able to define a \emph{twist}\footnote{We use this terminology to avoid extension of the extended Dubrovin connection.} of the extended Dubrovin connection on the underlyng Hurwitz Frobenius manifold whose flatness encodes our structure.

To fix the notation, we let $\pr:\RS\times \Hw_{g\,;\,\bm{n}}\to \Hw_{g\,;\,\bm{n}}$ and $\widetilde{\pr}:\RS\times \UC_{g\,;\,\bm{n}}\to\UC_{g\,;\,\bm{n}}$ be the projections on the second factor, and we set $\widetilde{\pi}:=\id\times\pi:\RS\times\UC_{g\,;\,\bm{n}}\to \RS\times \Hw_{g\,;\,\bm{n}}\,$, where $\pi:\UC_{g\,;\,\bm{n}}\to \Hw_{g\,;\,\bm{n}}$ denotes the canonical projection. Then we have the following commuting diagram:
\begin{equation}\label{cd:extendedDub}
    \begin{tikzcd}
       \RS\times \UC_{g;\bm{n}}\arrow[rr, "\widetilde{\pr}"]\arrow[d, "\widetilde{\pi}"]&& \UC_{g;\bm{n}}\arrow[d, "\pi"]\\
      \RS\times \Hw_{g\,;\,\bm{n}}  \arrow[rr, "\pr"]&&\Hw_{g;\bm{n}}\,.
    \end{tikzcd}
\end{equation}
Now, the pull-back $\widetilde{\pi}^*\widehat{\nabla}$ of the extended Dubrovin connection is a connection on the pull-back bundle $\widetilde{\pi}^*\pr^*T\Hw_{g\,;\,\bm{n}}\,$, which is isomorphic to the sub-bundle $\widetilde{\pr}^*\pi^*T\Hw_{g;\bm{n}}\,$ of $\widetilde{\pr}^*T\UC_{g;\bm{n}}$ thanks to the previous diagram. Again, it is important to remark that we are identifying $\pi^*T\Hw_{g\,;\,\bm{n}}$ with its image through the Ehresmann connection induced by the primary differential. 

If a notion of a twist of the extended Dubrovin connection to the universal curve exists, it is natural to impose that it be in some sense compatible with the underlying extended Dubrovin connection. This leads to the following:
\begin{defn}
    Let $\widehat{\widetilde{\nabla}}$ be a connection on $\widetilde{\pr}^*T\UC_{g;\bm{n}}$. We say that $\widehat{\widetilde{\nabla}}$ is \emph{$\pi$-compatible} if, for any section $Y$ of $(\pi\circ\widetilde{\pr})^*T\Hw_{g;\bm{n}}\,$:
    \[
 \widehat{\widetilde{\nabla}}_XY-   (\widetilde{\pi}^*\widehat{\nabla})_XY\in \ker\pi_*\,,
    \]
    for any tangent vector $X$ to the total space $\RS\times \UC_{g;\bm{n}}\,$.
\end{defn}
\begin{oss}
    More concretely, if $\bm{v}$ is a system of flat coordinates for $\eta$ on $\Hw_{g\,;\,\bm{n}}$ and $x$ is an adapted flat coordinate on the universal curve for the extension, then a section $Y$ of $\widetilde{\pr}^*T\UC_{g;\bm{n}}$ can be uniquely decomposed as follows:
    \[
    Y=\sum_{\alpha}f_\alpha(\bm{v},x,\zeta)\,\partial_{v_\alpha}+h(\bm{v},x,\zeta)\,\partial_x\,,\qquad \zeta\in\RS\,.
    \]
    On the other hand, a section of $(\pi\circ\widetilde{\pr})^*T\Hw_{g;\bm{n}}$ has vanishing component in the $x$ direction. This is clearly the only case where it makes sense to compute covariant derivatives with respect to the extended Dubrovin connection $\widehat{\nabla}$ -- or, rather, to its pull-back. We are therefore requiring that, whenever both are well-defined, the two covariant derivatives be $\pi$-related.
\end{oss}
Since we know that compatibility between $\bullet$ and ${}^\eta\nabla$ on the Hurwitz space lifts to compatibility between $\widetilde{\bullet}$ and $\widetilde{\nabla}$ on the universal curve, we look at connections on $\widetilde{\pr}^*T\UC_{g;\bm{n}}$ of the form:
    \begin{equation}\label{eq:liftedextendedDubconnection}
            \begin{aligned}
        \widehat{\widetilde{\nabla}}_XY&:=\widetilde{\nabla}_XY+\zeta\,\extprod{X}{Y}\,,\\
        \widehat{\widetilde{\nabla}}_{\partial_\zeta}Y&:=\widetilde{\nabla}_{\partial_\zeta }Y+\extprod{\EE}{Y}-\tfrac{1}{\zeta}\widetilde{\mu}(Y)\,,
    \end{aligned}
    \end{equation}
    for any two sections $X$ and $Y$ of $\widetilde{\pr}^*T\UC_{g;\bm{n}}$ and for some section $\widetilde{\mu}$ of $\End(T\UC_{g\,;\,\bm{n}})$.
\begin{oss}
    Notice that, picking the term $\extprod{\EE}{Y}$ in the second line of \cref{eq:liftedextendedDubconnection} shifts the $\pi$-compatibility completely into $\widetilde{\mu}\,$.
\end{oss}

\begin{lem}\label{lem:picompatibilitymu}
    A connection $\widehat{\widetilde{\nabla}}$ on $\widetilde{\pr}^*T\UC_{g;\bm{n}}$ defined as in \cref{eq:liftedextendedDubconnection} is $\pi$-compatible if and only if, for any $X$ section of $(\pi\circ\widetilde{\pr})^*T\Hw_{g;\bm{n}}\,$,
    \[
    \widetilde{\mu}(X)-\mu(X)\in\ker\pi_*\,.
    \]
\end{lem}
\begin{proof}
    The fact that $\widehat{\widetilde{\nabla}}_YX-(\widetilde{\pi}^*\widehat{\nabla})_YX$ is vertical with respect to $\pi$ whenever $Y$ lies in the kernel of $\widetilde{\pr}_*$ follows from the definition of extensions of almost-flat F manifolds. One, then, only needs to check that the $\pi$-compatibility holds true when $Y=\partial_\zeta\,$. Thanks to \cref{lem:projectioneventualidentity}, one has:
    \[
    \begin{aligned}
        \widehat{\widetilde{\nabla}}_{\partial_\zeta}X-(\widetilde{\pi}^*\widehat{\nabla})_{\partial_\zeta}X&=\extprod{\EE}{X}-E\bullet X-\tfrac{1}{\zeta}\bigl(\widetilde{\mu}(X)-\mu(X)\bigr)\\
        &=-\tfrac{1}{\zeta}\bigl(\widetilde{\mu}(X)-\mu(X)\bigr)+\ker\pi_*\,.
    \end{aligned}
    \]
    This concludes the proof.
\end{proof}
Similarly, we calculate the curvature of connections $\widehat{\widetilde{\nabla}}\,$:
\begin{lem}\label{lem:flatnessliftextendedDub}
 A connection $\widehat{\widetilde{\nabla}}$ on $\widetilde{\pr}^*T\UC_{g;\bm{n}}$ defined as in \cref{eq:liftedextendedDubconnection} is flat if and only if, for any $X,Y\in\widetilde{\pr}^*\X_{\UC_{g;,\bm{n}}}$\,:
    \begin{align}
     \label{eq:twistedDUbflat1}     \comm{\widetilde{\mu}}{\widetilde{\nabla}_X}&=0\,,\\
       \label{eq:twistedDUbflat2} \widetilde{\mu}(\extprod{X}{Y})-\extprod{X}{\widetilde{\mu}(Y)}&=\extprod{X}{\widetilde{\nabla}_Y\EE}-\widetilde{\nabla}_{\extprod{X}{Y}}\EE\,.
    \end{align}
\end{lem}
\begin{proof}
    We just need to check that the curvature endomorphism $\widehat{\widetilde{R}}(X,\partial_\zeta)$ vanishes identically for any $X$ projectable with respect to $\widetilde{\pr}\,$, as any other case follows either from skew-symmetry of the Riemann curvature or from compatibility of $\widetilde{\nabla}$ and $\widetilde{\bullet}\,$. An easy calculation gives that, for any section $Y$ of $\pr^*T\UC_{g\,;\,\bm{n}}\,$:
    \[
    \begin{aligned}
        \widehat{\widetilde{R}}(X,\partial_\zeta)Y&=\comm{\widehat{\widetilde{\nabla}}_X}{\widehat{\widetilde{\nabla}}_{\partial_\zeta}}Y\\
        &=\widetilde{R}(X,\partial_\zeta)Y+\widetilde{\nabla}_X(\extprod{\EE}{Y})-\extprod{\widetilde{\nabla}_XY}{\EE}-\extprod{X}{Y}+\widetilde{\mu}(\extprod{X}{Y})-\extprod{X}{\widetilde{\mu}(Y)}+\\
        &\quad+\tfrac{1}{\zeta}\bigl(\widetilde{\mu}(\widetilde{\nabla}_XY)-\widetilde{\nabla}_X\widetilde{\mu}(Y)\bigr)\,.
    \end{aligned}
    \]
    Since $\widetilde{\nabla}$ is flat, the first term is zero. The vanishing of the coefficient of $\zeta^{-1}$ gives $\comm{\widetilde{\mu}}{\widetilde{\nabla}_X}=0\,$. As for the remaning terms in the coefficient of $\zeta^0$, we notice that, using the compatibility of $\widetilde{\nabla}$ and $\widetilde{\bullet}$, along with \cref{prop:homogeneityE}:
    \[
    \begin{aligned}
\widetilde{\nabla}_X(\extprod{\EE}{Y})-\extprod{\widetilde{\nabla}_XY}{\EE}-\extprod{X}{Y}&=(\widetilde{\nabla}_X\widetilde{\bullet})(\EE,Y)+\extprod{\widetilde{\nabla}_X\EE}{Y}-\extprod{X}{Y}\\
&=(\widetilde{\nabla}_\EE\widetilde{\bullet})(X,Y)+\extprod{\widetilde{\nabla}_X\EE}{Y}-\extprod{X}{Y}\\
&=\widetilde{\nabla}_\EE(\extprod{X}{Y})-\extprod{\comm{\EE}{X}}{Y}-\extprod{X}{Y}-\extprod{X}{\widetilde{\nabla}_\EE Y}\\
&=\widetilde{\nabla}_\EE(\extprod{X}{Y})+\extprod{X}{\comm{\EE}{Y}}-\comm{\EE}{\extprod{X}{Y}}-\extprod{X}{\widetilde{\nabla}_\EE Y}\\
&=\widetilde{\nabla}_{\extprod{X}{Y}}\EE-\extprod{X}{\widetilde{\nabla}_Y\EE}\,.
\end{aligned}
    \]
    This concludes the proof.
\end{proof}
\begin{lem}\label{lem:twistedDubconnectionlemma3}
    A connection $\widehat{\widetilde{\nabla}}$ on $\widetilde{\pr}^*T\UC_{g;\bm{n}}$ defined as in \cref{eq:liftedextendedDubconnection} is flat and satisfies \cref{eq:twistedDUbflat2} if and only if there exists some $\pi$-vertical section $\phi$ of $\widetilde{\pr}^*T\UC_{g;\bm{n}}$ such that:
    \[
    \widetilde{\mu}(X)=\bigl(1-\tfrac{d}{2}\bigr)X-\widetilde{\nabla}_X\EE+\extprod{X}{\phi}\,.
    \]
\end{lem}
\begin{proof}
    The connection is $\pi$-compatible if and only if, according to \cref{lem:picompatibilitymu} and \cref{lem:projectioneventualidentity}, on any section $X$ of $(\pi\circ\widetilde{\pr})^*T\Hw_{g;\bm{n}}\,$, $\widetilde{\mu}$ is:
\[
\widetilde{\mu}(X)=\bigl(1-\tfrac{d}{2}\bigr)X-\nabla_XE+\ker\pi_*=\bigl(1-\tfrac{d}{2}\bigr)X-\widetilde{\nabla}_X\EE+\ker\pi_*\,.
\]
  All we are left to fix are, then, the $\pi$-vertical component of $\widetilde{\mu}(X)$ whenever $X$ is $\pi$-horizontal and $\widetilde{\mu}(V)$ for $V$ $\pi$-vertical. For the latter, we have the following necessary condition for \cref{eq:twistedDUbflat2} to hold:
    \[
    \begin{aligned}
        \widetilde{\mu}(\extprod{e}{V})-\extprod{V}{\widetilde{\mu}(e)}&=\extprod{\widetilde{\nabla}_e\EE}{V}-\widetilde{\nabla}_{\extprod{e}{V}}\EE\\
        \widetilde{\mu}(V)-\extprod{V}{\bigl[\bigl(1-\tfrac{d}{2}\bigr)e-\widetilde{\nabla}_e\EE+\ker\pi_*\bigr]}&=\extprod{\widetilde{\nabla}_e\EE}{V}-\widetilde{\nabla}_{V}\EE\\
        \widetilde{\mu}(V)-\bigl(1-\tfrac{d}{2}\bigr)V+\ker\pi_*&=-\widetilde{\nabla}_{V}\EE\,,
    \end{aligned}
    \]
where we used the fact that $\ker(\pi_*)_p$ is an ideal in the tangent space at any $p$ in the universal curve. Hence, it is necessarily $\widetilde{\mu}(X)=\bigl(1-\tfrac{d}{2}\bigr)X-\widetilde{\nabla}_X\EE+\ker\pi_*$ for any section $X$ of $\widetilde{\pr}^* T\UC_{g\,;\,\bm{n}}\,$. Let us denote by $\phi(X)$ the remaining $\pi$-vertical component of $\widetilde{\mu}(X)$ in the previous expression.
A similar computation gives that it is necessary that the value of $\phi$ at any vector field $X$ is determined by the one at the identity as follows: $\phi(X)=\extprod{X}{\phi(e)}\,$. Since $\pi$-vertical vector fields form an ideal, this still ensures that $\phi(X)$ is $\pi$-vertical if $\phi(e)$ is. As easy check then shows that this condition is also sufficient.
\end{proof}

\begin{thm}\label{thm:flatnesstwistedDubconnection}
There exists a unique flat and $\pi$-compatible connection $\widehat{\widetilde{\nabla}}$ on the pull-back bundle $\widetilde{\pr}^*T\UC_{g\,;\,\bm{n}}$ of the form in \cref{eq:liftedextendedDubconnection} with:
    \[
\widetilde{\mu}(X):=\bigl(1-\tfrac{d}{2}\bigr)X-\widetilde{\nabla}_X\EE\,.
\]
\end{thm}

\begin{proof}
According to \cref{lem:twistedDubconnectionlemma3}, $\pi$-compatibility and \cref{eq:twistedDUbflat2} are equivalent to the given expression for $\widetilde{\mu}\,$. We are only, therefore, left to discuss \cref{eq:twistedDUbflat1}.

To begin with, we notice that it is equivalent to requiring that $\widetilde{\mu}$ preserves flat sections of $\widetilde{\nabla}\,$. The direct implication is obvious. On the other hand, if $\widetilde{\mu}$ preserves flat sections of $\widetilde{\nabla}\,$, then let $V_1,\dots, V_{n+1}$ be a local system of flat sections of $\widetilde{\nabla}\,$. Then, any section of $\widetilde{\pr}^*T\UC_{g\,;\,\bm{n}}$ can be uniquely decomposed onto such a system, with coefficients generally being functions on $\RS\times\UC_{g\,;\,\bm{n}}\,$. An easy computation gives, then, for any $X$:
    \[
    \begin{aligned}
        \widetilde{\mu}(\widetilde{\nabla}_XY)&=\sum_\alpha \widetilde{\mu}\bigl(\widetilde{\nabla}_X(f_\alpha V_\alpha)\bigr)=\sum_\alpha X(f_\alpha)\,\widetilde{\mu}(V_\alpha)\,,\\
        \widetilde{\nabla}_X\widetilde{\mu}(Y)&=\sum_\alpha\widetilde{\nabla}_X\bigl(f_\alpha\,\widetilde{\mu}(V_\alpha)\bigr)=\sum_\alpha X(f_\alpha)\,\widetilde{\mu}(V_\alpha)\,.
    \end{aligned}
    \]
    This proves the reverse implication.

Since $e$ is flat, then, it follows that it must be $\widetilde{\nabla}_X\phi(e)=\widetilde{\nabla}^2_{X,e}\EE=0$ for any $X\,$. In fact, $\widetilde{\nabla}_e\EE$ is a flat section of $\widetilde{\nabla}$:
\[
\widetilde{\nabla}_e\EE=\widetilde{\nabla}_\EE e+\comm{e}{\EE}=e\,.
\]
As a consequence, $\widetilde{\nabla}^2_{X,e}\EE=\widetilde{\nabla}_X\widetilde{\nabla}_e\EE-\widetilde{\nabla}_{\widetilde{\nabla}_X e}\EE=0\,$.

If, now, $Y$ is a flat section of $\widetilde{\nabla}\,$, then we require that, for any $X$:
\[
\begin{aligned}
    0=\widetilde{\nabla}_X\widetilde{\mu}(Y)&=-\widetilde{\nabla}^2_{X,Y}\EE+\widetilde{\nabla}_X(\extprod{Y}{\phi})\\
    &=-\widetilde{\nabla}^2_{X,Y}\EE+(\widetilde{\nabla}_X\widetilde{\bullet})(Y,\phi)\\
    &=-\widetilde{\nabla}^2_{X,Y}\EE+(\widetilde{\nabla}_{\phi}\widetilde{\bullet})(X,Y)\\
    &=-\widetilde{\nabla}^2_{X,Y}\EE+\comm{\phi}{\extprod{X}{Y}}-\extprod{X}{\comm{\phi}{Y}}-\extprod{Y}{\comm{\phi}{X}}\\
    &=-\widetilde{\nabla}^2_{X,Y}\EE+(\Lie_{\phi}\widetilde{\bullet})(X,Y)\,.
\end{aligned}
\]
Since $\EE$ is affine, then $\phi=0\,$. In fact, if it were not the case, then $\Lie_\phi(\phi,\phi)\neq0\,$.
\end{proof}
\begin{oss}
    In the proof of the previous Theorem, we have left $\widetilde{\nabla}^2\EE$ explicit even though, in this case, it is zero. This is to highlight that, if one were to study Dubrovin duality of an arbitrary rank-one extension, then the twisted extended Dubrovin connection could still be defined provided that $\phi$ is a solution to $\Lie_\phi\widetilde{\bullet}=\widetilde{\nabla}^2\EE\,$. In fact, all previous lemmas generalize to any arbitrary rank-one extension, provided that $\comm{e}{\EE}=\comm{e}{E}\,$. This is equivalent to requiring that the identity $e$ lifts to the identity of the extended multiplication, which translates into the following constraint on the extended prepotential: 
    \begin{equation}
         \begin{aligned}
        \Lie_e\Omega'&=1\,,&&& \Lie_e\Omega_\alpha &=0\,,&&& \alpha&=1,\dots,n\,.
    \end{aligned}
    \end{equation}
   These are the conditions originally assumed in the open Gromow-Witten invariant setting, and then carried through e.g. in the works \cite{PSTInttheory,BCT1,buryakopen,BuryakOpenAD,dealmeida2025openhurwitzflatf}. We provide, here, a geometrical interpretation of them. In any such case, the eventual identity is still going to be a Euler vector field of weight one. 
\end{oss}

\begin{defn}\label{defn:liftextDubconnection}
Let $\pi,\pr,\widetilde{\pi},\widetilde{\pr}$ like in the diagram in \cref{cd:extendedDub}.The \emph{twisted extended Dubrovin connection} is the connection $\widehat{\widetilde{\nabla}}$ on the pull-back bundle $\widetilde{\pr}^*T\UC_{g\,;\,\bm{n}}$ defined by, for any sections $X,Y$ of $ \widetilde{\pr}^*T\UC_{g;\bm{n}}$ and $\zeta\in\RS\,$:
  \[
    \begin{aligned}
       \widehat{\widetilde{\nabla}}_XY&:=\widetilde{\nabla}_XY+\zeta\,\extprod{X}{Y}\,,\\
      \widehat{\widetilde{\nabla}}_{\partial_\zeta}Y&:=\widetilde{\nabla}_{\partial_\zeta }Y+\extprod{\EE}{Y}-\tfrac{1}{\zeta}\bigl[\bigl(1-\tfrac{d}{2}\bigr)Y-\widetilde{\nabla}_Y\EE\bigr]\,.
    \end{aligned}
    \]
\end{defn}

\subsection{Extension of the dual connection}
Finally, we are here considering rank-one extensions of a Frobenius manifold, and of its Dubrovin dual, to almost-flat F-manifolds. As a consequence, we are not only looking for a relation between the extended multiplications $\widetilde{\bullet}$ and $\widetilde{\ast}$ that generalises the one between $\bullet$ and $\ast$, but also between the extended connections $\widetilde{\nabla}$ and $\widetilde{\nabla^\star}$ that encodes the one between ${}^\eta\nabla$ and ${}^g\nabla$.
\begin{prop}[\cite{hertling_2002}, Theorem 9.4]\label{prop:firstsecondstructureconnection}
    Let $M$ be a Frobenius manifold with charge $d$, ${}^\eta\nabla$ and ${}^g\nabla$ denote the Levi-Civita connections of the Frobenius metric and of the intersection form respectively, $E$ be the Euler vector field. Then, for any two holomorphic vector fields $X,Y$ on $M$, we have:
    \[
    \begin{aligned}
            {}^g\nabla_XY&={}^\eta\nabla_XY-\nabla_{X\ast Y}E+\tfrac12(1-d)\,X\ast Y\\
           & =E\bullet{}^\eta\nabla_X(E^{-1}\bullet Y)-({}^\eta\nabla_{E^{-1}\bullet Y}E)\bullet X+\tfrac12(3-d)\,X\ast Y\,.
    \end{aligned}
    \]
\end{prop}
Regarding the connections, we know that the relation between $\widetilde{\nabla^\star}$ and $\widetilde{\nabla}$ should project down to the one in \cref{prop:firstsecondstructureconnection}. As a consequence, it is natural to ask whether the two extended connections satisfy the very same equation, with $E$ replaced by $\EE$, and $\bullet$ by $\widetilde{\bullet}$. In other words, we consider, on the universal curve, the connection\footnote{Equivalence between the two expressions is the content of \cite[Theorem (9.4) a)]{hertling_2002}. In particular, it is due to connection-multiplication compatibility and homogeneity of the product with respect to the eventual identity. These properties hold true even in this setting.}:
\begin{equation}\label{eq:Enabla}
\begin{split}
    {}^\EE\nabla_XY&:=\widetilde{\nabla}_XY-\widetilde{\nabla}_{\extdprod{X}{Y}}\EE+\tfrac12(1-d)\,\extdprod{X}{Y}\\
    &\equiv\EE\,\widetilde{\bullet}\,\widetilde{\nabla}_X(\EE^{-1}\,\widetilde{\bullet}\, Y)-\bigl(\widetilde{\nabla}_{\EE^{-1}\,\widetilde{\bullet}\,Y}\EE\bigr)\,\widetilde{\bullet}\, X+\tfrac12 (3-d)\,\extdprod{X}{Y}\,.
\end{split}    
\end{equation}
As a first consistency check, \cite[Theorem 4.1]{DavidStrDdualityconnection} ensures that ${}^\EE\nabla$ is indeed torsion-free and compatible with the extended dual product $\widetilde{\ast}$. Furthermore:
\begin{lem}
    Consider the flat F-manifold $(\UC_{g\,;\,\bm{n}},\widetilde{\bullet},\widetilde{\nabla},e)$ with eventual identity $\EE$. ${}^\EE\nabla$ is flat.
\end{lem}
\begin{proof}
The connection $\widetilde{\nabla}$ is flat, therefore the special family of connections $\widetilde{\mathcal{S}}$ it defines on $\UC_{g\,;\,\bm{n}}$ -- as defined in \cite[Definition 5.1]{DavidStrDdualityconnection} -- contains a flat connection. According to \cite[Theorem 8.2]{DavidStrDdualityconnection}, then, the dual family $\widetilde{\mathcal{S}^\star}:=\mathcal{D}_{\EE}(\widetilde{\mathcal{S}})$ as defined in \cite[Theorem 5.3]{DavidStrDdualityconnection} also contains a flat connection if and only if there exists a vector field $\widetilde{W}$ on the universal curve such that \cite[Eq. (8.4)]{DavidStrDdualityconnection} holds true. In particular, the flat connection is, then, given in terms of $W$ by \cite[Eq. (8.5)]{DavidStrDdualityconnection}. Now, ${}^\EE\nabla$ clearly lies in the dual family $\widetilde{\mathcal{S}^\star}$, and it corresponds to the choice $W=\extprod{\widetilde{W}}{\EE}=\tfrac12(3-d)\EE\,$. As a consequence, ${}^\EE\nabla$ is the flat connection in the dual family if and only if $\widetilde{W}=\tfrac12(3-d)\,e$ satisfies the aforementioned condition. In particular, from \cite[Remark 8.3 i)]{DavidStrDdualityconnection}, for this choice of $W$ this simply boils down to:
\begin{equation}\label{eq:flatnessEnabla}
    \begin{aligned}
    \extprod{(\widetilde{\nabla}^2_{X,Y}\EE)}{Z}=\extprod{(\widetilde{\nabla}^2_{Y,Z}\EE)}{X}\,.
\end{aligned}
\end{equation}
Since $\EE$ is affine, this condition is obviously met.
\end{proof}
\begin{oss}
Notice that flatness of ${}^\EE\nabla$ is also sufficient for $\EE$ to be affine, as long as $e$ lifts to the identity of the extended multiplication. In fact, in any such case flatness of ${}^\EE\nabla$ implies \cref{eq:flatnessEnabla}. In particular, for $Z=e\,$, this means:
\[
\widetilde{\nabla}^2_{X,Y}\EE=\extprod{X}{\widetilde{\nabla}^2_{Y,e}\EE}=0\,,
\]
as $\widetilde{\nabla}_e\EE=e\,$.
\end{oss}
We now want to discuss when the flat coordinates of the connections ${}^\EE\nabla$ and $\widetilde{\nabla^\ast}$ coincide. Firstly, we point out the following trivial result:
\begin{lem}
    Let $\nabla$ be a flat connection on $M$. A coordinate system $v_1,\dots, v_n$ on $M$ is flat for $\nabla$ if and only if $\nabla\partial_{v_a}=0$ for any $a=1,\dots, n$.
\end{lem}
\begin{proof}
The system is flat if and only if $\nabla\dd{v}_a=0$ for any $a=1,\dots, n$. Now, the connections on the tangent and cotangent bundles are related as follows:
    \[
    X(Y\intprod \omega)=(\nabla_XY)\intprod\omega+Y\intprod(\nabla_X\omega)\,,
    \]
    for any pair of vector fields $X,Y$ and one-form $\omega$.
    If $X=\partial_{v_a}$, $Y=\partial_{v_b}$ and $\omega=\dd{v}_c$, this boils down to:
    \[
    0=(\nabla_a\partial_{v_b})\intprod\dd{v}_c+\partial_{v_b}\intprod \,(\nabla_{a}\dd{v}_c)\,,
    \]
which implies the statement.
\end{proof}

\begin{lem}\label{lem:Enabla}
   Let $\widetilde{\nabla}$ denote the connection on $\UC_{g\,;\,\bm{n}}$ coming from extending the Levi-Civita connection ${}^\eta\nabla$ on $\Hw_{g\,;\,\bm{n}}$ as described in \cite{dealmeida2025openhurwitzflatf}, let $\EE$ be the eventual identity in \cref{prop:eventualidentityhere}. Then, ${}^\EE\nabla\partial_x=0$.
\end{lem}
\begin{proof}
Let $\lambda' f:=\lambda-\Lie_E\lambda$, as in \cref{prop:eventualidentityhere}. Since $x$ is a flat coordinate for $\widetilde{\nabla}$, an easy computation keeping in mind the multiplication table in \cref{eq:multtableextension} gives:
    \[
    \begin{aligned}
            {}^\EE\nabla_{\partial_x}\partial_x&=-\tfrac{1}{\lambda}\lambda'\bigl[f'-\tfrac12(1-d)\bigr]\,\partial_x\,,\\
            {}^\EE\nabla_X\partial_x&=-\tfrac{1}{\lambda}\Lie_X\lambda\,\bigl[f'-\tfrac12(1-d)\bigr]\,\partial_x\,,
    \end{aligned}
    \]
    for any projectable vector field $X$. Therefore, $\partial_x$ is a flat section of ${}^\EE\nabla$ if and only if $f'=\tfrac12(1-d)\,$. This is true due to \cref{prop:LieElambda}.
\end{proof}

Finally, we need to check that the flat coordinates for the Levi-Civita connection ${}^g\nabla$ lift to flat coordinates of ${}^\EE\nabla$. In particular, by construction the horizontal component of ${}^\EE\nabla_XY$ will coincide with ${}^g\nabla_XY$, whenever $X$ and $Y$ are projectable. However, one also needs to take care of the vertical component:
\begin{thm}\label{thm:extendeddualconnection}
    If $\EE$ is the eventual identity on the universal curve $\UC_{g\,;\,\bm{n}}$ given in \cref{prop:eventualidentityhere}, then the connection ${}^\EE\nabla$ as defined in \cref{eq:Enabla} coincides with the extended dual connection $\widetilde{\nabla^\star}$.
    \end{thm}
\begin{proof}
It is enough to check that, for $\EE$ as given in \cref{lem:Enabla}, ${}^\EE\nabla_XY=0$ whenever $X$ and $Y$ are projectable and (the projection of) $Y$ is a flat section of ${}^g\nabla\,$. In fact, $\widetilde{\nabla}_{\partial_x}Y=0$ follows from torsion-freeness, provided that $\partial_x$ is flat.

This is an easy calculation using the first expression for ${}^\EE\nabla$ in \cref{eq:Enabla}. In fact, if we denote by $\Lambda^\ast(X,Y)$ the vertical component function in $\extdprod{X}{Y}$ as in \cref{lem:multiplicationhorizontal}. Then:
\[
\begin{aligned}
    {}^\EE\nabla_{X}Y&={}^\eta\nabla_XY-\widetilde{\nabla}_{\extdprod{X}{Y}}\EE+\tfrac12(1-d)\,\extdprod{X}{Y}\\
    &={}^\eta\nabla_XY-{}^\eta\nabla_{X\ast Y}E+\tfrac12(1-d)\,X\ast Y+\bigl(\tfrac12(1-d)-f'\bigr)\Lambda^*(X,Y)\,\partial_x\\
    &={}^g\nabla_XY+\bigl(\tfrac12(1-d)-f'\bigr)\Lambda^*(X,Y)\,\partial_x\,.
\end{aligned}
\]
According to \cref{prop:LieElambda}, the vertical component vanishes and the statement follows.
\end{proof}

\section{Open WDVV solutions for ADE singularities}\label{ss:typeDE}
\documentclass[main.tex]{subfile}

Let $W$ be a simply-laced Weyl group of rank $\ell$, and $f:\CC^2\times \CC^\ell\to \CC$ be the miniversal deformation of the corresponding simple singularity. We denote coordinates on $\CC^2$ by $z$ and $w$, whereas coordinates on the parameter space will be denoted by $\bm{a}\in\CC^\ell\,$. Using the normal forms of the singularities in \cite{arnold1985singularities}, it is explicitly:
\begin{equation}\label{eq:miniversaldeformationsADE}
\begin{aligned}
A_\ell&:&&& f(z,w,\bm{a})&=z^{\ell+1}+w^2+a_1\,z^{\ell-1}+\dots+a_{\ell-1}\,z+a_\ell\,,\qquad\qquad\qquad \,\,\ell\in\ZZ_{\geq 1}\,;\\
    D_\ell &: &&& f(z,w,\bm{a})&=z^{\ell-1}+zw^2+a_1\,z^{\ell-2}+\dots+a_{\ell-2}\,z+a_{\ell-1}+a_\ell\,w\,,\qquad \ell\in\ZZ_{\geq 4}\,;\\
    E_6&:&&& f(z,w,\bm{a})&=z^4+w^3+a_1\,z^2w+a_2\,zw+a_3\,z^2+a_4\,w+a_5\,z+a_6\,,\\
    E_7&:&&&f(z,w,\bm{a})&=z^3w+w^3+a_1 \,zw^2+a_2 \,w^2+a_3\,zw+a_4\,z^2+a_5\,w+a_6\,z+a_7\,,\\
    E_8&:&&&f(z,w,\bm{a})&=z^5+w^3+a_1\,z^3w+a_2\,z^2w+a_3\,z^3+a_4\,zw+a_5\,z^2+a_6\,w+a_7\,z+a_8\,.
\end{aligned}    
\end{equation}

Following \cites{Saito81,topstringsd<1}, we know that we can endow the parameter space with a Frobenius manifold structure. Multiplication is given by promoting the vector space isomorphism $\pdv{a_\mu}\mapsto\pdv{f}{a_\mu}$ of the tangent spaces with the Jacobian ring of the miniversal deformation -- called \emph{Kodaira-Spencer map} -- to a ring isomorphism. On the other hand, the metric is defined by Grothendieck's residues as follows:
\begin{equation}\label{eq:metricsingularities}
   \eta_{\bm{a}}\bigl(\tfrac{\partial}{\partial a_\mu}\,,\,\tfrac{\partial}{\partial a_\nu}\bigr):=\sum_{\bm{\zeta}\in\mathrm{Cr}(f_{\bm{a}})}\Res_{\bm{\zeta}}\biggl\{\pdv{f_{\bm{a}}}{a_\mu}\pdv{f_{\bm{a}}}{a_\nu}\frac{\dd{z}\wedge\dd{w}}{(\partial_zf_{\bm{a}})\,(\partial_wf_{\bm{a}})}\biggr\}\equiv\sum_{\bm{\zeta}\in\mathrm{Cr}(f_{\bm{a}})}\eval{\frac{(\partial_\mu f_{\bm{a}})(\partial_\nu f_{\bm{a}})}{\det \text{H}(f_{\bm{a}})}}_{\bm{\zeta}}\,,
\end{equation}
where $\mathrm{Cr}(f_{\bm{a}})\subseteq\CC^2$ is the set of critical points of the miniversal unfolding $f_{\bm{a}}$ with parameters $\bm{a}\,$ and $\text{H}(f_{\bm{a}})$ denotes the Hessian of $f_{\bm{a}}$.

Finally, we notice that there exist unique positive rational numbers $r_1$ and $r_2$ such that, with respect to the grading $\deg z:=r_1\,$, $\deg w:=r_2$ in $\CC[z,w]$, the surface singularity of type $W$ is a homogeneous polynomial of degree given by the Coxeter number $h$ of $W\,$. Equivalently, we say that the singularity is an invertible weighted-homogeneous polynomial in $\CC[z,w]$ with weights $(r_1,r_2;h)\,$. In particular \cite{Dubrovin1999}:
\begin{equation}\label{eq:gradingADE}
    \begin{aligned}
    A_\ell &:&&& r_1&=1\,,&&&r_2&=\tfrac12 (\ell+1)\,, &&& h&=\ell+1\,;\\
        D_\ell&:&&& r_1&=2\,,&&& r_2&=\ell-2\,,&&& h&=2(\ell-1)\,;\\
        E_6&:&&& r_1&=3\,,&&& r_2&=4\,, &&& h&=12\,,\\
        E_7&:&&& r_1&=4\,,&&& r_2&=6\,, &&& h&=18\,,\\
        E_8&:&&& r_1&=6\,,&&& r_2&=10\,, &&& h&=30\,.
    \end{aligned}
\end{equation}
\begin{oss}
The only possibly non-integral number in this list is clearly $r_2$ for $A_\ell\,$. It could still, however, be made into an integer by doubling all the degrees. 
\end{oss}
The Euler vector field is then given by extending this grading to $\CC[z,w,\bm{a}]$ in such a way that the miniversal deformation is still homogeneous of degree $h\,$, and then normalising to $h\,$. In other words, there are unique positive integers $d_1,\dots, d_\ell\in\ZZ_{\geq 0}$ such that the miniversal deformation is still invertible weighted-homogeneous with weights $(r_1,r_2,d_1,\dots, d_\ell;h)\,$:
\begin{equation}\label{eq:homogeneitydeformation}
    r_1\,zf_z+r_2\,wf_w+d_1\,a_1 f_1+\dots+ d_{\ell}\,a_\ell f_\ell=h\,f\,,
\end{equation}
where $f_\mu:=\partial_{a_\mu}f\,$. In particular, $d_1,\dots, d_\ell$ are the degrees of the basic invariants in the ring $\CC[\bm{x}]^W$ \cites{humphreys_1990,Dubrovin1996}. The Euler vector field is, then, the derivation on the coordinate ring $\CC[\bm{a}]$ of the parameter space whose eigenspaces are the spaces of homogeneous polynomials with respect to the induced grading, normalised so that the space of homogeneous polynomials of degree $h$ corresponds to the unital eigenvalue. In other words, $E:=\tfrac{1}{h}\sum_{\mu=1}^\ell d_\mu a_\mu\pdv{a_\mu}$ is the preimage of $f$ with respect to the Kodaira-Spencer map.  \\

In \cites{topstringsd<1,DubLiuZhangDEspotentials}, the authors also provide a Landau-Ginzburg description of such Frobenius manifolds. The idea is to replace the two-dimensional residues with ordinary residues of a suitable meromorphic function $\lambda$ -- or, rather, of a family of meromorphic functions corresponding to different points in the parameter space -- on a closed Riemann surface (possibly of positive genus). The top form $\dd{z}\wedge\dd{w}$ on $\CC^2$ is, at the same time, replaced with a meromorphic differential $\omega$ on the surface -- or, again, rather with a family of meromorphic differentials depending on the point in the parameter space. 

In particular, we denote by $\UC_f\hookrightarrow\CC^2\times\CC^\ell$ the line bundle over the parameter space $\CC^\ell$ whose fibre $\UC_f(\bm{a})$ over $\bm{a}\in\CC^\ell$ is the algebraic curve:
\[
\UC_f(\bm{a}):=\bigl\{(z,w)\in\CC^2:\quad f_w(z,w,\bm{a})=0\bigr\}\hookrightarrow\CC^2\,.
\]
Then, the superpotential is given by $\lambda=f\lvert_{\UC_f}:\UC_f\to\RS\,$. In particular, it is customary to use $z$ as a local coordinate on the curve, so that $\UC_f(\bm{a})$ is (locally) parametrised by $z\mapsto(z,w(z,\bm{a}))\,$, where the function $w(z,\bm{a})$ is determined implicitly by solving for $w$ the algebraic equation $f_w=0\,$. This ensures that both the critical points and values of $\lambda$ coincide with the ones of the corresponding miniversal deformation, as $\lambda_z=f_z$ on the curve $f_w=0\,$. Similarly, $\partial_{a_k}\lambda=\partial_{a_k}f\,$ for any $k=1,\dots, \ell\,$. In type $A$, this clearly just amounts to setting $w$ to zero. For $D_\ell$, on the other hand, the curve is $2zw+a_\ell=0\,$, hence the corresponding superpotential is still a family of meromorphic functions on $\RS$ with poles at $\infty$ and $0\,$. In type-$E$, finally, the curve $f_w=0$ famously has positive genus \cites{TY97,LW96,Dubrovin1999}. 

As for the primary differential, comparing \cref{eq:metricsingularities} with the analogous expression coming from evaluating the residues as given by the Landau-Ginzburg description, and using the implicit function Theorem to compute $\lambda_{zz}$, gives that:
\begin{equation}
    \omega=\frac{\dd{z}}{\sqrt{f_{ww}\lvert_{\UC_f}}}\,.
\end{equation}
As a consequence, to construct the associated solution to the open WDVV equation, as described in \cite{almediaJacobi}, one should firstly change coordinate on the curve so that $\omega=\dd{x}\,$. This is clearly achieved by integrating the ODE:
\begin{equation}\label{eq:flatcoordinateextensionsDE}
\dv{x}{z}=\frac{1}{\sqrt{f_{ww}(z,w(z,\bm{a}),\bm{a})}}\,.
\end{equation}
The associated solution to the open WDVV equations that extends the Frobenius manifold structure on the parameter space of the miniversal unfolding is, therefore, of the form:
\begin{equation}\label{eq:openWDVVsolutionADE}
    \Omega_W(x,\bm{v})=F_W(x,\bm{a}(\bm{v}))+\varpi(\bm{v})\,,
\end{equation}
with $F_W'(x,\bm{a}(\bm{v}))=f(z(x),w(z(x),\bm{a}(\bm{v})),\bm{a}(\bm{v}))\,$. \\

In particular, $\Omega_W'$ depends on the flat coordinates $v_1,\dots, v_\ell$ of the Saito metric \cref{eq:metricsingularities} through the parameters of the deformation $a_1,\dots, a_\ell\,$. This property was already identified for the types $A$ and $D$ in \cites{BuryakArsingularity, BuryakOpenAD}, where the solutions were constructed with an independent method. We have here proven that it is also the case in type $E$, and therefore for all Saito Frobenius manifolds. In fact, the solutions associated to the remaining irreducible Coxeter groups can be determined from the $ADE$ ones as follows:
\begin{equation}\label{eq:foldingsADE}
\begin{aligned}
        \Omega_{B_\ell}(x,v_1,\dots, v_\ell)&=\Omega_{A_{2\ell-1}}(x,v_1,0,v_2,0,\dots, v_{\ell-1},0,v_\ell)\,, &&&& \ell\in\ZZ_{\geq 2}\,,\\
        \Omega_{I_2(\ell)}(x,v_1,v_2)&=\Omega_{A_{\ell-1}}(x,v_1,0,\dots, 0,v_2)\,,&&&&\ell\in\ZZ_{\geq 3}\,,\\
        \Omega_{F_4}(x,v_1,v_2,v_3,v_4)&=\Omega_{E_6}(x,v_1,0,v_2,v_3,0,v_4)\,,\\
                \Omega_{G_2}(x,v_1,v_2)&=\Omega_{D_4}(x,v_1,0,0,v_2)\,,\\
                \Omega_{H_3}(x,v_1,v_2,v_3)&=\Omega_{D_6}(x,v_1,0,v_2,0,v_3,iv_2)\,,\\
        \Omega_{H_4}(x,v_1,v_2,v_3,v_4)&=\Omega_{E_8}(x,v_1,0,v_2,0,0,v_3,0,v_4)\,.
\end{aligned}
\end{equation}
This is an obvious consequence of the corresponding rules for the Frobenius manifold prepotentials given in \cite{ZuberotherCoxeter}, as already noted in \cite{BuryakOpenAD}. In particular, the solutions for the families $B_\ell$ and $C_\ell$ coincide, therefore we have omitted the latter.

In types $A$ and $D\,$, the solution $\Omega_W$ can be computed explicitly, since the curves $\UC_f(\bm{a})$ all have genus zero. They correspond to known solutions in the literature, which have been computed with a variety of different, independent methods. The power of this approach lies both in its simplicity to get to the solutions and into the fact that it is systematic, i.e. easily generalisable to the exceptional Weyl group, even though the solution can only be written implicitly.  
\begin{ex}[Type-$A$] \label{ex:Alrankone}For $W=A_\ell\,$, then one can simply take:
\begin{equation}
    \begin{aligned}
        \lambda_{A_\ell}(x)&=x^{\ell+1}+a_1\,x^{\ell-1}+\dots+a_\ell\,,&&& \omega_{A_\ell}&=\dd{x}\,.
    \end{aligned}
\end{equation}
As a consequence, the corresponding solution to the open WDVV equations is:
\begin{equation}
    \Omega_{A_\ell}(x,\bm{v})=\tfrac{1}{\ell+2}x^{\ell+2}+\tfrac{1}{\ell}a_1(\bm{v})\,x^\ell+\dots+a_\ell(\bm{v})\,x+\varpi_\ell(\bm{v})\,.
\end{equation}
These are known polynomial solutions to the open WDVV equations.
    Some of these, for given values of $\ell\,$, appear as examples in \cites{alcoladophd,dealmeida2025openhurwitzflatf}. Furthermore, in the works \cites{BuryakArsingularity,IntSystemsAD} the general solution has been computed explicitly using mirror symmetry, which relates $\Omega_{A_\ell}$ to the generating series for the intersection numbers of Witten's $(\ell+1)$-spin classes on the moduli spaces of genus-zero stable curves with marked points \cite{mirrorsymmetryAspin}. In particular, $\varpi_\ell$ is generally not zero, and the coefficient of each monomial in the flat coordinates $\bm{v}$ of the constant of integration $\varpi_\ell $ is:
    \[
    [v_1^{k_1}\dots v_\ell^{k_\ell}]\, \varpi_\ell(\bm{v})=\tfrac{1}{\ell+1}\begin{cases}
        \tfrac{1}{k_1!\dots k_\ell !}(k_1+k_2+\dots+ k_\ell-2)!\,, &\sum_\alpha(\alpha+1)k_\alpha=\ell+2\,,\\
        0\,, & \text{ otherwise}\,.
    \end{cases}\
    \]
\end{ex}

\begin{ex}[Type-$D$]
    For $W=D_\ell\,$, as already mentioned, the Landau-Ginzburg superpotential and primary form are still defined on $\RS$. In particular,  $f_{ww}=2z$ so that:
\[    \begin{aligned}
        \lambda_{D_\ell}(z)&=z^{\ell-1}+a_1\,z^{\ell-2}+\dots+a_{\ell-2}\,z+a_{\ell-1}-\tfrac{1}{4z}\,a_\ell^2\,,&&& \omega_{D_\ell}&=\frac{\dd{z}}{\sqrt{2z}}\,.
    \end{aligned}
\]
The ODE in \cref{eq:flatcoordinateextensionsDE} giving the flat coordinate on the fibre easily integrates to $x^2:=2z\,$. Therefore, the superpotential in the coordinate $x$ is: 
\begin{equation}
     \lambda_{D_\ell}(x)=\tfrac{1}{2^{\ell-1}}\,x^{2(\ell-1)}+\tfrac{1}{2^{\ell-2}}\,a_1\,x^{2(\ell-2)}+\dots+a_{\ell-1}-\tfrac{1}{2x^2}\,a_\ell^2\,.
\end{equation}
Hence, there will be a unique rank-one extension whose extended prepotential looks like:
\begin{equation}\label{eq:openWDVVDl}
    \Omega_{D_\ell}(x,\bm{v}):=\tfrac{1}{2^{\ell-1}\,(2\ell-1)}\,x^{2\ell-1}+\tfrac{1}{2^{\ell-2}\,(2\ell-3)}\,a_1(\bm{v})\,x^{2\ell-3}+\dots+a_{\ell-1}(\bm{v})\,x+\tfrac{1}{2x}\,a_\ell(\bm{v})^2+\varpi(\bm{v})\,,
\end{equation}
These are also known solutions, as they correspond to the ones in \cite[Theorem 5.2]{BuryakOpenAD}, up to a rescaling of the coordinates. In particular, here it is $\varpi=0\,$.
\end{ex}
The open WDVV solutions coming from open Saito theory in types $A$ and $D$ were also constructed as reductions to the two open WDVV solutions constructed on some classes of infinite-dimensional Frobenius manifolds in \cite{Ma25}. 
\begin{ex}[$E_6$]
    For the exceptional group $E_6\,$, we have $f_w=3w^2+a_1z^2+a_2z+a_4\,$. The flat coordinate $x$ is, then, found by solving \cref{eq:flatcoordinateextensionsDE}, where $f_{ww}=6w\,$. This amounts to solving the following linear, non-homogeneous PDE in two variables:
    \[
    6w\pdv{x}{z}-(2a_1z+a_2)\pdv{x}{w}=\sqrt{6w}\,.
    \]
    In the variable $x\,$, the superpotential will, therefore, be:
    \begin{equation}\label{eq:superpotentialE6}
            \lambda(x)=z(x)^4+w(x)^3+a_1\,z(x)^2w(x)+a_2\,z(x)w(x)+a_3\,z(x)^2+a_4\,w(x)+a_5\,z(x)+a_6\,.
    \end{equation}
    The solution $\Omega_{E_6}$ is an antiderivative of $\lambda$ with respect to $x\,$. In particular, it is manifest that, up to the constant of integration $\varpi(\bm{v})\,$, it will depend on the flat coordinates $\bm{v}$ on the Saito Frobenius manifold through the functions $a_1(\bm{v}),\dots, a_6(\bm{v})\,$. This solution, and the corresponding ones for the other two Weyl groups $E_7$ and $E_8\,$, have, to our knowledge, not appeared in the literature before. In particular, existence itself of these solutions was not known. These are, of course, implicit expressions for the solutions. An explicit algebraic expression might be achieved using the expression for the superpotential in \cite{TY97}, although perhaps of difficult application to any actual example.
\end{ex}

    The eventual identity $\EE_W$ in all these cases can also be easily computed using the grading in $\CC[z,w,\bm{a}]$ introduced by \cref{eq:homogeneitydeformation}. In fact, according to \cref{eq:flatcoordinateextensionsDE}, the degree of the flat variable $x$ is:
\[
\deg x=\deg z-\tfrac12\deg f_{ww}=r_1+r_2-\tfrac12 h=1\,,
\]
where we used the fact that, in all cases, $r_1+r_2=\tfrac12 h+1\,$, \cite{Dub04}. As a consequence, the superpotential always satisfies:
\begin{equation}
    \Lie_{E_W}\lambda=\lambda-\tfrac1h \,x\,\lambda_x=\lambda-\tfrac12(1-d)\,x\,\lambda_x\,,
\end{equation}
 where $d=1-\tfrac{2}{h}$ is the conformal dimension of the Saito Frobenius manifold. This proves the following:
 \begin{prop}\label{prop:EveIdentityADE}
       The eventual identity $\EE_W$ that realises Dubrovin duality on $\UC_f$ is given by:
     \begin{equation}
    \EE_W=E_W+\tfrac{1}{h}\,x\,\partial_x\,.
\end{equation}
 \end{prop}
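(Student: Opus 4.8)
The plan is to reduce the statement to the scaling property of the superpotential recorded just above, together with the intrinsic characterisation of an eventual identity in the Landau--Ginzburg picture. For the extended structure on $\UC_f$ built in \cite{almediaJacobi}, Dubrovin duality is implemented by replacing the primary datum $(\lambda,\omega)$ with a dual one, and the vector field playing the role of the dual unit --- the eventual identity $\EE_W$ --- is precisely the generator of the scaling symmetry under which $\lambda$ is homogeneous of weight one. Concretely, $\EE_W$ should be the unique vector field on the total space of $\UC_f\to\CC^\ell$ that restricts to the Frobenius Euler field $E_W$ on the base (by compatibility with the Dubrovin-dual structure already fixed there) and satisfies $\Lie_{\EE_W}\lambda=\lambda$. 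The geometric point is that, in the canonical coordinates given by the critical values $u_i$ of $\lambda$, this condition forces $\EE_W=\sum_i u_i\,\partial_{u_i}$, which is exactly Dubrovin's dual identity.

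First I would write the unknown field as $\EE_W=E_W+g(x,\bm a)\,\partial_x$, using that it must project to $E_W$ on the parameter directions, and impose the homogeneity $\Lie_{\EE_W}\lambda=\lambda$. Feeding in the relation $\Lie_{E_W}\lambda=\lambda-\tfrac1h\,x\,\lambda_x$ established above --- a direct consequence of the weighted-homogeneity \cref{eq:homogeneitydeformation} together with $\deg x=1$, itself read off from \cref{eq:flatcoordinateextensionsDE} and $r_1+r_2=\tfrac12h+1$ --- the condition collapses to $g\,\lambda_x=\tfrac1h\,x\,\lambda_x$. Away from the critical locus one has $\lambda_x\neq0$, whence $g=\tfrac1h\,x$, and by continuity this persists everywhere; thus $\EE_W=E_W+\tfrac1h\,x\,\partial_x=\tfrac1h\bigl(x\,\partial_x+\sum_{\mu}d_\mu a_\mu\,\partial_{a_\mu}\bigr)$ is uniquely determined. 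In other words $\EE_W$ is nothing but the total Euler field of the extended grading, normalised so that $\lambda$ sits in the unital eigenspace.

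The step I expect to carry the genuine content is the verification that this scaling condition really is equivalent to $\EE_W$ being an eventual identity realising Dubrovin duality on $\UC_f$, rather than merely the Euler field of a grading. Here I would check the eventual-identity axioms one by one --- invertibility of $\EE_W$ with respect to the extended product, compatibility of $\Lie_{\EE_W}$ with the unit and with the product tensor, and the requirement that $\Lie_{\EE_W}\omega$ be a constant multiple of $\omega$, so that the dual metric manufactured from the residue pairing \cref{eq:metricsingularities} is again flat with conformal weight controlled by $d=1-\tfrac2h$. Each of these should follow from the single identity $\Lie_{\EE_W}\lambda=\lambda$ once it is propagated through the Kodaira--Spencer isomorphism and Grothendieck's residue formula, but organising them into the full duality statement is the crux; the bookkeeping is lightened by the fact that the base restriction $E_W$ is already known to be the Euler field realising Dubrovin duality on the Saito Frobenius manifold itself.
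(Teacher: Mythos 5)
Your argument is essentially the paper's own: the proof given there consists precisely of computing $\deg x=r_1+r_2-\tfrac12 h=1$ from \cref{eq:flatcoordinateextensionsDE} and \cref{eq:homogeneitydeformation}, deducing $\Lie_{E_W}\lambda=\lambda-\tfrac1h\,x\,\lambda_x$, and reading off that adding $\tfrac1h\,x\,\partial_x$ makes $\lambda$ homogeneous of weight one, which is the Landau--Ginzburg characterisation of the eventual identity. Your additional remarks on uniqueness of the fibre component $g$ and on verifying the eventual-identity axioms are sensible elaborations of what the paper delegates to the cited framework, but they do not change the route.
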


As a consequence, it follows from \cref{prop:homogeneityE} that the solution $\Omega_W$ is weighted-homogeneous of overall degree $\tfrac12(3-d)=1+\tfrac{1}{h}$ with respect to $\EE_W\,$. This proves that for all irreducible Coxeter group there exists a weighted-homogeneous solution to the open WDVV equations with degrees given as above. As we have already mentioned, this is the weighted-homogeneity condition considered in \cite{BuryakOpenAD}, where it was proven that there is no weighted-homogeneous polynomial solution for Coxeter groups other than the $A_\ell$, $B_\ell$ and $I_2(\ell)$ families. The reason for this is now clear: these solutions all come from anti-differentiation of the miniversal unfolding of a simple singularity. However, the operation of restricting such polynomials in two complex variable to a smooth curve in $\CC^2$ breaks the polynomiality, unless the singularity is of type $A$. We can, therefore, sum up the results we have discussed so far in this section in the following:
\begin{prop}\label{prop:ADErankone}
    Let $W$ be a finite, irreducible Coxeter group, and consider the associated polynomial solution $\Fr_W$ to the closed WDVV equations as described in \cite{Duborbitspaces}. There exists a unique solution $\Omega_W$ to the associated open WDVV equations that is weighted-homogeneous of overall degree $1+\tfrac{1}{h}$ with respect to $\EE_W$ as defined in \cref{prop:EveIdentityADE}. In particular, $\Omega_W$ is given by \cref{eq:openWDVVsolutionADE} in the simply-laced case, and by \cref{eq:foldingsADE} when $W$ is not simply-laced.
\end{prop}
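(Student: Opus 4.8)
The plan is to separate existence from uniqueness, and to carry out the uniqueness argument intrinsically on the Saito Frobenius manifold, so that a single argument covers every irreducible $W$. Existence is essentially already in hand: in the simply-laced case the open Saito construction produces the function $\Omega_W$ of \cref{eq:openWDVVsolutionADE}, which solves the open WDVV equations by the general mechanism of \cite{almediaJacobi}, while the computation $\deg x=1$ of the flat variable carried out just before \cref{prop:EveIdentityADE}, combined with that proposition and \cref{prop:homogeneityE}, shows that it is weighted-homogeneous of overall degree $1+\tfrac1h$ with respect to $\EE_W$. For the non-simply-laced groups, \cref{eq:foldingsADE} exhibits a solution obtained by setting certain flat coordinates of the simply-laced parent to zero; this substitution preserves the open WDVV equations because it preserves the closed prepotential $\Fr_W$ by \cite{ZuberotherCoxeter,BuryakOpenAD}, and it preserves the weighted-homogeneity degree because the surviving grading is exactly the one induced on the folded orbit space \cite{Duborbitspaces}.

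For uniqueness, suppose $\Omega$ is any weighted-homogeneous solution of degree $1+\tfrac1h$ for the fixed closed structure $\Fr_W$, and write $\Omega(x,\bm v)=F_W(x,\bm v)+\varpi(\bm v)$ with $\partial_x\varpi=0$, so that $\deg_{\EE_W}\Omega_x=1$ and $\deg_{\EE_W}\varpi=1+\tfrac1h$. I would first isolate the open WDVV equation carrying a single open index. Writing $c_{\alpha\beta}{}^{\gamma}$ for the structure constants of the closed multiplication, this equation takes the form $(\partial_\alpha\Omega_x)(\partial_\beta\Omega_x)-c_{\alpha\beta}{}^{\gamma}\,\partial_\gamma\Omega_x=\Omega_{\alpha\beta}\,\Omega_{xx}$, which is precisely the Landau--Ginzburg multiplication rule with $\Omega_x$ in the role of the superpotential: it expresses every purely closed second derivative $\Omega_{\alpha\beta}$, and hence $\varpi_{\alpha\beta}$, algebraically through $\Omega_x$ and $c_{\alpha\beta}{}^{\gamma}$. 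Thus the whole solution is determined once $\Omega_x$ is, up to integrating $\varpi_{\alpha\beta}$ back to $\varpi$; and since $\varpi$ is homogeneous of the positive degree $1+\tfrac1h$, which exceeds the degree of every flat coordinate $v_\mu$ (all at most $1$) and of any constant, this integration is unique. It then remains to pin down $u:=\Omega_x$, homogeneous of degree $1$: the equations with all indices closed, after eliminating $\Omega_{\alpha\beta}$ through the relation above and imposing the integrability $\partial_\gamma\Omega_{\alpha\beta}=\partial_\beta\Omega_{\alpha\gamma}$, become an overdetermined but compatible first-order system for $u$ alone.

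The main obstacle is the rigidity of this last system: one must show that its weight-$1$ solution space, modulo an overall scale, is one-dimensional, so that $u=\Omega_x$ is forced to be the superpotential $\lambda$. I would establish this by passing to the canonical coordinates of the semisimple Saito Frobenius manifold -- the critical values of $f$, equivalently of $\lambda$ -- in which the structure constants are idempotent and the system decouples along the idempotent directions into linear equations for the flat data of the deformed connection; homogeneity under $\EE_W$ then leaves only an overall scale, which the normalisation fixed by the unit vector field removes. This identifies $u$ with $\lambda$, hence $F_W$ up to $\varpi$, and by the previous paragraph $\varpi$ is determined uniquely: it vanishes when the multiplication relation forces $\Omega_{\alpha\beta}=\partial_\alpha\partial_\beta F_W$, as in type $D$, and equals the polynomial of \cref{ex:Alrankone} in type $A$. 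Combining this uniqueness, valid for every irreducible Coxeter group since it uses only semisimplicity and the Euler data, with the existence established above completes the proof.
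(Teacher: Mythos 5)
Your existence argument is the same as the paper's: the Landau--Ginzburg construction plus the degree count $\deg x=\tfrac1h$ (hence $\deg\Omega_W=1+\tfrac1h$ via \cref{prop:EveIdentityADE} and \cref{prop:homogeneityE}) handles the simply-laced case, and the folding rules \cref{eq:foldingsADE} handle the rest; the paper indeed obtains the proposition by assembling exactly these pieces. The problem lies entirely in your uniqueness argument, which is where the real content of your proposal would have to be. Two steps are asserted rather than proved. First, extracting $\Omega_{\alpha\beta}$ from the single-open-index equation requires $\Omega_{xx}\neq 0$ for the \emph{competitor} solution, which you never establish (and your own remark about type $D$ quietly uses this division). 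Second, and more seriously, the claim that the residual system for $u=\Omega_x$ ``decouples into linear equations'' in canonical coordinates whose weight-one solution space is a line fixed by the unit normalisation is precisely the statement to be proved, and the ingredients you allow yourself do not suffice to prove it.

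Concretely, take $W=A_1$, with closed prepotential $\tfrac16 v^3$, $h=2$, $\deg x=\tfrac12$. For \emph{every} $a\in\CC$ the function $\Omega=a\,x^3+v\,x$ satisfies the open WDVV equation $\Omega_{vx}+\Omega_{vv}\Omega_{xx}=\Omega_{vx}^2$ (both sides equal $1$), is weighted-homogeneous of degree $\tfrac32=1+\tfrac1h$ with respect to $\EE_{A_1}=v\partial_v+\tfrac12 x\partial_x$, and obeys the unit condition $\partial_v\Omega_x=1$; correspondingly $u=3ax^2+v$ is weight one and unit-normalised for every $a$. The free parameter is only removed by fixing the scale of the fibre coordinate $x$, i.e.\ by the normalisation of the primary differential $\omega=\dd{x}$ through \cref{eq:flatcoordinateextensionsDE} (equivalently, by the normalisation of the residue pairing on the extension). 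That datum is built into the paper's construction but appears nowhere in your intrinsic argument, so your weight-one ``solution space modulo overall scale'' is not one-dimensional as claimed. Until you either incorporate that normalisation explicitly and show it propagates to general $W$, or actually carry out the linearisation/decoupling in canonical coordinates and compute the homogeneous solution space, the uniqueness half of the proof is a genuine gap.
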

\begin{oss}
    Notice that, in type $A$, the seemingly-bizarre condition on the monomials $v_1^{k_1}\dots v_\ell^{k_1}$ that contribute to $\Omega_{A_\ell}$ is actually making sure that their degree of homogeneity is the correct one.
\end{oss}

\subsection{Rank-two extensions}
Now, based on the previous discussion, it appears that the non-polynomiality of the solution $\Omega$ to the rank-one open WDVV equations comes from restriction to the curve. It is, then, natural to ask whether the Frobenius manifold structures on the parameter spaces of all the ADE singularities admit a \emph{polynomial} rank-two extension to $\CC^2\times\CC^\ell$ such that, when restricted to the sub-bundle $\UC_f$, gives the known rank-one extension.

To this end, we firstly notice that \cref{lem:auxiliaryextension} generalises to arbitrary rank:
\begin{lem}
Let $A$ be an $n$-dimensional associative, commutative complex algebra and $\widetilde{A}$ be a $(n+r)$-dimensional commutative complex algebra fitting in the following short exact sequence of the underlying vector spaces:
\[    \begin{tikzcd}
                        0\arrow[r ]&I\arrow[r] & \widetilde{A}\arrow[r,"\pi"] & A\arrow[r]& 0\,
        \end{tikzcd}
\]
in such a way that the surjection $\pi$ is a homomorphism of algebras. If $I^2\neq 0$ and the multiplication on $\widetilde{A}$ is associative when restricted to $\widetilde{A}\otimes\widetilde{A}\otimes I\,$, then it is associative. In such a case, we shall say that $\widetilde{A}$ is a \emph{rank-$r$ auxiliary extension} of $A$.
\end{lem}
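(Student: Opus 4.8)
The plan is to show directly that the associator $[a,b,c]:=(ab)c-a(bc)$ of $\widetilde A$ vanishes identically, mirroring the rank-one argument of \cref{lem:auxiliaryextension}. First I would record two structural facts. Since $\pi$ is a surjective algebra homomorphism and $A$ is associative, applying $\pi$ to $[a,b,c]$ yields the associator of $A$ evaluated at $\pi a,\pi b,\pi c$, which vanishes; hence $[a,b,c]\in\ker\pi=I$ for all $a,b,c\in\widetilde A$. Moreover $I$ is an ideal, so $\widetilde A\cdot I\subseteq I$. Finally, because $\widetilde A$ is commutative the associator is antisymmetric under exchange of its outer arguments, $[a,b,c]=-[c,b,a]$, which is a one-line consequence of $xy=yx$.

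Next I would use the hypothesis to annihilate every associator having at least one argument in $I$. The case $c\in I$ is the hypothesis itself; the case $a\in I$ follows from the antisymmetry just noted; and the case $b\in I$ reduces to these by a short computation that rewrites $[a,b,c]$ in terms of $[a,c,b]=[c,a,b]=0$ (both having their third argument in $I$) together with commutativity. Fixing a vector-space splitting $\widetilde A=A'\oplus I$ with $\pi|_{A'}$ an isomorphism onto $A$, multilinearity then collapses the whole statement to the single remaining case in which all three arguments lie in the complement $A'$.

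For that case the key step is to show that $[a,b,c]$ annihilates $I$. Given $e\in I$, I would evaluate $[a,b,c]\,e$ by successively reassociating products: each regrouping is legitimate because at every stage an element of $I$ (namely $e$, or the product $ce\in I$) occupies the third slot, where associativity is assumed. Carried out step by step, this transforms $((ab)c)e$ into $(a(bc))e$, so that $[a,b,c]\,e=0$ for all $e\in I$; equivalently, one may phrase this reassociation as the vanishing of the Hochschild-type coboundary identity when its last entry is taken in $I$. Combined with the first paragraph, this shows $[a,b,c]\in I\cap\mathrm{Ann}_{\widetilde A}(I)$ for all $a,b,c$.

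Finally I would invoke $I^2\neq0$ to conclude $[a,b,c]=0$, exactly as in rank one: there $I=\CC e$ is one-dimensional and $e^2\neq0$ forces $I\cap\mathrm{Ann}_{\widetilde A}(I)=0$, so the $I$-valued associator must vanish. The main obstacle is precisely this last step in higher rank: upgrading ``$[a,b,c]$ lies in $I$ and annihilates $I$'' to ``$[a,b,c]=0$'' is automatic only when the multiplication restricted to $I$ is nondegenerate, which is what $I^2\neq0$ encodes in the cases at hand; the reductions of the three preceding paragraphs are robust and carry over verbatim from the rank-one proof, so essentially all of the delicacy is concentrated in how $I^2\neq0$ is used here.
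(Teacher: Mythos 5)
Your argument is correct and follows the same overall strategy as the paper's proof: reduce to triples with all three arguments in a complement of $I$, observe that the associator of such a triple is $I$-valued, show that it annihilates $I$, and invoke $I^2\neq 0$. The difference is purely in execution. The paper works in a basis $e_1,\dots,e_n,a_1,\dots,a_r$, writes out the surviving associativity condition in structure constants, contracts it with the constants $F\indices{^s_{pt}}$ of the multiplication on $I$, and verifies the contracted identity by a page of index manipulation using the assumed relations on $\widetilde{A}\otimes\widetilde{A}\otimes I$. Your coordinate-free chain $((ab)c)e=(ab)(ce)=a(b(ce))=a((bc)e)=(a(bc))e$ (each step legitimate because an element of the ideal sits in the third slot) reaches the same intermediate conclusion, $[a,b,c]\in I\cap\mathrm{Ann}_{\widetilde{A}}(I)$, in four lines; your reduction of the case $b\in I$ to the case $c\in I$ via commutativity is likewise what the paper handles implicitly by listing all three index identities. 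This is a genuinely cleaner presentation of the same proof.

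The one caveat — which you flag yourself, and which applies verbatim to the paper's version — is the last step. What the argument actually needs is $I\cap\mathrm{Ann}_{\widetilde{A}}(I)=0$, i.e.\ that no nonzero element of $I$ kills all of $I$. For $r=1$ this is literally $I^2\neq 0$, but for $r\geq 2$ it is strictly stronger: take $I$ spanned by $a_1,a_2$ with $a_1^2=a_1$ and $a_1a_2=a_2^2=0$; then $I^2\neq 0$ while $a_2$ annihilates $I$. The paper's assertion that contracting with $F\indices{^s_{pt}}$ produces an \emph{equivalent} system is exactly this nondegeneracy assumption in disguise. So your proof is no weaker than the paper's; both should be read as assuming the nondegeneracy of the multiplication $I\times I\to I$ (which does hold generically in the intended application, e.g.\ for the type-$A$ rank-two extension away from the locus $f_z=0$), or the hypothesis of the lemma should be strengthened accordingly.
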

\begin{oss}
    In \cref{lem:auxiliaryextension}, the condition $\Omega''\neq 0$ is equivalent to the product in the one-dimensional ideal being non-zero, as its generator $a$ satisfies $a^2=\Omega''\,a\,$. The condition $I^2\neq 0$ is its generalisation to arbitrary rank.
\end{oss}
\begin{proof}
    Let $a_1,\dots, a_r$ be generators of the ideal $I$ and, for a fixed section $A\to\widetilde{A}\,$, let $e_1,\dots, e_n$ be the image of a system of generators of $A\,$. By definition, the product in $\widetilde{A}$ will be given, in such a basis, by:
    \[
    \begin{aligned}
        e_\alpha e_\beta&=\sum_{\mu=1}^nc\indices{^\mu_{\alpha\beta}}e_\mu+\sum_{p=1}^rF\indices{^p_{\alpha\beta}}\,a_p\,,\\
        e_\alpha a_p&=\sum_{q=1}^rF\indices{^q_{\alpha p}}\,a_q\,,\\
        a_pa_q&=\sum_{s=1}^rF\indices{^s_{pq}}\,a_s\,.
    \end{aligned}
    \]
    Here, $c\indices{^\mu_{\alpha\beta}}$ are the structure constants for the multiplication on $A$ in the given basis.
    
    By assumption, the product is associative on $\widetilde{A}\otimes\widetilde{A}\otimes I\,$, i.e. we have associativity in $I$ and on each triple $e_\alpha,e_\beta, a_p$ and $e_\alpha, a_p,a_q\,$. In terms of the structure constants, this means that they satisfy:
    \[
    \begin{aligned}
\sum_{\mu=1}^nc\indices{^\mu_{\alpha\beta}}F\indices{^p_{\mu q}}+\sum_{s=1}^rF\indices{^s_{\alpha\beta}}F\indices{^p_{qs}}&=\sum_{s=1}^rF\indices{^s_{\beta q}}F\indices{^p_{\alpha s}}\,,\\
\sum_{k=1}^rF\indices{^k_{pq}}F\indices{^s_{\alpha k}}&=\sum_{k=1}^sF\indices{^k_{\alpha p}}F\indices{^s_{kq}}\,,\\
\sum_{k=1}^rF\indices{^k_{qs}}F\indices{^t_{kp}}&=\sum_{k=1}^rF\indices{^k_{pq}}F\indices{^t_{ks}}\,.
    \end{aligned}
    \]
    What remains to be checked is, therefore, that $(e_\alpha e_\beta)e_\gamma=e_\alpha(e_\beta e_\gamma)\,$. Since we are assuming associativity on $A$, this reduces to:
    \[
    \begin{aligned}
\sum_{\mu=1}^nc\indices{^\mu_{\alpha\beta}}F\indices{^p_{\mu\gamma}}+\sum_{q=1}^rF\indices{^q_{\alpha\beta}}F\indices{^p_{\gamma q}}&=\sum_{\mu=1}^nc\indices{^\mu_{\beta\gamma}}F\indices{^p_{\mu\alpha}}+\sum_{q=1}^rF\indices{^q_{\beta\gamma}}F\indices{^p_{\alpha q}}\,.
    \end{aligned}
    \]
    Since $I^2\neq 0\,$, we can obtain an equivalent system by summing over $p$ with coefficients $F\indices{^s_{pt}}\,$. Using the previous equations on each term, for instance on the left-hand side, gives:
    \[
    \begin{aligned}
\sum_{\mu=1}^n\sum_{p=1}^rc\indices{^\mu_{\alpha\beta}}F\indices{^p_{\mu\gamma}}F\indices{^s_{pt}}&=\sum_{\mu=1}^nc\indices{^\mu_{\alpha\beta}}\biggl(\sum_{p=1}^rF\indices{^p_{\gamma t}}F\indices{^s_{p\mu}}-\sum_{\nu=1}^nc\indices{^\nu_{\mu\gamma}}F\indices{^s_{\nu t}}\biggr)\\
&=\sum_{p=1}^r\sum_{\mu=1}^nc\indices{^\mu_{\alpha\beta}}F\indices{^p_{\gamma t}}F\indices{^s_{\mu p}}-\sum_{\mu,\nu=1}^nc\indices{^\mu_{\alpha\beta}}c\indices{^\nu_{\mu\gamma}}\,F\indices{^s_{\nu t}}\,,\\
\sum_{p,q=1}^rF\indices{^q_{\alpha\beta}}F\indices{^p_{\gamma q}}F\indices{^s_{pt}}&=\sum_{p,q=1}^rF\indices{^q_{\alpha\beta}}F\indices{^p_{qt}}F\indices{^s_{\gamma p}}\\
&=\sum_{p,q=1}^rF\indices{^s_{\gamma p}}F\indices{^q_{\beta t}}F\indices{^p_{\alpha q}}-\sum_{p=1}^r\sum_{\mu=1}^nc\indices{^\mu_{\alpha\beta}}F\indices{^p_{\mu t}}F\indices{^s_{\gamma p}}\,.
    \end{aligned}
    \]
Since the product is also commutative, we have $e_\alpha(e_\beta a_p)=(e_\alpha e_\beta)a_p=e_\beta(e_\alpha a_p)\,$. In terms of the structure constants, this implies that $\sum_{s=1}^rF\indices{^s_{\beta q}}F\indices{^p_{\alpha s}}$ is equal to the same expression with $\alpha$ and $\beta$ swapped. Hence, the left-hand side of the remaining associativity condition boils down to:
\[
\begin{aligned}
\sum_{\mu,p}c\indices{^\mu_{\alpha\beta}}F\indices{^p_{\mu\gamma}}F\indices{^s_{pt}}+\sum_{p,q}F\indices{^q_{\alpha\beta}}F\indices{^p_{\gamma q}}F\indices{^s_{pt}}=\sum_{p,q}F\indices{^s_{\gamma p}}F\indices{^q_{\beta t}}F\indices{^p_{\alpha q}}-\sum_{\mu,\nu}c\indices{^\mu_{\alpha\beta}}c\indices{^\nu_{\mu\gamma}}\,F\indices{^s_{\nu t}}\,.
\end{aligned}
\]
An analogous computation on the right-hand side gives:
\[
\begin{aligned}
\sum_{\mu,p}^nc\indices{^\mu_{\beta\gamma}}F\indices{^p_{\mu\alpha}}F\indices{^s_{pt}}+\sum_{p,q}F\indices{^q_{\beta\gamma}}F\indices{^p_{\alpha q}}F\indices{^s_{pt}}&=\sum_{p,q}F\indices{^s_{\alpha p}}F\indices{^q_{\beta t}}F\indices{^p_{\gamma q}}-\sum_{\mu,\nu}c\indices{^\mu_{\beta\gamma}}c\indices{^\nu_{\mu\alpha}}F\indices{^s_{\nu t}}\,.
\end{aligned}
\]
Using again commutativity of the product in $\widetilde{A}$ and associativity in $A$ gives the statement.
\end{proof}
\begin{oss}
    In other words, when the product in the ideal is not zero, the first set of open WDVV equations in \cref{eq:openWDVVrankr} is redundant. The terminology is borrowed from \cite{alcoladophd}, where the rank-one case is studied.
\end{oss}
\begin{oss}
If we, now, consider the case where at each point in an almost-flat F-manifold $M$ we have a short exact sequence of vector spaces with $A$ being the tangent algebra at that point, then $I^2=0$ extends the case that is called \emph{extension by a module} in \cite{alcoladophd}. The extended prepotential will then be an affine function of the flat coordinates along the fibres.
\end{oss}
A partial answer to the previous question is given by the following:
\begin{prop}
    Let $\ell\in\ZZ_{\geq 1} $ and consider the Frobenius manifold on the parameter space of the miniversal deformation of the $A_\ell$-singularity. Let $\bm{v}$ denote its Saito flat coordinates. Define the functions:
    \[
    \begin{aligned}
        \Phi(z,w,\bm{v})&:=\tfrac{1}{\ell+2}\,z^{\ell+2}+\tfrac{1}{\ell}a_1(\bm{v})\,z^\ell+\dots+a_{\ell}(\bm{v})\,z+\varpi_\ell(\bm{v})\,,\\
        \Psi(z,w,\bm{a})&:=w\bigl[z^{\ell+1}+a_1(\bm{v})\,z^{\ell-1}+\dots+a_\ell(\bm{v})\bigr]\,,
    \end{aligned}
    \]
    with $a_1,\dots, a_\ell,\varpi_\ell$ as in \cref{ex:Alrankone}. $\Phi\,\partial_z+\Psi\,\partial_w$ is the extended prepotential of a rank-two extension to $\CC^2\times\CC^\ell$ of the corresponding Saito Frobenius manifold of type-$A\,$, in the system of flat coordinates $(\bm{v},z,w)\,$. Furthermore:
    \begin{itemize}
            \item $(\Phi\,\partial_z+\Psi\,\partial_w)\lvert_{\UC_f}=\Omega\,\partial_x\,$, $\Omega$ being the prepotential of the known rank-one extension.
    \item It is a polynomial, auxiliary rank-two extension that is weighted-homogeneous with respect to:
    \[
    \widetilde{\EE}_W:=E_W+\tfrac{1}{h}\bigl(r_1 \,z\,\partial_z+r_2\,w\,\partial_w\bigr)\,,
    \]
    with $r_1,r_2,h\in\mathbb{Q}_{\geq 0}$ as in \cref{eq:gradingADE} and overall weights:
    \[
    \begin{aligned}
        \deg\Phi&:=1+\tfrac{r_1}{h}\,,&&& \deg\Psi&:=1+\tfrac{r_2}{h}\,.
    \end{aligned}
    \]
    \end{itemize}
\end{prop}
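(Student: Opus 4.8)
The plan is to extract from $\Phi,\Psi$, via \cref{eq:openWDVVrankr}, the structure constants of a commutative product on the fibrewise extension $\widetilde A$ of the Saito tangent algebra $A$ by the rank-two ideal $I:=\langle\partial_z,\partial_w\rangle$, to prove that product associative, and only then to deal with the two bullets. Write $\lambda:=z^{\ell+1}+a_1z^{\ell-1}+\dots+a_\ell$ for the type-$A$ superpotential of \cref{ex:Alrankone}, and observe the three facts that govern everything: $\Phi$ is independent of $w$, $\Psi=w\,\lambda$ is linear in $w$, and $\Phi_z=\Psi_w=\lambda$. Differentiating, the product among fibre directions is $\partial_z\circ\partial_z=\lambda_z\,\partial_z+w\,\lambda_{zz}\,\partial_w$, $\partial_z\circ\partial_w=\lambda_z\,\partial_w$ and $\partial_w\circ\partial_w=0$, while the mixed products are $\partial_{v_\alpha}\circ\partial_z=(\partial_{v_\alpha}\lambda)\,\partial_z+w\,(\partial_{v_\alpha}\lambda_z)\,\partial_w$ and $\partial_{v_\alpha}\circ\partial_w=(\partial_{v_\alpha}\lambda)\,\partial_w$. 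Since $\widetilde A/I\cong A$, commutativity and the fact that the product restricts to the Saito one are automatic; and since
\[
\partial_z\circ\partial_z=\lambda_z\,\partial_z+w\,\lambda_{zz}\,\partial_w\neq 0\,,
\]
we have $I^2\neq 0$, so the rank-$r$ auxiliary extension lemma above reduces associativity of $\widetilde A$ to its restriction to $\widetilde A\otimes\widetilde A\otimes I$; the pure closed associativity is then inherited from the Saito structure for free.

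The two bullets are comparatively cheap. For the restriction, recall from \cref{ex:Alrankone} that in type $A$ the curve $\UC_f$ is $\{w=0\}$ and the flat coordinate of the primary form is identified with $z$, so $\partial_z$ restricts to $\partial_x$; as $\Psi\lvert_{w=0}=0$ and $\Phi\lvert_{w=0}=\Omega$, we get $(\Phi\,\partial_z+\Psi\,\partial_w)\lvert_{\UC_f}=\Omega\,\partial_x$. Polynomiality is manifest, since $a_1,\dots,a_\ell$ and $\varpi_\ell$ are polynomials in $\bm v$. For weighted-homogeneity, assign to $\widetilde\EE_W$ the weights $\deg z=r_1/h$, $\deg w=r_2/h$ and $\deg a_\mu=\deg v_\mu=d_\mu/h=(\mu+1)/(\ell+1)$; then every monomial $a_\mu z^{\ell+1-\mu}$ of $\Phi$—the leading $z^{\ell+2}$ included—carries degree $(\ell+2)/(\ell+1)=1+r_1/h$, and $\varpi_\ell$ is homogeneous of that same degree precisely because its support condition $\sum_\alpha(\alpha+1)k_\alpha=\ell+2$ is the statement $\deg(v_1^{k_1}\cdots v_\ell^{k_\ell})=(\ell+2)/(\ell+1)$. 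As $\lambda$ is homogeneous of degree $1$, $\Psi=w\,\lambda$ has degree $1+r_2/h$, as required.

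The substance is the associativity on $\widetilde A\otimes\widetilde A\otimes I$, and this is where I expect the real work. The equations internal to $I$ and the closed--open--open ones close up among the finitely many polynomials $\lambda,\lambda_z,\lambda_{zz}$ and hold as polynomial identities, checked by direct substitution of the products listed above; by commutativity they impose no constraint. The delicate family is the closed--closed--open one,
\[
\sum_\mu c\indices{^\mu_{\alpha\beta}}\,(\partial_{v_\mu}\lambda)+F\indices{^z_{\alpha\beta}}\,\lambda_z=(\partial_{v_\alpha}\lambda)(\partial_{v_\beta}\lambda)\,,
\]
which couples the Saito structure constants $c\indices{^\mu_{\alpha\beta}}$—available only implicitly through the change of coordinates $\bm a(\bm v)$—to the open data through the ideal-part $F\indices{^z_{\alpha\beta}}$ of the closed product $\partial_{v_\alpha}\circ\partial_{v_\beta}$. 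The plan is to recognise this as the Landau--Ginzburg presentation of the Saito product,
\[
(\partial_{v_\alpha}\lambda)(\partial_{v_\beta}\lambda)\equiv\sum_\mu c\indices{^\mu_{\alpha\beta}}\,(\partial_{v_\mu}\lambda)\pmod{\lambda_z}\,,
\]
i.e. the ring isomorphism $A\cong\CC[z]/(\lambda_z)$ furnished by the residue pairing \cref{eq:metricsingularities}, with $F\indices{^z_{\alpha\beta}}$ forced to equal the quotient $(\partial_{v_\alpha}\lambda\,\partial_{v_\beta}\lambda-\sum_\mu c\indices{^\mu_{\alpha\beta}}\partial_{v_\mu}\lambda)/\lambda_z$; differentiating this one congruence in $z$ then yields the remaining components of the family. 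The main obstacle is precisely the bookkeeping that makes this identification rigorous: confirming that the constants prescribed by \cref{eq:openWDVVrankr} from the explicit $\Phi,\Psi$ agree with those dictated by $\lambda$ (in particular that $F\indices{^p_{\alpha\beta}}$ is the quotient above and its $z$-derivative), passing between the flat $\bm v$-frame and the $\bm a$-frame in which $\partial_{a_\mu}\lambda=z^{\ell-\mu}$ is transparent, and checking that the integration constant $\varpi_\ell$, which enters only through $\bm v$-derivatives, drops out of every $z$-dependent congruence.
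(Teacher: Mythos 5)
Your reduction is structurally the same as the paper's: you write down the identical multiplication table, observe $I^2\neq 0$ from $\partial_z\circ\partial_z\neq 0$ so that the auxiliary-extension lemma discards the purely closed triples, and you dispose of the two bullets (restriction to $w=0$ with $x=z$, polynomiality, and the degree count) exactly as the paper does; all of that is fine. The gap sits in the one family you yourself single out as ``the substance'', the closed--closed--open equations. You correctly isolate the $\partial_z$-component
\[
f_\alpha f_\beta=\sum_\mu c\indices{^\mu_{\alpha\beta}}\,f_\mu+F\indices{^z_{\alpha\beta}}\,f_z
\]
and note that the Jacobian-ring presentation of the Saito product guarantees divisibility, so that a solution $F\indices{^z_{\alpha\beta}}=(f_\alpha f_\beta-\sum_\mu c\indices{^\mu_{\alpha\beta}}f_\mu)/f_z$ exists. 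But the proposition requires that this quotient equal $\Phi_{\alpha\beta}=\partial_{v_\alpha}\partial_{v_\beta}\Phi$ for the \emph{specific} antiderivative $\Phi$ given in the statement, and you leave precisely that identification as unfinished ``bookkeeping''. This is the only nontrivial input to the whole proof, and your remark that $\varpi_\ell$ ``drops out of every $z$-dependent congruence'' points the wrong way: $\partial_\alpha\partial_\beta\varpi_\ell$ is exactly the $z$-independent part of $\Phi_{\alpha\beta}$, so the equation pins it down, and the particular intersection-number coefficients recorded in \cref{ex:Alrankone} are what make the identity hold on the nose rather than merely modulo $f_z$.

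The paper closes this step in one line rather than by direct computation: the $\partial_z$-component above is verbatim the rank-one open WDVV equation \cref{eq:openwdvvf} with $K\indices{^z_{\alpha\beta}}=\Phi_{\alpha\beta}$, i.e.\ the equation already known to be satisfied by $\Omega_{A_\ell}$ from \cref{ex:Alrankone} and \cref{prop:ADErankone} (with $x$ renamed $z$), while the $\partial_w$-component is $w$ times its $z$-derivative because $\Psi_{\alpha\beta}=w f_{\alpha\beta}=w\,\partial_z\Phi_{\alpha\beta}$ --- the latter observation you do anticipate. If you replace your planned direct verification of the quotient by this citation of the established rank-one solution, your argument becomes complete and coincides with the paper's.
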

\begin{proof}
    For the sake of simplicity, we denote $\pdv{z}=:Z$ and $\pdv{w}=:W\,$. The multiplication defined by $\Phi$ and $\Psi$ in the ideal $I$ generated by $Z$ and $W$ is, according to \cref{eq:multtableextension}:
    \[
    \begin{aligned}
        Z^2&:=f_z\,Z+wf_{zz}\,W\,,&&& ZW&:=f_z\,W\,,&&& W^2&=0\,,
    \end{aligned}
    \]
    where $f$ is the miniversal deformation as in \cref{eq:miniversaldeformationsADE}.    
    It is an easy computation to check associativity in $I\,$. 
    
    Similarly, we denote $e_\mu:=\pdv{v_\mu}\,$, so that:
    \[
    \begin{aligned}
        e_\mu Z&:=f_\mu\,Z+wf_{z\mu},W\,,&&& e_\mu W&:=f_\mu\,W\,,&&&\mu&=1,\dots, \ell\,;
    \end{aligned}
    \]
 Again, it is easy to check associativity on triples involving one of the generators $e_1,\dots, e_\ell\,$.

 Finally, the associativity equations coming from checking it on triples containing two generators of the Jacobian ring are:
 \[
 \begin{aligned}
     f_\alpha f_\beta &=c\indices{^\mu_{\alpha\beta}}f_\mu+\Phi_{\alpha\beta}f_z\,,\\
     wf_\alpha \,f_{z\beta}+wf_\beta f_{z\alpha}&=w\,c\indices{^\mu_{\alpha\beta}}f_{\mu z}+w\Phi_{\alpha\beta}f_{zz}+\Psi_{\alpha\beta}f_z\,.
 \end{aligned}
 \]
 Since $\Psi_{\alpha\beta}=wf_{\alpha\beta}\,$, the second equation is automatically satisfied when $w=0\,$, or divisible by $w$ otherwise. In the latter case, since $\partial_z\Phi_{\alpha\beta}=f_{\alpha\beta}\,$, it is the $z$-derivative of the first one. The first equation is, then, \cref{eq:openwdvvf} with $K\indices{^z_{\alpha\beta }}:=\Phi_{\alpha\beta}$ and $K\indices{^w_{\alpha\beta}}:=0\,$. Using the previous Lemma, then, this proves that $\Phi$ and $\Psi$ give a rank-two extension. The fact that, when restricted to $w=0\,$, the extended vector potential reduces to the one of the rank-one extension is obvious, as $\Psi$ vanishes on such a curve and $\Phi_z$ is $f$ when on $w=0\,$. It is important to notice that the fact that the extended flat coordinate $x$ of the rank-one extension coincides with $z$ is important here. Weighted-homogeneity is also easily checked. In particular, $\Phi$ and $\Psi$ are, respectively, weighted-homogeneous polynomials of degree $r_1+h$ and $r_2+h$ respectively when one grades $\CC[z,w,\bm{a}]$ with the degrees defined by \cref{eq:homogeneitydeformation}. 
\end{proof}
\begin{oss}
    This is not the unique such extension. For instance, the same $\Phi$ works with $\Psi\in\CC[w]$ being any weighted-homogeneous polynomial of the correct degree -- which, in this case, it means that it is a constant multiple of $w^3\,$. Another criterion to pick out a unique rank-two extension might therefore be necessary. 
\end{oss}
We, therefore, formulate the following:
\begin{conj}\label{conj:ADE}
    Let $W$ be a simply-laced, irreducible Coxeter group of rank $\ell\in\ZZ_{\geq 1}$. There exists a rank-two extension of the associated Saito Frobenius manifold to $\CC^2\times \CC^\ell$ such that:
    \begin{enumerate}
    \item Adapted flat coordinates on the total space are given by $(\bm{v},z,w)\,$, where $\bm{v}$ are Saito flat coordinates on the base space and $(z,w)$ are coordinates on $\CC^2$ such that the miniversal deformation of the corresponding simple singularity of type $W$ is put in the normal form as in \cref{eq:miniversaldeformationsADE}.
    \item It is polynomial and weighted-homogeneous with respect to the Euler vector field:
    \[
    \widetilde{\EE}_W:=E_W+\tfrac{1}{h}\bigl(r_1 \,z\,\partial_z+r_2\,w\,\partial_w\bigr)\,.
    \]
    Equivalently, the component functions $\Phi$ and $\Psi$ of the associated extended prepotential $\Phi\,\partial_z+\Psi\,\partial_w$ in an adapted coordinate frame are homogeneous polynomials in $\CC[z,w,\bm{a}]$ of degree $r_1+h$ and $r_2+h$ respectively, with respect to the grading induced by \cref{eq:homogeneitydeformation} and \cref{eq:gradingADE}. 
    \item When restriced to the sub-bundle $\UC_f\hookrightarrow \CC^2\times\CC^\ell\,$, it coincides with the rank-one extension described in \cref{prop:ADErankone}.
    \end{enumerate}
\end{conj}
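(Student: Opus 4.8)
The plan is to proceed exactly as in the type-$A$ Proposition just proved, reducing the existence of the extension to the open WDVV equations and then constructing the two potential components $\Phi,\Psi$ as explicit polynomials, the only genuine difficulty being their polynomiality for $D_\ell$ and $E_\ell$. First I would set up the candidate flat F-manifold structure on $\CC^2\times\CC^\ell$ as in \cref{eq:multtableextension}: the base algebra is the Saito tangent algebra with structure constants $c^\mu_{\alpha\beta}$, the ideal $I$ is generated by $Z:=\partial_z$ and $W:=\partial_w$, and every product involving a fibre direction is prescribed by the Hessians (in the flat frame $(\bm v,z,w)$) of the unknown components of the extended prepotential $\Phi\,\partial_z+\Psi\,\partial_w$. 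Since $I^2\neq 0$ in all types, as one reads off $Z^2$ from the normal forms in \cref{eq:miniversaldeformationsADE}, the rank-$r$ auxiliary extension Lemma above applies with $r=2$, so associativity of the whole structure follows from associativity on triples containing a generator of $I$ together with the single closed-type relation $f_\alpha f_\beta\equiv\sum_\mu c^\mu_{\alpha\beta}f_\mu\pmod{(f_z,f_w)}$. This is precisely the Kodaira--Spencer isomorphism for the relative Jacobian ring, and it fixes the fibre components of $e_\alpha\cdot e_\beta$ to be the polynomial co-factors $P_{\alpha\beta},Q_{\alpha\beta}\in\CC[z,w,\bm a]$ in the lift $f_\alpha f_\beta=\sum_\mu c^\mu_{\alpha\beta}f_\mu+P_{\alpha\beta}f_z+Q_{\alpha\beta}f_w$.

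The second step is the construction of $\Phi$ and $\Psi$ themselves. The open WDVV equations now read $\partial_{v_\alpha}\partial_{v_\beta}\Phi=P_{\alpha\beta}$ and $\partial_{v_\alpha}\partial_{v_\beta}\Psi=Q_{\alpha\beta}$, together with the analogues in which one or both base derivatives are replaced by $\partial_z$ or $\partial_w$; as in the type-$A$ proof these latter relations are either automatically satisfied or are $z$- and $w$-derivatives of the former, reducing the whole system to the integrability of $P_{\alpha\beta},Q_{\alpha\beta}$ as Hessians in the flat base coordinates. I would deduce this integrability, i.e.\ the symmetry $\partial_{v_\gamma}P_{\alpha\beta}=\partial_{v_\alpha}P_{\gamma\beta}$ and likewise for $Q$, from the WDVV equations satisfied by the closed prepotential $\Fr_W$ of \cref{prop:ADErankone}: differentiating the defining lift and reducing modulo the Jacobian ideal turns associativity of the base product into exactly these symmetries, which is the statement that \cref{eq:openwdvvf} holds with $K^z_{\alpha\beta}=\Phi_{\alpha\beta}$ and $K^w_{\alpha\beta}=\Psi_{\alpha\beta}$. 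Weighted-homogeneity with respect to $\widetilde{\EE}_W$ then pins down $\Phi$ and $\Psi$ as homogeneous antiderivatives of degrees $r_1+h$ and $r_2+h$, up to the harmless ambiguity noted after the type-$A$ Proposition.

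It remains to verify the restriction property. Here I would argue that restricting the rank-two structure to $\UC_f$ produces a rank-one extension whose Landau--Ginzburg data are exactly the superpotential $\lambda=f|_{\UC_f}$ and primary form $\omega=\dd{z}/\sqrt{f_{ww}}=\dd{x}$ used to build $\Omega$ in \cref{prop:ADErankone}. The mechanism is that on the curve $\{f_w=0\}$ the relative Jacobian ring $\CC[z,w,\bm a]/(f_z,f_w)$ specialises to the one-variable Jacobian ring of $\lambda$, since $\lambda_z=f_z|_{\UC_f}$; concretely one arranges $\Phi$ and $\Psi$ so that $\Phi_z\equiv\Psi_w\equiv f\pmod{f_w}$, whence $\Phi_z|_{\UC_f}=\Psi_w|_{\UC_f}=\lambda=\partial_x\Omega$. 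By the uniqueness of the weighted-homogeneous rank-one solution in \cref{prop:ADErankone}, the induced structure must then coincide with $\Omega\,\partial_x$. For $D_\ell$ I would make this completely explicit, using that the curve $2zw+a_\ell=0$ is rational and $\Omega$ is the closed expression of \cref{eq:openWDVVDl}, which I expect to yield a uniform (in $\ell$) closed form for $\Phi$ and $\Psi$.

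The main obstacle is the step genuinely beyond type $A$: proving that the homogeneous antiderivatives $\Phi,\Psi$ exist as \emph{polynomials}, uniformly for $E_6,E_7,E_8$. For these the curve $\UC_f$ has positive genus and $\Omega$ is only implicit, since it involves $\sqrt{f_{ww}}$ and the integration of \cref{eq:flatcoordinateextensionsDE}, so the very existence of a polynomial lift of $\Omega$ off the curve is the crux of the conjecture and cannot be read off as in the rational cases. Homogeneity does reduce the problem to finite-dimensional linear algebra, as there are only finitely many monomials of degree $r_1+h$ and $r_2+h$, so each exceptional case can in principle be settled by a computer-assisted check that the co-factors $P_{\alpha\beta},Q_{\alpha\beta}$ integrate to polynomials and satisfy the open WDVV equations as identities modulo $(f_z,f_w)$. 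What I do not see how to produce, and what would upgrade the conjecture to a theorem, is a uniform conceptual argument that bypasses this case analysis and explains structurally why the non-polynomiality of $\Omega$ is always an artifact of the restriction to $\UC_f$.
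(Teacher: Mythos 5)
The statement you are trying to prove is stated in the paper as a \emph{conjecture}: the paper offers no proof of it, only the type-$A$ case (the preceding Proposition) as evidence, and explicitly flags that a uniform argument is missing. Your proposal is therefore best read as a strategy outline rather than a proof, and to your credit you identify the genuine crux correctly at the end: nothing in your argument establishes that the homogeneous antiderivatives $\Phi,\Psi$ exist as \emph{polynomials} for $D_\ell$ and $E_6,E_7,E_8$, and that is precisely the content of the conjecture. A proposal whose final paragraph concedes the key step cannot be accepted as a proof.

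Beyond the acknowledged gap, two intermediate steps are also not sound as written. First, the co-factors $P_{\alpha\beta},Q_{\alpha\beta}$ in the lift $f_\alpha f_\beta=\sum_\mu c\indices{^\mu_{\alpha\beta}}f_\mu+P_{\alpha\beta}f_z+Q_{\alpha\beta}f_w$ are not canonically determined: they are only fixed modulo the syzygies of $(f_z,f_w)$, so the claim that associativity of the base product ``turns into exactly'' the Hessian symmetries $\partial_{v_\gamma}P_{\alpha\beta}=\partial_{v_\alpha}P_{\gamma\beta}$ does not follow; one must exhibit a \emph{choice} of representatives that is simultaneously symmetric, integrable, polynomial and of the right weighted degree, which is again the whole difficulty. (In the paper's type-$A$ proof this is done by writing down $\Phi$ and $\Psi$ explicitly, not by an integrability argument.) Second, your verification of the restriction property silently assumes that the fibre coordinate $z$ restricts to the flat coordinate $x$ of the rank-one extension on $\UC_f$. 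The paper points out that this coincidence is special to type $A$; for $D_\ell$ one has $x^2=2z$, and for type $E$ the change of variable \cref{eq:flatcoordinateextensionsDE} is only implicit, so the identification $\Phi_z\lvert_{\UC_f}=\partial_x\Omega$ requires a genuine argument about how the rank-two flat structure on the fibre $\CC^2$ induces the flat coordinate $x$ on the curve, which your proposal does not supply.
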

\begin{oss}
    If the conjecture is true, then the same result holds for any irreducible Coxeter group $W$ using the rules in \cref{eq:foldingsADE}.
\end{oss}
\begin{oss}
    If it exists, the rank two solutions for the exceptional groups $E_6\,,E_7\,,E_8$ would be easily computable from the corresponding versal deformations. In particular, no implicit change of variables like the one that leads to \cref{eq:superpotentialE6} would be needed. This would give explicit expressions for the rank-two open WDVV solutions associated to the exceptional Weyl groups. 
\end{oss}
\begin{oss}
    The rank-one extension is basically constructed by identifying the additional generator with $\lambda_x\,$. The na\"{\i}ve generalisation of identifying the two additional generators of the rank-two extension with $f_z$ and $f_w$ will clearly not work. For example, it is not the case even in type-$A\,$.
\end{oss}
\begin{oss}
    As an argument against \cref{conj:ADE}, one could say that the case of a singularity of type $A$ might, in some sense, be too simple to grasp any general behaviour. In particular, only the $z$-derivative of the singularity has a non-simple zero at the origin, and therefore gives a non-trivial relation in the Jacobian ring. In Arnold's classification, one can always increase the dimension of the hypersurface by adding the square of the corresponding variable -- du Val singularities correspond to the three-variable case, for instance -- but this will not change the singular behaviour at zero, as it will only produce a simple critical point in the additional directions. Equivalently, this does not change the Jacobian ring of the singularity. However, vice versa, one cannot go any lower than dimension two for type $D$ and $E$ by removing quadratic terms (i.e. restricting to the zero set of the corresponding derivative), while this is possible in type $A\,$.
\end{oss}

\section{Examples of dual solutions}\label{sec:dualsolutions}
\documentclass[main.tex]{subfile}

In this section, we provide a few explicit solutions to the open WDVV equations in the dual case for some Hurwitz Frobenius manifolds, as described in \cref{sec:dualityextension}. In particular, the examples in \cref{subs:SaitotypeA} will fit into a wider class of solutions, which will be described in the subsequent paper \cite{opencech}.
\subsection{Open Hurwitz Saito flat F-manifolds of type-$A$}\label{subs:SaitotypeA}
Consider the Hurwitz space $\Hw_{0\,;\,\ell}$, for $\ell\in\ZZ_{\geq 1}$, equipped with the primary form $\omega$ being the second-kind meromorphic differential on $\RS$ with one double pole at $\infty_0$. This is a primary form of the first type, as classified in \cite[Lecture 5]{Dubrovin1996}. The Hurwitz Frobenius manifold is known to be isomorphic to the Saito Frobenius manifold $\CC^\ell\,/\,W(A_\ell)$ \cite{Duborbitspaces}.

We are, here, going to give its almost-dual open WDVV solution in the sense discussed in \cref{sec:dualityextension}. The superpotential and almost dual prepotential are given, as a function of the flat coordinates $\bm{w}$ of the intersection form, in \cite{Duborbitspaces,Dub04}:
\begin{align}
\lambda (x,\bm{w})&=(x+\overline{w})(x-w_1)\dots(x-w_\ell)\,,\\         \Frst (\bm{w})&=\tfrac12\sum_{1\leq a<b\leq \ell}(w_a-w_b)^2\,\log(w_a-w_b)+\tfrac12\sum_{a=1}^\ell (w_a+\overline{w})^2\,\log(w_a+\overline{w})\,, 
\end{align}
where $\overline{w}:=w_1+\dots+w_\ell$. The coordinate $x$ on $\RS$ is here chosen so that $\infty_0=\infty$ and $\omega=\dd{x}$. According to \cref{thm::main}, there exists a unique extension of the almost-flat Riemannian F-manifold almost-dual to $\Hw_{0\,;\,\ell}$, whose extended prepotential is of the form:
\begin{equation}\label{eq:omegaAl}
    \Omega (x,\bm{w})=(x+\overline{w})\log(x+\overline{w})+\sum_{a=1}^\ell(x- w_a)\log(x-w_a)+\varpi (\bm{w})\,,
\end{equation}
    where the constant of integration $\varpi $ is fixed -- up to quadratic polynomials -- by requiring that $\Omega$ satisfies the last set of open WDVV equations \ref{eq:openWDVV}:
    \begin{equation}\label{eq:conditionomega}
           \Omega'' \Omega_{ab}=\Omega'_a\Omega'_b-g^{cd}\,\Frst_{abc}\Omega'_d\,.
    \end{equation}
Now, both sides of the previous expressions are rational functions in $x\in\RS$ with poles at the zeros of $\lambda$. In particular, the left-hand side reads:  \[
  \begin{aligned}
      \Omega ''\Omega _{ab}&=\biggl[\frac{1}{x+\overline{w}}+\sum_c\frac{1}{x-w_c}\biggr]\biggl[\frac{1}{x+\overline{w}}+\frac{\delta_{ab}}{x-w_a}+\pdv[2]{\varpi }{w_a}{w_b}\biggr]\\
      &=\frac{1}{(x+\overline{w})^2}+\frac{1}{x+\overline{w}}\biggl[\frac{\delta_{ab}}{x-w_a}+\sum_c\frac{1}{x-w_c}\biggr]+\frac{\delta_{ab}}{x-w_a}\sum_c\frac{1}{x-w_c}+\pdv[2]{\varpi }{w_a}{w_b}\Omega'' \,.\end{aligned}\]
Hence, $\Omega''\Omega_{ab}$ has a double pole at $-\overline{w}$ and at $w_a$ when $a=b$, and simple poles at the other zeros of $\lambda$. As for the right-hand side, since only $x$-derivatives of $\Omega$ appear, it will not depend on $\varpi$. Furthermore, given that the structure constants are functions of the flat coordinates only, the double poles on the right-hand side can only come from the term $\Omega_a'\Omega_b'$:
\[
\begin{aligned}
    \Omega'_a\Omega'_b&=\frac{1}{(x+\overline{w})^2}-\frac{1}{x+\overline{w}}\biggl[\frac{1}{x-w_a}+\frac{1}{x-w_b}\biggr]+\frac{1}{(x-w_a)(x-w_b)}\,.
\end{aligned}
\]
Hence, as expected, the double poles cancel out and we are left with an expression having only simple poles at the zeros of $\lambda$ (and possibly at $\infty$):
\[
\begin{aligned}
    g^{cd}\Frst_{abc}\Omega'_d&=\Omega'_a\Omega'_b-\Omega''\Omega_{ab}\\
  \frac{1}{x+\overline{w}}\sum_dg^{cd}\Frst_{abc}-\sum_d\frac{g^{cd}\Frst_{abc}}{x-w_d}  &=\frac{1-\delta_{ab}}{(x-w_a)(x-w_b)}-\frac{1}{x+\overline{w}}\sum_{c\neq a,b}\frac{1}{x-w_a}+\\
  &\quad-\frac{\delta_{ab}}{x-w_a}\sum_{c\neq a}\frac{1}{x-w_c}-\pdv[2]{\varpi}{w_a}{w_b}\Omega''\,.
\end{aligned}
\]
Now, the rational function on the left-hand side of the previous equation and all the four summands in the right-hand side are holomorphic at $\infty$ except for the last one -- e.g. one easily checks that the residues at the zeros of $\lambda$ for each of them sum up to zero separately. It follows that we need to fix $\varpi$ so that $\varpi_{ab}\Omega''$ is also holomorphic at $\infty$. This clearly happens if and only if $\varpi_{ab}=0$ for any $a\leq b$, since $\Omega''$ has a simple pole at $\infty$ with residue $\Res_\infty\{\Omega''\dd{x}\}=\ell+1$. As a consequence, we can take $\varpi =0$ without loss of generality.

\subsection{Dubrovin-Zhang Frobenius manifolds of type-$A$}\label{ss:DZtypeA}
We consider, for fixed $\ell,r\in\ZZ_{\geq 1}$, the Frobenius manifold structure on $\Hw_{0\,;\,(\ell-1,r-1)}$ coming from the choice of the primary form $\omega$ being the third-kind Abelian differential on $\RS$ with poles and $\infty_0$ and $\infty_1$ and residues $\pm 1$ respectively. This is a primary differential of the third type. It is a well-known result that this Hurwitz Frobenius manifold is isomorphic to the one on the orbit space of the extended affine Weyl group $\widetilde{W}^{(r)}(A_{\ell+r-1})$, as defined in \cite{dubrovin_zhang_1998}.

The superpotential and almost-dual prepotential, as functions of flat coordinates $w_1,\dots, w_{\ell+r}$ of the intersection form, are given in \cites{dubrovin_zhang_1998,rileyeaw} respectively:
\begin{align}
        \lambda(x)&=e^{-rx}(e^x-e^{w_1})\dots(e^x-e^{w_{\ell+r}})\,,\\
    \begin{split}
        \Frst(\bm{w})&=-\tfrac{1}{2}\sum_{1\leq a< b\leq \ell+r}\bigl[\Li_3(e^{w_a-w_b})+\Li_3(e^{w_b-w_a})\bigr]+\tfrac{1}{12}\bigl(3+\ell-r-\tfrac{2}{r}\bigr)\bigl(w_1^3+\dots+w_{\ell+r}^3\bigr)+\\
    &\quad+\tfrac{1}{4}\bigl(1-\tfrac{2}{r}\bigr)\sum_{1\leq a< b\leq\ell+r}w_aw_b(w_a+w_b)-\tfrac{1}{r}\sum_{1\leq a<b<c\leq \ell+r}w_aw_bw_c\,.
    \end{split}
\end{align}
In this coordinate system, the primary form is $\omega=-\dd{x}$ and the poles are at $\infty_0=\infty$ and $\infty_1=0$.

According to \cref{thm::main}, there exists a unique rank-one extension of such an almost-flat F-manifold with extended prepotential:
\begin{equation}
    \Omega(x,\bm{w})=\tfrac{r}{2}
    x^2-x(w_1+\dots+w_{\ell+r})+\sum_{a=1}^{\ell+r}\Li_2(e^{x-w_a})+\varpi(\bm{w})\,,
\end{equation}
for some function $\varpi$ to be fixed in such a way that $\Omega$ satisfies \cref{eq:conditionomega}. To this end, we again consider the singularities of the derivatives of $\Omega$:
\[
\begin{aligned}
    \Omega''&=r+\sum_{a=1}^{\ell+r}\frac{1}{e^{w_a-x}-1}\,,\qquad&\Omega'_a&=-1-\frac{1}{e^{w_a-x}-1}\,,\qquad& \Omega_{ab}&=\varpi_{ab}+\frac{\delta_{ab}}{e^{w_a-x}-1}\,.
\end{aligned}
\]
Clearly, $\Omega''$ has a simple pole at $w_1,\dots, w_{\ell+r}$, with residue $-1$, and an essential singularity at $\infty$, with residue $\ell+r$. On the other hand, $\Omega'_a$ has a simple pole at $w_a$, with residue $+1$, and an essential singularity at $\infty$, with residue $-1$. Keeping this in mind, we compute the residue at $\infty$ of both sides of \cref{eq:conditionomega}:
\[
\begin{aligned}
    -\Res_\infty\{g^{cd}\Frst_{abc}\Omega'_d\dd{x}\}&=\Res_\infty\bigl\{\bigl(\Omega''\Omega_{ab}-\Omega'_a\Omega'_b\bigr)\dd{x}\bigr\}\\
    \sum_d g^{cd}\Frst_{abc}&=(\ell+r)\,\varpi_{ab}+\Res_\infty\biggl\{\biggl(\frac{r\,\delta_{ab}}{e^{w_a-x}-1}-\frac{1}{e^{w_a-x}-1}-\frac{1}{e^{w_b-x}-1}\biggr)\dd{x}\biggr\}+\\
    &\quad+\Res_\infty\biggl\{\biggl(\frac{\delta_{ab}}{e^{w_a-x}-1}\sum_{c\neq a}\frac{1}{e^{w_c-x}-1}-\frac{1-\delta_{ab}}{(e^{w_a-x}-1)(e^{w_b-x}-1)}\biggr)\dd{x}\biggr\}\\
    &=(\ell+r)\,\varpi_{ab}+r\,\delta_{ab}-2-\delta_{ab}(\ell+r-1)+(1-\delta_{ab})\\
    &=(\ell+r)\,\varpi_{ab}-\ell\,\delta_{ab}-1\,.
\end{aligned}
\]

The intersection form in the $w$-coordinates is easily computed using the residue formula. In particular, $g^{ab}=\tfrac{1}{\ell}-\delta_{ab}$. Therefore, independently of $c=1,\dots, \ell+r$, $\sum_dg^{cd}=\tfrac{\ell+r}{\ell}-1=\tfrac{r}{\ell}$. As a consequence, the previous equation becomes:
\[
(\ell+r)\,\varpi_{ab}-\ell\,\delta_{ab}-1=\tfrac{r}{\ell}\sum_c\Frst_{abc}\,.
\]

Finally, the third derivatives of $\Frst$ are as follows \cite{RileyPhD}:
\[
\begin{aligned}
    \Frst_{abc}&=-\tfrac{1}{r}\,,&&&a&\neq b \neq c\,;\\
    \Frst_{aab}&=-\tfrac{1}{r}-\frac{1}{e^{w_a-w_b}-1}\,,&&& a&\neq b\,;\\
    \Frst_{aaa}&=1+\ell-\tfrac{1}{r}+\sum_{b\neq a}\frac{1}{e^{w_a-w_b}-1}\,.
\end{aligned}
\]
Therefore, an easy computation gives:
\[
\sum_c\Frst_{abc}=-\tfrac{\ell}{r}+\ell\,\delta_{ab}\,.
\]
Plugging this back into the relations coming from comparing the residues at $\infty$ gives the simple system of PDEs $\varpi_{ab}=\delta_{ab}$ for $\varpi$. Hence, we can, without loss of generality, take:
\begin{equation}
    \varpi(\bm{w})=\tfrac12\bigl(w_1^2\dots+w_{\ell+r}^2\bigr)\,.
\end{equation}
The extended prepotential is, therefore:
\begin{equation}
    \Omega(x,\bm{w})=\tfrac12\bigl(r\,x^2+w_1^2+\dots+w_{\ell+r}^2\bigr)-x\,(w_1+\dots+w_{\ell+r})+\sum_{a=1}^{\ell+r}\Li_2(e^{x-w_a})\,.
\end{equation}

As for the eventual identity:
\begin{prop}
    The eventual identity $\EE_{\mathrm{DZ}}$ on $\UC_{0\,;\,(\ell-1,r-1)}$ that gives the generalised Dubrovin duality at the universal-curve level is:
    \[
    \EE_{\mathrm{DZ}}=E_{\mathrm{DZ}}+\tfrac{1}{\ell}\,\partial_x\,.
    \]

\end{prop}
\begin{proof}
It is easy to check that the Euler vector field $E$ in \cite[Eq. 3.3]{dubrovin_zhang_1998} satisfies $\Lie_E\lambda=\lambda-\tfrac{1}{\ell}\lambda'$ when applied to the corresponding superpotential. The statement then follows from \cref{prop:eventualidentityhere}. 
This is in accordance with the fact that here $d=1$ \cite[Theorem 2.1]{dubrovin_zhang_1998}.
\end{proof}


\subsection{Orbit spaces of type-$A$ Ma-Zuo extended affine Weyl groups}
We consider, for $\ell, r, k\in\ZZ_{\geq 1}$, the Frobenius manifold structure on $\Hw_{0\,;\,(\ell-1,r-1,k-1)}$ coming from the choice of the same primary differential $\omega$ as in the previous example. This is isomorphic to the structure defined on the space of orbits of the extended affine-Weyl group $\widetilde{W}^{(r,r+k)}(A_{\ell+r+k-1})$ in the work \cite{ma2023frobenius}. Flat coordinates for the intersection form are found in the same paper, whereas the corresponding almost-dual prepotential can be worked out by slightly adjusting the results in \cite[Chapter 4]{RileyPhD}:
\begin{align}
    \lambda(x)&=\frac{e^{-rx}}{(e^x-e^{u})^k}\,\bigl(e^x-e^{w_1}\bigr)\dots\bigl(e^x-e^{w_{\ell+r+k}}\bigr)\,,\\
\begin{split}
    \Frst(\bm{w},u)&=-\tfrac12\sum_{a<b}\bigl[\Li_3\bigl(e^{w_a-w_b}\bigr)+\Li_3\bigl(e^{w_b-w_a}\bigr)\bigr]+\tfrac{k}{2}\sum_{a=1}^{\ell+r+k}\bigl[\Li_3\bigl(e^{u-w_a}\bigr)+\Li_3\bigl(e^{w_a-u}\bigr)\bigr]+\\
    &\quad +\tfrac{1}{12}k\bigl(3k+r-\ell+\tfrac{2}{r}k^2\bigr)\,u^3+\tfrac{1}{12}\bigl(3-r+\ell-\tfrac{2}{r}\bigr)(w_1^3+\dots+w_{\ell+r+k}^3\bigr)+\\
    &\quad+\tfrac14\bigl(1-\tfrac{2}{r}\bigr)\sum_{ a<b}w_aw_b(w_a+w_b)-\tfrac{1}{4}k\bigl(1+\tfrac{2}{r}k\bigr)\,u^2\,(w_1+\dots+w_{\ell+r+k})+\\
    &\quad -\tfrac14 k\bigl(1-\tfrac{2}{r}\bigr)\,u\,\bigl(w_1^2+\dots+w_{\ell+r+k}^2\bigr)-\tfrac{1}{r}\sum_{a<b<c}w_aw_bw_c+\tfrac{k}{r}\,u\sum_{a<b}w_aw_b\,.
\end{split}    
\end{align}
The primary form in these coordinates is $\omega=-\dd{x}$, and the poles are at $\infty,0$ and $u\in\CC^*$.

The extended prepotential will, therefore, be of the following form:
\[
\Omega(x,\bm{w},u)=\tfrac12 rx^2-x\,\bigl(w_1+\dots+w_{\ell+r+k}-k\,u\bigr)-k\,\Li_2\bigl(e^{x-u}\bigr)+\sum_{a=1}^{\ell+r+k}\Li_2\bigl(e^{x-w_a}\bigr)+\varpi(\bm{w},u)\,.
\]
We fix $\varpi$ with computations similar to before. In particular, we use Latin indices for the $\bm{w}$-coordinates and Greek indices to also include the $u$-coordinate. For the structure constants, one has, again from \cite{RileyPhD}:
\[
\begin{aligned}
    \Frst_{abc}&=-\tfrac1r\,,&&& \Frst_{abu}&=\tfrac{k}{r}\,, &&& a\neq b\neq c\,;\\
    \Frst_{aab}&=-\tfrac1r-\frac{1}{e^{w_a-w_b}-1}\,,&&&&&&& a\neq b\,;\\ \Frst_{aau}&=\tfrac{k}{r}+\frac{k}{e^{w_a-u}-1}\,, &&&\Frst_{uua}&=-\tfrac{k^2}{r}+\frac{k}{e^{u-w_a}-1}\,,\\
    \Frst_{aaa}&=\ell+1-\tfrac1r +\sum_{b\neq a}\frac{1}{e^{w_a-w_b}-1}-\frac{k}{e^{w_a-u}-1}\,,\\
    \Frst_{uuu}&=\tfrac{k^3}{r}-k(\ell-k)-k\sum_a\frac{1}{e^{u-w_a}-1}\,.
\end{aligned}
\]
    On the other hand, the intersection form is given by $g^{a\mu }=\tfrac{1}{\ell}-\delta_{a\mu }$, $g^{u\alpha }=\tfrac{1}{\ell}+\tfrac{1}{k}\delta_{\alpha u}\,$.
Finally, derivatives of $\Omega$ are:
\[
\begin{aligned}
    \Omega''&=r-\frac{k}{e^{u-x}-1}+\sum_a\frac{1}{e^{w_a-x}-1}\,,\\
    \Omega'_a&=-1-\frac{1}{e^{w_a-x}-1}\,,&&& \Omega'_u&=k\,\biggl(1+\frac{1}{e^{u-x}-1}\biggr)\,,\\
    \Omega_{ab}&=\varpi_{ab}+\frac{\delta_{ab}}{e^{w_a-x}-1}\,,&&& \Omega_{au}&=\varpi_{au}\,,&&&\Omega_{uu}&=\varpi_{uu}-\frac{k}{e^{u-x}-1}\,.
\end{aligned}
\]
We, then, compute the residue at the essential singularity of both sides of \cref{eq:conditionomega}. We start from the $ab$-components:
\[
\begin{aligned}
  -  \Res_\infty\bigl\{g^{\mu\nu}\Frst_{ab\mu}\Omega'_\nu\,\dd{x}\bigr\}&=\Res_\infty\bigl\{(\Omega''\Omega_{ab}-\Omega'_a\Omega'_b)\dd{x}\bigr\}\\
  \bigl(\sum_c g^{c\mu }-k\, g^{u\mu}\bigr)\,\Frst_{ab \mu}&=(\ell+r)\,\varpi_{ab}-\ell\, \delta_{ab}-1\\
  \tfrac{r}{\ell}\sum_\mu \Frst_{ab\mu}&=(\ell+r)\,\varpi_{ab}-\ell\, \delta_{ab}-1\\
  -1+r\,\delta_{ab}&=(\ell+r)\varpi_{ab}-\ell\,\delta_{ab}-1\,.
\end{aligned}
\]
Similarly:
\[
\begin{aligned}
    - \Res_\infty\bigl\{g^{\mu\nu}\Frst_{au\mu}\Omega'_\nu\,\dd{x}\bigr\}&=\Res_\infty\bigl\{(\Omega''\Omega_{au}-\Omega'_a\Omega'_u)\dd{x}\bigr\}\\
    \tfrac{r}{\ell}\sum_\mu \Frst_{au\mu}&=(\ell+r)\varpi_{au}+k\\
    k&=(\ell+r)\varpi_{au}+k\,.\\
    - \Res_\infty\bigl\{g^{\mu\nu}\Frst_{uu\mu}\Omega'_\nu\,\dd{x}\bigr\}&=\Res_\infty\bigl\{\bigl(\Omega''\Omega_{uu}-(\Omega'_u)^2\bigr)\dd{x}\bigr\}\\
    \tfrac{r}{\ell}\sum_\mu \Frst_{uu\mu}&=(\ell+r)\varpi_{uu}+k(\ell-k)\\
    -k(r+k)&=(\ell+r)\varpi_{uu}+k(\ell-k)\,.
\end{aligned}
\]
Hence, $\varpi=\tfrac12 (w_1^2+\dots+w_{\ell+r+k}^2-k\,u^2)$, and the extended prepotential is:
\begin{equation}
\begin{aligned}
        \Omega(x,\bm{w},u)&=\tfrac12\bigl(r\,x^2+w_1^2+\dots+w_{\ell+r+k}^2-k\,u^2\bigr)-x\,\bigl(w_1+\dots+w_{\ell+r+k}-k\,u\bigr)+\\&\quad -k\,\Li_2\bigl(e^{x-u}\bigr)+\sum_{a=1}^{\ell+r+k}\Li_2\bigl(e^{x-w_a}\bigr)\,.
\end{aligned}
\end{equation}
\begin{prop}\label{prop:eventualidentityMZ}
    The eventual identity $\EE_{\mathrm{MZ}}$ that gives Dubrovin almost-duality on the universal curve $\UC_{0\,;\,(\ell-1,r-1,k-1)}$ is given by:
    \[
    \EE_{\mathrm{MZ}}=E_{\mathrm{MZ}}+\tfrac{1}{\ell}\,\partial_x\,.
    \]
\end{prop}
\begin{proof}
    According to \cite{ma2023frobenius}, $\Lie_{E_{\mathrm{MZ}}}\lambda=\lambda-\tfrac{1}{\ell}\lambda'$. The statement then follows from \cref{prop:eventualidentityhere}. 
    Again, here we have $d=1$ from \cite[Theorem 3.2]{ma2023frobenius}.
\end{proof}

More generally, one can consider, for $n,m,k_1,\dots, k_m\in \ZZ_{\geq 1}$ with $n>r+\sum_{a=1}^m k_a$, the Dubrovin almost-dual structure to the genus-zero Hurwitz Frobenius manifold with superpotential and primary form respectively given by:
\begin{equation}
    \begin{aligned}
        \lambda(x)&:=e^{-rx}\,\frac{(e^x-e^{w_1})\dots(e^x-e^{w_n})}{(e^x-e^{u_1})^{k_1}\dots(e^x-e^{u_m})^{k_m}}\,,&&&
        \omega&:=-\dd{x}\,.
    \end{aligned}
\end{equation}
Then, $(\bm{w},\bm{u})$ are flat coordinates of its intersection form -- this is an obvious generalisation e.g. of \cite[Lemma 4.1]{me} -- and the corresponding almost-dual solution to the open WDVV equations can be found again by adjusting the result of \cite[Theorem 4.12]{RileyPhD}. This will admit a unique rank-one extension to the universal curve with extended prepotential:
\begin{equation}
\begin{aligned}
        \Omega(x,\bm{w},\bm{u})&=\tfrac12\bigl(r\,x^2+w_1^2+\dots+w_n^2-k_1\,u_1^2-\dots- k_m\,u_m^2\bigr)+\sum_{a=1}^n\Li_2\bigl(e^{x-w_a}\bigr)+\\
        &\quad-x\bigl(w_1+\dots+w_n-k_1\,u_1-\dots-k_m\,u_m\bigr)-\sum_{\mu=1}^mk_\mu\,\Li_2\bigl(e^{x-u_\mu}\bigr)\,.
\end{aligned}
\end{equation}
The eventual identity will here be the same as in \cref{prop:eventualidentityMZ}, with $\ell=n-r-\sum_{a=1}^mk_a$.
However, the corresponding orbit space description is missing at present whenever $m>1$. Partial answers were provided in \cite{me}.


\subsection{Orbit spaces of Jacobi groups of type-$A$}\label{ssec:Jacobigroups}
    We consider, for fixed $\ell\in\ZZ_{\geq 1}$, the genus-one Hurwitz space $\Hw_{1\,;\,\ell}$, and we endow it with the normalised holomorphic differential $\omega$ on the torus. This is a primary differential of the fifth type. The corresponding Hurwitz Frobenius manifold is known to be isomorphic to the structure on the orbit space of the Jacobi group $\mathbf{J}(A_{\ell})\,$, as described in \cites{bertolaJacobiI,bertolaJacobiII}.

    Flat coordinates $w_1,\dots, w_{\ell},u,\tau\in\CC^{\ell+1}\times\mathbb{H}$ for its intersection form, as well as the corresponding almost-dual prepotential, have been computed in the work \cite{rileygenusone}:
\begin{align}
    \lambda(x)&=\frac{e^{i2\pi u}}{\vartheta_1(x\,;\,\tau)^{\ell+1}}\vartheta_1(x+\overline{w}\,;\,\tau)\,\vartheta_1(x-w_1\,;\,\tau)\dots\,\vartheta_1(x-w_\ell\,;\,\tau)\,,\label{eq:superpotentialgenus1}\\
    \begin{split}
        \Frst(\bm{w},u,\tau)&=i2\pi\, u\bigl[\tfrac12 \pi^2\,\tau u-\overline{w}^2-w_1^2-\dots-w_\ell^2\bigr]+\\
        &\quad -\tfrac18\sum_{\bm{\alpha}\in\Delta(A_\ell)}\Phi_3\bigl(\inp{\bm{\alpha}}{-\overline{w}\,\bm{e}_0+\bm{w}}\,;\,\tau\bigr)\,+\\
        &\quad+\tfrac14 (\ell+1)\bigl[\Phi_3(-\overline{w}\,;\,\tau)+\Phi_3(w_1\,;\,\tau)+\dots+\Phi_3(w_\ell\,;\,\tau)\bigr]\,.
    \end{split}
\end{align}
The coordinate on the complex torus $\CC\,/\,(\ZZ+\tau\,\ZZ)$ is chosen so that $\infty_0=0$ and $\omega=\dd{x}$. Furthermore, in the previous equations, $\overline{w}:=w_1+\dots+w_\ell$, $\vartheta_1$ is Jacobi's first theta function:
\begin{equation}\label{eq:theta1}
\begin{aligned}
        \vartheta_1(x\,;\,\tau)&:=-i\sum_{n\in\ZZ}(-1)^n\,e^{i\pi(n+1/2)^2\,\tau}\,e^{i\pi (2n+1)\,x}\\
        &\equiv i(e^{-i\pi x}-e^{i\pi x})e^{i\pi\tau\,/\,4}\prod_{n\geq 1}(1-e^{i2 n\pi\tau})\bigl(1-e^{i2  \pi\,(n\tau+x) }\bigr)\bigl(1-e^{i2\pi (n\tau-x)}\bigr)\,,
\end{aligned}
\end{equation}
$\Delta(A_\ell)$ denotes the root system of $A_\ell$ and $\Phi_k(w\,;\,\tau)$ is the following combination of elliptic polylogarithms $\ELi_k$:
\begin{equation}
    \Phi_k(w\,;\,\tau):=\ELi_k(w\,;\, \tau)-\ELi_k(0\,;\,\tau)\,,\qquad k\in \ZZ\,.
\end{equation}
See \cites{ellipticpoly1,ellipticpoly2,ianellipticidentities} for further details. We are going to discuss a few identities involving elliptic polylogarithms in \cref{app:ellipticpolylog}. In particular, using \cref{eq:logtheta1}, it is easy to see that:
\[
(\log\lambda)(x)=i2\pi u+\tfrac{i}{2\pi}\biggl[\ELi_2'(x+\overline{w}\,;\,\tau)+\sum_{a=1}^\ell\ELi_2'(x-w_a\,;\,\tau)-(\ell+1)\ELi_2'(x\,;\,\tau)\biggr]\,.
\]
Therefore, there exists a unique rank-one extension of the almost-flat Riemannian F-manifold almost-dual to the Jacobi Frobenius manifold $\Hw_{1\,;\,\ell}$, and its extended prepotential is:
\begin{equation}
\begin{split}
        \Omega(x,\bm{w},u,\tau)&=i2\pi\,ux-\tfrac{i}{2\pi}(\ell+1)\,\ELi_2(x\,;\,\tau)+\tfrac{i}{2\pi}\ELi_2(x+\overline{w}\,;\,\tau)+\\&\quad+\tfrac{i}{2\pi}\sum_{a=1}^\ell\ELi_2(x-w_a\,;\,\tau)+\varpi(\bm{w},u,\tau)\,.
\end{split}
\end{equation}
In order to fix $\varpi$, we want $\Omega$ to satisfy the open WDVV equations \ref{eq:conditionomega}. In order to do that, we consider its second derivatives:
\[
\begin{aligned}
    \Omega''&=\frac{\vartheta_1'(x+\overline{w}\,;\,\tau)}{\vartheta_1(x+\overline{w}\,;\,\tau)}+\sum_{a=1}^\ell\frac{\vartheta_1'(x-w_a\,;\,\tau)}{\vartheta_1(x-w_a\,;\,\tau)}-(\ell+1)\frac{\vartheta_1'(x\,;\,\tau)}{\vartheta_1(x\,;\,\tau)}\,,\\
    \Omega'_a&=\frac{\vartheta_1'(x+\overline{w}\,;\,\tau)}{\vartheta_1(x+\overline{w}\,;\,\tau)}-\frac{\vartheta_1'(x-w_a\,;\,\tau)}{\vartheta_1(x-w_a\,;\,\tau)}\,,\\
    \Omega'_\tau&=\frac{1}{i4\pi}\biggl[\frac{\vartheta_1''(x+\overline{w}\,;\,\tau)}{\vartheta_1(x+\overline{w}\,;\,\tau)}+\sum_{a=1}^\ell\frac{\vartheta_1''(x-w_a\,;\,\tau)}{\vartheta_1(x-w_a\,;\,\tau)}-(\ell+1)\frac{\vartheta_1''(x\,;\,\tau)}{\vartheta_1(x\,;\,\tau)}\biggr]\,,\\
    \Omega'_u&=i2\pi\,,\\
    \Omega_{ab}&=\frac{\vartheta_1'(x+\overline{w}\,;\,\tau)}{\vartheta_1(x+\overline{w}\,;\,\tau)}+\delta_{ab}\frac{\vartheta_1'(x-w_a\,;\,\tau)}{\vartheta_1(x-w_a\,;\,\tau)}+\pdv[2]{\varpi}{w_a}{w_b}\,,& a,b&=1,\dots, \ell\,,\\
    \Omega_{a\tau}&=\frac{1}{i4\pi}\frac{\vartheta_1''(x+\overline{w}\,;\,\tau)}{\vartheta_1(x+\overline{w}\,;\,\tau)}-\frac{1}{i4\pi}\frac{\vartheta_1''(x-w_a\,;\,\tau)}{\vartheta_1(x-w_a\,;\,\tau)}+\pdv[2]{\varpi}{w_a}{\tau}\,, & a&=1,\dots, \ell\,,\\
        \Omega_{au}&=\varpi_{au}\,,& a&=1,\dots, \ell\,,\\
\Omega_{\tau u}&=\varpi_{\tau u}\,,\\
    \Omega_{uu}&=\varpi_{uu}\,,
\end{aligned}
\]
where $'$ denotes the derivative of $\vartheta_1$ with respect to its first argument and we used the heat equation $i4\pi \,\partial_\tau\vartheta_1=\vartheta''_1$ to replace derivatives with respect to $\tau$ with second derivatives with respect to $x$. We are going to deal with the second derivative $\Omega_{\tau\tau}$ separately, since it is more difficult to work out explicitly, but we shall see that there is actually no need to do so.

The intersection form in these coordinates is \cite{RileyPhD}:
\[
g^\sharp= G\,\oplus \mqty[0 & \pi^{-2}\\
\pi^{-2} & 0]\,,
\]
with $G_{ab}=\tfrac{1}{\ell+1}-\delta_{ab}$ for $a,b=1,\dots, \ell\,$.
Using the simple expressions for the third derivatives of $\Frst$ with respect to $u$ at least once \cite[Lemma 5.14]{RileyPhD}, it is easy to see that $\varpi$ does not depend on $u$, for the open WDVV equations where at least one index is equal to $u$ are automatically satisfied.

Next, we consider the $\tau\tau$ equation:
\[
\begin{aligned}
    \Omega''\Omega_{\tau\tau}&=(\Omega'_\tau)^2-\sum_{a,b}G_{ab}\Frst_{\tau\tau a}\,\Omega'_b-i\tfrac{2}{\pi}\Frst_{\tau\tau\tau}\,.
\end{aligned}
\]
Using the following Fourier-like series expansion of $\vartheta_1'/\vartheta_1$ around zero:
\[
\frac{\vartheta_1'(x\,;\,\tau)}{\vartheta_1(x\,;\,\tau)}=\cot x-4\sum_{n\geq 1}\frac{1}{1-e^{-i2\pi n \tau}}\sin (2nx)\,,
\]
one can show that $\Omega_\tau'$ is holomorphic on the torus. Therefore, the right-hand side of the previous equation only has poles at the zeros of $\lambda$, and they are all simple. On the other hand, $\Omega''$ additionally has a simple pole at zero. Therefore, in order for the identity to hold true, it needs to be $\Res_0\{\Omega''\Omega_{\tau\tau}\dd{x}\}=0$. Let us denote by $\Omega_0:=\Omega-\varpi$. Then, one has:
\[
\begin{aligned}
    \Omega_0&=i2\pi \,ux-\tfrac{i}{2\pi}(\ell+1)\,\ELi_2(x\,;\,\tau)+\tfrac{i}{2\pi}\ELi_2(x+\overline{w}\,;\,\tau)+\tfrac{i}{2\pi}\sum_{a=1}^\ell\ELi_2(x-w_a\,;\,\tau)\\
    &\equiv i2\pi \,ux+\tfrac{i}{2\pi}\sum_{n=0}^{\ell+1}k_n\ELi_2(x-z_n\,;\,\tau)\,,
\end{aligned}
\]
where $z_0=0$, $z_a=w_a$ for $a=1,\dots, \ell$ and $z_{\ell+1}=-\overline{w}$, with the obvious identifications of the coefficients $k_0,\dots, k_{\ell+1}$. In particular, these coefficients sum up to zero. As a consequence, according to \cref{eq:logtheta1} and the heat equation, we have:
\[
\begin{aligned}
    \tfrac{i}{2\pi}\sum_{n=0}^{\ell+1}k_n\,\partial_\tau^2\ELi_2'(x-z_n\,;\,\tau)&=\frac{1}{(i4\pi)^2}\sum_{n=0}^{\ell+1}k_n\biggl[\frac{\vartheta_1^{(4)}(x-z_n\,;\,\tau)}{\vartheta_1(x-z_n\,;\,\tau)}-\biggl(\frac{\vartheta_1''(x-z_n\,;\,\tau)}{\vartheta_1(x-z_n\,;\,\tau)}\biggr)^2\biggr]\,.
\end{aligned}
\]

Now, the only summand in the right-hand side that might be singular at zero is the zeroth one. However, it easy to see, using again the aforementioned Fourier-like series expansion for $\vartheta_1'/\vartheta_1$, that $\vartheta^{(4)}_1/\vartheta_1$ and $\vartheta_1''/\vartheta_1$ are both holomorphic at zero. In particular, it is, for some $\alpha,\beta\in\CC$:
\[
\begin{aligned}
    \frac{\vartheta_1^{(4)}}{\vartheta_1}-\biggl(\frac{\vartheta_1''}{\vartheta_1}\biggr)^2&=\biggl(\frac{\vartheta_1'}{\vartheta_1}\biggr)'''+4\,\frac{\vartheta'_1}{\vartheta_1}\biggl(\frac{\vartheta_1'}{\vartheta_1}\biggr)''+4\,\biggl(\frac{\vartheta_1'}{\vartheta_1}\biggr)^2\biggl(\frac{\vartheta_1'}{\vartheta_1}\biggr)'+2\,\biggl[\biggl(\frac{\vartheta_1'}{\vartheta_1}\biggr)'\biggr]^2=\alpha+\beta x^2+\order{x^4}\,.
\end{aligned}
\]
Hence, $(\Omega_0)_{\tau\tau}$ will also be holomorphic at zero. The residue at zero of $\Omega''\Omega_{\tau\tau}\omega$ is, therefore, simply given by:
\[
\begin{aligned}
0&=\Res_0\bigl\{\Omega''\Omega_{\tau\tau}\dd{x}\bigr\}=-(\ell+1)\bigl[\Omega_0(0)_{\tau\tau}+\varpi_{\tau\tau}\bigr]\,.
\end{aligned}
\]
As a consequence, it needs to be:
\[
\begin{aligned}
    \varpi(\bm{w},\tau)&=-\tfrac{i}{2\pi}\sum_{n=0}^{\ell+1}k_n\,\ELi_2(-z_n\,;\,\tau)+\tau\, f(\bm{w})+g(\bm{w})\\
    &=\tfrac{i}{2\pi}(\ell+1)\,\ELi_2(0\,;\,\tau)-\tfrac{i}{2\pi}\ELi_2(\overline{w}\,;\,\tau)-\tfrac{i}{2\pi}\sum_{a=1}^\ell\ELi_2(-w_a\,;\,\tau)+\tau\, f(\bm{w})+g(\bm{w})\,.
\end{aligned}
\]
Looking, similarly, at the residue at zero of the open WDVV equations for $a,b=1,\dots, \ell$ gives $f=g=0$. Hence, the extended prepotential is:
\begin{equation}
\begin{aligned}
        \Omega(x,\bm{w},u,\tau)&=i2\pi \,ux-\tfrac{i}{2\pi}(\ell+1)\bigl[\ELi_2(x\,;\,\tau)-\ELi_2(0\,;\,\tau)\bigr]+\tfrac{i}{2\pi}\bigl[\ELi_2(x+\overline{w}\,;\,\tau)-\ELi_2(\overline{w}\,;\,\tau)\bigr]+\\&\quad+\tfrac{i}{2\pi}\sum_{a=1}^\ell\bigl[\ELi_2(x-w_a\,;\,\tau)-\ELi_2(-w_a\,;\,\tau)\bigr]\,.
\end{aligned}
\end{equation}
\begin{oss}
    We recall that $\vartheta_1$ and its derivatives satisfy the following properties under period shifts \cite{thetavocabulary}:
\[
\begin{aligned}
    \vartheta_1(x+1\,;\,\tau)&=-\vartheta_1(x\,;\,\tau)\,,&&& \vartheta_1(x+\tau\,;\,\tau)&=-e^{-i\pi(2x+\tau)}\vartheta_1(x\,;\,\tau)\,,\\
     \vartheta_1^{(n)}(x+1\,;\,\tau)&=-\vartheta_1^{(n)}(x\,;\,\tau)\,,&&& \vartheta_1^{(n)}(x+\tau\,;\,\tau)&=-e^{-i\pi(2x+\tau)}\vartheta_1^{(n)}(x\,;\,\tau)+\,&&& n\in\ZZ_{\geq 1}\,.\\
     &&&&&\quad+(i2\pi)^n\,\vartheta_1(x+\tau\,;\,\tau)\,,\end{aligned}
\]
It follows that any linear combination of translates of $\vartheta_1^{(n)}/\vartheta_1$ for fixed $n$ with coefficients that sum up to zero is an elliptic function on the corresponding complex torus. This is the case for all the derivatives of $\Omega$ as listed above except for $\Omega_{ab}$. However, this does not affect our argument as $\Omega_{ab}$ is still well-defined on a cylinder -- and so is any other derivative -- but no computation relied on the fact that the derivative be defined on a compact surface.
\end{oss}
\begin{prop}
    The eventual identity $\EE_{\mathbf{J}(A_\ell)}$ on the universal curve $\UC_{1\,;\,\ell}$ that realises the duality of the extensions as almost-flat F-manifolds is given by:
    \[
    \EE_{\mathbf{J}(A_\ell)}=E_{\mathbf{J}(A_\ell)}\,.
    \]
\end{prop}
\begin{proof}
The Euler vector field is here given by $E=\tfrac{1}{i2\pi}\pdv{u}$, \cite[Eq. 3.3]{almediaJacobi}. Hence, it is straightforward to check that $\Lie_E\lambda=\lambda$. These structures also have charge one \cite{bertolaJacobiII}.
\end{proof}


\section{Conclusion and outlook}
In the present paper, we have argued that the existence of a rank-one extension of an F-manifold with compatible flat metric connection is equivalent to the existence of a Landau-Ginzburg model. This clearly applies to the case of a Hurwitz Frobenius manifold and, importantly, to its Dubrovin dual, generalising the work \cite{dealmeida2025openhurwitzflatf}. In this case, we have shown that the two extensions produce a bi-flat F-manifold on the universal curve. Firstly, this gives a natural, concrete way to construct bi-flat F-manifolds using the B-model of a Frobenius manifold. Secondly, it is a relatively simple method to construct solutions to the open associativity equations. Such constructions are rare in the literature. Some results and assumptions arising from open Gromow-Witten theory -- the setting in which open WDVV equations first appeared -- are recovered from a geometrical point of view.\\

We conclude this work with some open questions and directions for future research on this topic.
\begin{enumerate}
\item Computing the dual open WDVV solutions associated to Landau-Ginzburg models of Saito and Dubrovin-Zhang Frobenius manifolds of type $BCD$, as described in \cref{ss:typeDE} and \cite{DSZZ} respectively, would just be an easy exercise adjusting the computations in \cref{subs:SaitotypeA,ss:DZtypeA}. The only difficulty would lie in a parametrisation of the superpotentials by flat coordinates of the intersection form. As anticipated, we postpone this to the future publication \cite{opencech}, where we will describe a root-theoretic way to construct solutions to the open WDVV equations of dual type. This method might even enable one to compute the dual solutions for the exceptional Weyl groups, where the Landau-Ginzburg superpotential is a family of meromorphic functions on curves of positive genus.

    \item Proving \cref{conj:ADE}. In particular, if true, it would produce polynomial solutions to the open WDVV equations for exceptional Weyl groups that would be reasonably easy to construct. This would be a remarkable result, as such reflection groups are generally the hardest ones to deal with, and the associated solutions are usually very involved and any explicit expression requires considerable effort. Furthermore, such conjecture leads to natural generalisations. For instance, it is known that the Landau-Ginzburg superpotentials for the Dubrovin-Zhang Frobenius manifolds on orbit spaces of simply-laced extended affine-Weyl groups -- \cites{dubrovin_zhang_1998,DSZZ} -- can be seen as restrictions of miniversal deformations of affine cusp polynomials in $\CC^3$ to the intersection of two hypersurfaces \cites{THLaurentpoly,Rossimirror,IST19}. In particular, the tangent bundle is again isomorphic to the Jacobian ring of the miniversal deformation via the Kodaira-Spencer map, and the Euler vector field is the preimage of the unfolding. One could, therefore, ask whether there exist polynomial, weighted-homogeneous, auxiliary rank-three extensions that restrict to the known rank-one extension on the sub-bundle described by embedding the curve that gives the superpotential as a restriction of the three-variable polynomial in each fibre. In more generality, these are special cases of the construction of superpotentials in FJRW theory, therefore the same question can be asked in this setting.
    \item Bi-flat F-manifolds are known to be in one-to-one correspondence with weighted-homogeneous, non-symmetric solutions to the Darboux-Egoroff system \cite{AL13}. Given our construction of a bi-flat F-manifold on the universal curve, one could then study the corresponding solution to the Daroboux-Egoroff system, using the results of \cite{alcoladophd}. In particular, the connection potential in the fibre direction -- which determines the additional rotation coefficients -- will in this case be given by a function of the superpotential.
    \item Generalised notions of primary differentials have been considered in the literature, also in the case of moduli spaces of ramified coverings of $\RS$ with prescribed ramification over two points (the so-called \emph{double Hurwitz spaces}), \cites{Shr05,Shr05a,RomanoDoubleHurwitz}. In particular, they can produce structures with non-flat unities and/or without an Euler vector field. Nevertheless, in view of \cref{thm:extensiongeneral}, they still extend to almost-flat F-manifolds on the universal curve, and so do their Dubrovin duals, when they exist. In such cases, all these are bona fide \emph{almost}-flat F-manifolds, i.e. the identity is not flat. A more detailed description of what happens to the extension -- and therefore to the associated open WDVV solution -- in these cases, in particular in relation to the  explicit characterisation of such differentials given in \cite[Lemma 1 and Proposition 1]{RomanoDoubleHurwitz}, is an interesting direction for future research.
    \item From the A-model point of view, Frobenius manifolds correspond to the genus-zero part of Gromow-Witten theory. The higher-genus information can be recovered via an algorithmic procedure known as \emph{topological recursion} (TR) \cite{EO:2009}. In particular, to a solution to the closed WDVV equations one can associate a so-called \emph{descendent potential}, satisfying some \emph{topological recursion relations} (TRR). There is an open Gromow-Witten theory generalisation of this construction, providing \emph{open} TRR to associate an \emph{open} descendent potential to an open WDVV solution \cites{PSTInttheory,BCT1,buryakopen,BuryakArsingularity}. It is shown in \cite{BBvirasoro} that weighted-homogeneous solutions to the open WDVV equations -- in the sense of \cref{rem:quasihomogOmega} -- give rise to open descendent potentials satisfying open Virasoro constraints, which are the analogue of the ordinary Virasoro constraints \cite{DZ99} to the open WDVV equations setting. Since we now have a natural way to construct a weighted-homogeneous solutions to the open WDVV equations associated to any Hurwitz Frobenius manifold, then one could try and apply the open TRR formalism to these cases, thus providing results in open Gromow-Witten theory with a higher-genus target. Furthermore, given the geometrical understanding of the weighted-homogeneity condition for open WDVV solutions that appear to be meaningful in open Gromow-Witten theory, one could investigate the meaning of this condition in the enumerative geometry setting.
    \item In \cite{Ma25}, the open WDVV solutions for simple surface singularities of types $A$ and $D$ were constructed by embedding the superpotentials in an infinite-dimensional Frobenius manifold whose points are pairs of meromorphic functions respectively defined on the union of disjoint closed disks in $\RS$ and its complement. This approach allows for a direct study of the principal hierarchy of the corresponding flat F-manifold as a reduction of the one on the infinite dimensional counterpart, which had been studied in \cite{MWZ24}. An interesting feature of this method is that it predicts two open WDVV solutions in the infinite-dimensional setting, which happen to coincide on the finite-dimensional submanifolds considered. 
    On the other hand, such infinite-dimensional Frobenius manifolds seem to have been constructed only in the rational, genus zero case, and do not mention the dual structure. However, our method and the one in \cite{dealmeida2025openhurwitzflatf} predict the existence of the open WDVV solutions in every genus -- in fact, examples in genus one are provided. An intriguing direction for future research would, therefore, be to consider the dual structure in the infinite-dimensional setting, study the (presumably two) associated dual open WDVV solutions and relate them to the ones in \cite{Ma25}, and possibly to their finite-dimensional reductions.
\end{enumerate}
We hope to address these problems in future work.

\appendix
\section{Elliptic polylogarithm identities}\label{app:ellipticpolylog}
The aim of this appendix is only to fix the notation and discuss some identities we used in the computations leading to the open WDVV solution in \cref{ssec:Jacobigroups}. For details about the analytic theory of elliptic polylogarithms, the reader can refer to \cite{ellipticpoly1,ellipticpoly2,ianellipticidentities}. In particular, our definition differs slightly from the ones in those works mainly for computational and notational convenience. We define the $n$-th elliptic polylogarithm to be:
\begin{equation}
    \ELi_n(u\,;\,\tau):=\sum_{k\geq 0}\Li_n\bigl(e^{i2\pi( u+k\tau)}\bigr)+(-1)^{n-1}\sum_{k\geq 1}\Li_n\bigl(e^{-i2\pi (u-k\tau)}\bigr)-\chi_n(u\,;\,\tau)\,,\qquad n\in\ZZ\,,
\end{equation}
where:
\[
\chi_n(u\,;\,\tau):=(i2\pi)^n\sum_{k=0}^n\frac{B_{k+1}}{(n-k)!(k+1)!}u^{n-k}\,\tau^k+(-1)^{n-1}\frac{B_n}{2n!}\,,
\]
$B_k$ being the $k$-th Bernoulli number.

In particular, elliptic polylogarithms naturally appear when dealing with logarithms of $\vartheta_1$. For the sake of convenience, we rewrite \cref{eq:theta1} using Dedekind eta function:
\[
\vartheta_1(x\,;\,\tau)=i\,\eta(\tau)\, e^{-i\pi(x-\tau\,/\,6)}\bigl(1-e^{i2\pi x}\bigr)\prod_{n\geq 1}\bigl(1-e^{i2\pi (n\tau+x)}\bigr)\bigl(1-e^{i2\pi (n\tau-x)}\bigr)\,.
\]
As a consequence:
\[
\begin{aligned}
    \log\vartheta_1(x\,;\,\tau)&=i\tfrac{\pi}{2}-i\pi\bigl(x-\tfrac16\tau\bigr)+\log\eta(\tau)+\sum_{n\geq 0}\log(1-e^{i2\pi (x+n\tau)})+\sum_{n\geq 1}\log(1-e^{-i2\pi (x-n\tau)})\,\\
    &=\tfrac14+i\tfrac{\pi}{2}+\log\eta(\tau)-\ELi_1(x\,;\,\tau)\,.
\end{aligned}
\]
Now, clearly, the elliptic polylogarithms satisfy, under differentiation, properties similar to the ones of the standard Euler polylogarithms:
\[
\begin{aligned}
    \pdv{\ELi_n}{u}&=i2\pi \sum_{n\geq 0}\Li_{n-1}\bigl(e^{i2\pi(u+n\tau)}\bigr)-i2\pi(-1)^{n-1}\sum_{n\geq 1}\Li_{n-1}\bigl(e^{-i2\pi(u-n\tau)}\bigr)-\pdv{\chi_n}{u}\\
    &=i2\pi\, \ELi_{n-1}(u\,;\,\tau)+\tfrac{\pi}{(n-1)!}i^{2n+1} B_{n-1}\,.
\end{aligned}
\]
As a consequence:
\begin{align}\label{eq:logtheta1}
     \log\vartheta_1(x\,;\,\tau)&=i\tfrac{\pi}{2}+\log\eta(\tau)+\tfrac{i}{2\pi}\ELi'_2(x\,;\,\tau)\,.
\end{align}
In particular, since the additive constant does not depend on $x$, any linear combination of translates of $\log\vartheta_1$ with coefficients that sum up to zero contains only the polylogarithmic term:
\[
\sum_{n=1}^Nk_n\,\log\vartheta_1(x-z_n\,;\,\tau)=\tfrac{i}{2\pi}\sum_{n=1}^Nk_n\,\ELi'_2(x-z_n\,;\,\tau)\,,
\]
whenever $k_1+\dots+k_N=0\,$.

\phantomsection
\addcontentsline{toc}{section}{References}
\bibliography{biblio} 
\end{document}